  \newcommand{\Acts}[1][A]{\ensuremath{\mathalpha{#1}}}
  \newcommand{\act}[1]{\ensuremath{\mathalpha{#1}}}
    \newcommand{\acta}[1][]{\ensuremath{\act{a_{#1}}}}
    \newcommand{\actb}[1][]{\ensuremath{\act{b_{#1}}}}
  \newcommand{\sact}{\act{\tau}}
  \newcommand{\silent}[1][\tau]{\ensuremath{\mathalpha{#1}}}
  \newcommand{\States}[1][S]{\ensuremath{\mathalpha{#1}}}
  \newcommand{\state}[1]{\ensuremath{\mathalpha{#1}}}
    \newcommand{\states}[1][]{\state{s_{#1}}}
    \newcommand{\statet}[1][]{\state{t_{#1}}}
    \newcommand{\stateu}[1][]{\state{u_{#1}}}
  \newcommand{\statemap}[1][\sigma]{\ensuremath{\mathalpha{#1}}}
  \newcommand{\brelsym}[1][\mathcal{R}]{\ensuremath{#1}}
  \newcommand{\brel}[1][{\brelsym}]{\ensuremath{\mathrel{#1}}}
   \newdimen\boxwdplusemdimen
   \def\arrow#1{{
     \boxwdplusemdimen=1em%
     \setbox0=\hbox{$\scriptstyle#1$}%
     \advance\boxwdplusemdimen by \wd0\relax%
     \ifdim\boxwdplusemdimen<16.11119pt%
       \boxwdplusemdimen=16.11119pt%
     \fi%
     \buildrel{#1}\over%
       {\setbox1=\hbox to \boxwdplusemdimen{\rightarrowfill}%
     \ht1=0.3em\relax\box1}%
   }}
   \def\twoheadrightarrowfill{$\m@th\smash-\mkern-7mu%
     \cleaders\hbox{$\mkern-2mu\smash-\mkern-2mu$}\hfill
     \mkern-7mu\mathord\twoheadrightarrow$}
   \def\darrow#1{{
     \boxwdplusemdimen=1em%
     \setbox0=\hbox{$\scriptstyle#1$}%
     \advance\boxwdplusemdimen by \wd0\relax%
     \ifdim\boxwdplusemdimen<16.11119pt%
       \boxwdplusemdimen=16.11119pt%
     \fi%
     \buildrel{#1}\over%
       {\setbox1=\hbox to \boxwdplusemdimen{\twoheadrightarrowfill}%
     \ht1=0.3em\relax\box1}%
   }}
   \def\plarrow#1{{
     \boxwdplusemdimen=1em%
     \setbox0=\hbox{$\scriptstyle#1$}%
     \advance\boxwdplusemdimen by \wd0\relax%
     \ifdim\boxwdplusemdimen<16.11119pt%
       \boxwdplusemdimen=16.11119pt%
     \fi%
     \buildrel{#1}\over%
       {\setbox1=\hbox to \boxwdplusemdimen{\rightarrowfill}%
     \ht1=0.3em\relax\box1^{\scriptstyle +}}%
   }}
  \newcommand{\step}[1]{\ensuremath{\mathbin{\arrow{#1}}}}
  \newcommand{\stepsym}{\ensuremath{\mathalpha{\rightarrow}}}
  \newcommand{\ssteps}{\ensuremath{\mathbin{\darrow{}}}}
  \newcommand{\sstepssym}{\ensuremath{\mathbin{\twoheadrightarrow}}}
  \newcommand{\plussteps}{\ensuremath{\mathbin{\plarrow{}}}}
  \newcommand{\plusstepssym}{\ensuremath{\mathbin{\rightarrow^+}}}
  \newcommand{\bisim}{%
    \setbox0=\hbox{\kern-.1ex{$\leftrightarrow$}\kern-.1ex}
    \setbox1=\vbox{\hbox{\raise .1ex \box0}\hrule}%
    \ensuremath{\mathrel{\hbox{\kern.1ex\box1\kern.1ex}}}
  }
  \newcommand{\bbisim}{%
    \setbox0=\hbox{\kern-.1ex{$\leftrightarrow$}\kern-.1ex}
    \setbox1=\vbox{\hbox{\raise .1ex \box0}\hrule}%
    \ensuremath{\mathrel{\hbox{\kern.1ex\box1\kern.1ex}_{b}}}
  }
  \newcommand{\wbbisimd}{%
    \setbox0=\hbox{\kern-.1ex{$\leftrightarrow$}\kern-.1ex}
    \setbox1=\vbox{\hbox{\raise .1ex \box0}\hrule}%
    \ensuremath{\mathrel{\hbox{\kern.1ex\box1\kern.1ex}^{\Delta_3}_{b}}}
  }
  \newcommand{\wbbisimdtwo}{%
    \setbox0=\hbox{\kern-.1ex{$\leftrightarrow$}\kern-.1ex}
    \setbox1=\vbox{\hbox{\raise .1ex \box0}\hrule}%
    \ensuremath{\mathrel{\hbox{\kern.1ex\box1\kern.1ex}^{\Delta_2}_{b}}}
  }
  \newcommand{\onestepbbisimd}{%
    \setbox0=\hbox{\kern-.1ex{$\leftrightarrow$}\kern-.1ex}
    \setbox1=\vbox{\hbox{\raise .1ex \box0}\hrule}%
    \ensuremath{\mathrel{\hbox{\kern.1ex\box1\kern.1ex}^{\Delta_1}_{b}}}
  }
  \newcommand{\bbisimd}{%
    \setbox0=\hbox{\kern-.1ex{$\leftrightarrow$}\kern-.1ex}
    \setbox1=\vbox{\hbox{\raise .1ex \box0}\hrule}%
    \ensuremath{\mathrel{\hbox{\kern.1ex\box1\kern.1ex}^{\Delta_4}_{b}}}
  }
  \newcommand{\rbbisimd}{%
    \setbox0=\hbox{\kern-.1ex{$\leftrightarrow$}\kern-.1ex}
    \setbox1=\vbox{\hbox{\raise .1ex \box0}\hrule}%
    \ensuremath{\mathrel{\hbox{\kern.1ex\box1\kern.1ex}^{\Delta}_{b}}}
  }
  \newcommand{\rbbisimdzero}{%
    \setbox0=\hbox{\kern-.1ex{$\leftrightarrow$}\kern-.1ex}
    \setbox1=\vbox{\hbox{\raise .1ex \box0}\hrule}%
    \ensuremath{\mathrel{\hbox{\kern.1ex\box1\kern.1ex}^{\Delta_0}_{b}}}
  }
  \newcommand{\bbisimsym}{\ensuremath{\mathalpha{\bbisim}}}
  \newcommand{\bbisimdsym}{\ensuremath{\mathalpha{\bbisimd}}}
  \newcommand{\onestepbbisimdsym}{\ensuremath{\mathalpha{\onestepbbisimd}}}
  \newcommand{\wbbisimdsym}{\ensuremath{\mathalpha{\wbbisimd}}}
  \newcommand{\rbbisimdsym}{\ensuremath{\mathalpha{\rbbisimd}}}
  \newcommand{\N}{\ensuremath{\mathalpha{\omega}}}
  \newcommand{\bstate}[1]{\state{#1^\flat}}
  \newcommand{\astate}[1]{\state{#1^\sharp}}
  \newcommand{\bbd}{\mbox{BB$^\Delta$}}
  \newcommand{\bbdsc}{\ensuremath{\brel[\hat{\mathcal{R}}]}}
  \newcommand{\bbdscsym}{\ensuremath{\brelsym[\hat{\mathcal{R}}]}}
  \newcommand{\IFF}{iff}
  \newcommand{\all}[1]{\ensuremath{\mathalpha{\forall{#1}}}}
  \newcommand{\is}[1]{\ensuremath{\mathalpha{\exists{#1}}}}
  \newcommand{\Frm}[1][\Phi]{\ensuremath{\mathalpha{#1}}}
  \newcommand{\frm}[1][\varphi]{\ensuremath{\mathalpha{#1}}}
  \newcommand{\true}{\ensuremath{\mathalpha{\top}}}
  \newcommand{\false}{\ensuremath{\mathalpha{\bot}}}
  \newcommand{\compl}{\ensuremath{\mathalpha{\neg}}}
  \newcommand{\meet}{\ensuremath{\mathbin{\wedge}}}
  \newcommand{\Meet}[1][]{\ensuremath{\bigwedge_{#1}}}
  \newcommand{\sat}{\ensuremath{\mathrel{\models}}}
\newcommand{\runtil}[1][\acta]{\ensuremath{\mathbin{#1}}}
\newcommand{\Div}{\ensuremath{\mathalpha{\widehat\Delta}}}
\newcommand{\bbvalidsym}{\ensuremath{\mathalpha{\approx}}}
\newcommand{\bbvalid}{\ensuremath{\mathrel{\bbvalidsym}}}
\newcommand{\requivalidsym}{\ensuremath{\mathalpha{\approx^{\rDiv}}}}
\newcommand{\requivalid}{\ensuremath{\mathrel{\requivalidsym}}}
\newcommand{\funtil}[1][\acta]{\ensuremath{\mathbin{\langle{\hat#1}\rangle}}}
  \newcommand{\until}[1][\acta]{\ensuremath{\mathbin{\langle{#1}\rangle}}}
  \newcommand{\rDiv}{\ensuremath{\mathalpha{\Delta}}}
  \newcommand{\equivalidsym}{\ensuremath{\mathalpha{\approx^{\Div}}}}
  \newcommand{\equivalid}{\ensuremath{\mathrel{\equivalidsym}}}
  \newcommand{\relcomp}{\ensuremath{\mathbin{;}}}
\newcommand{\opt}[1]{\mbox{\tiny\rm(}#1\mbox{\tiny\rm)}} %
\newcommand{\plat}[1]{\raisebox{0pt}[0pt][0pt]{#1}}      %
\newcommand{\C}{\ensuremath{\mathcal{C}}}
\newcommand{\cc}{\equiv_c}
\newcommand{\ccd}{\equiv_c^\Delta}
\newcommand{\ccdsym}{\ensuremath{\mathalpha{\ccd}}}
\newcommand{\JBFrm}{\ensuremath{\Frm_\mathit{jb}}}
\newcommand{\JBEDFrm}{\ensuremath{\Frm^{\rDiv}_\mathit{jb}}}
\newcommand{\UEDFrm}{\ensuremath{\Frm^{\rDiv}_\mathit{u}}}
\newcommand{\JBSDFrm}{\ensuremath{\Frm^{\Div}_\mathit{jb}}}
\newcommand{\equi}{\Lleftarrow\!\!\!\Rrightarrow}
\newcommand{\ofwhicheverystateis}{of which every state is}%
\title{Branching Bisimilarity with Explicit Divergence}
   \author{Rob van Glabbeek\\
            National ICT Australia, Sydney, Australia\\
            School of Computer Science and Engineering,
            University of New South Wales, Sydney, Australia
   \and Bas Luttik\\
            Department of Mathematics and Computer Science,
            Technische Universiteit Eindhoven,
            The Netherlands\\
            CWI, The Netherlands
   \and Nikola Tr\v{c}ka\\
            Department of Mathematics and Computer Science,
            Technische Universiteit Eindhoven,
            The Netherlands}
\begin{document}
\issue{(submission)}
\runninghead{R.J. van Glabbeek, B. Luttik, N. Tr\v{c}ka}{Branching
  bisimilarity with explicit divergence}
\maketitle
\vspace{-1in}
\begin{abstract}
  We consider the relational characterisation of branching
  bisimilarity with explicit divergence. We prove that it is an
  equivalence and that it coincides with the original definition of
  branching bisimilarity with explicit divergence in terms of coloured
  traces. We also establish a correspondence with several variants of
  an action-based modal logic with until- and divergence modalities.
\end{abstract}

  \section{Introduction}\label{sect:introduction}

  Branching bisimilarity was proposed in \cite{GW96}. It is a
  behavioural equivalence on processes that is compatible with a
  notion of abstraction from internal activity, while at the same
  preserving the branching structure of processes in a strong
  sense. We refer the reader to \cite{GW96}, in particular
  to Section 10 therein, for ample motivation of the relevance of
  branching bisimilarity.

  Branching bisimilarity abstracts to a large extent from
  \emph{divergence} (i.e., infinite internal activity). For instance,
  it identifies a process, say $P$, that may perform some internal
  activity after which it returns to its initial state (i.e., $P$ has
  a $\silent$-loop) with a process, say $P'$, that admits the
  same behaviour as $P$ except that it cannot perform the
  internal activity leading to the initial state (i.e., $P'$ is $P$
  without the $\silent$-loop). This means that branching bisimilarity
  is not compatible with any temporal logic featuring an
  \emph{eventually} modality: for any desired state that $P'$ will
  eventually reach,  the mentioned internal activity of $P$ may
  be performed continuously, and thus prevent $P$ from reaching
  this desired state.

  The notion of \emph{branching bisimilarity with explicit divergence}
  ($\bbd$), also proposed in \cite{GW96}, is a suitable refinement of
  branching bisimilarity that is compatible with the well-known
  branching-time temporal logic CTL$^*$ without the nexttime operator
  $X$ (which is known to be incompatible with abstraction from internal
  activity).
  In fact, in \cite{CTLd} we have proved that it is the coarsest
  semantic equivalence on labelled transition systems with silent
  moves that is a congruence for parallel composition (as found in
  process algebras like CCS, CSP or ACP) and only equates processes
  satisfying the same CTL$^*_{-X}$ formulas. It is also the finest 
  equivalence in the \emph{linear time -- branching time spectrum} of
  \cite{Gla93}.

  There are several ways to characterise a behavioural equivalence.
  The original definition of $\bbd$, in terms of \emph{coloured
    traces}, stems from \cite{GW96}. In \cite{Gla93}, $\bbd$ is
  defined in terms of a modal and a relational characterisation, which
  are claimed to coincide with each other and with the original notion
  from \cite{GW96}. Of these three definitions of $\bbd$, the
  relational characterisation from \cite{Gla93} is the most concise
  one, in the sense that it requires the least amount of auxiliary
  concepts.  Moreover, this definition is most in the style of the
  standard definitions of other kinds of bisimulation, found elsewhere
  in the literature. For these reasons, it is tempting to take it as
  standard definition.

  Although it is not hard to establish that the modal characterisation
  from \cite{Gla93} is correct, in the sense that it defines an
  equivalence that coincides with $\bbd$ of \cite{GW96}, it is not at
  all trivial to establish that the same holds for the relational
  characterisation from \cite{Gla93}. If fact, it is non-trivial that
  this relation is an equivalence, and that it satisfies the so-called
  \emph{stuttering property}. Once these properties have been
  established, it follows that the notion coincides with $\bbd$ of
  \cite{GW96}.

  In the remainder of this paper, we shall first, in
  Section~\ref{sec:bb}, briefly recapitulate the relational,
  coloured-trace, and modal characterisations of branching
  bisimilarity. Then, in Section~\ref{sec:relchar}, we shall discuss
  the condition proposed in \cite{Gla93} that can be added to the
  relational characterisation in order to make it divergence
  sensitive; we shall then also discuss several variants on this
  condition. In Section~\ref{sec:eqst} we establish that the
  relational characterisation of $\bbd$ all coincide, that they are
  equivalences and that they enjoy the stuttering property. In
  Section~\ref{sec:ctchar} we show that the relational
  characterisations of $\bbd$ coincide with the original definition of
  $\bbd$ in terms of coloured traces.  Finally, in
  Section~\ref{sec:modchar}, we shall establish agreement between the
  relational characterisation from \cite{Gla93}, the modal
  characterisation from \cite{Gla93}, and an alternative modal
  characterisation obtained by adding the divergence modality of
  \cite{Gla93} to the Hennessy-Milner logic with \emph{until} proposed
  in \cite{DV95}.

\section{Branching bisimilarity} \label{sec:bb}

  We presuppose a set
  $\Acts$ of \emph{actions} with a special element
  $\sact\in\Acts$, and we presuppose a \emph{labelled transition
    system} $(\States,\stepsym)$ with labels from $\Acts$, i.e.,
  $\States$ is a set of \emph{states} and
  $\stepsym\subseteq\States\times\Acts\times\States$ is a
  \emph{transition relation} on $\States$.
  Let $\states,\state{s'}\in\States$ and $\acta\in\Acts$.  We write
  $\states\step{\acta}\state{s'}$ for
  $(\states,\acta,\state{s'})\in\stepsym$ and we abbreviate the
  statement `$\states\step{\acta}\state{s'}$ or ($\acta=\sact$ and
  $\states=\state{s'}$)' by
  \plat{$\states\step{\opt{\acta}}\state{s'}$}.  We denote by
  $\plusstepssym$ the transitive closure of the binary relation
  $\step{\sact}$, and by $\sstepssym$ its reflexive-transitive
  closure.
  A \emph{path} from a state $\states$ is an alternating sequence
  $\states[0],a_1,\states[1],a_2,s_2,\dots,a_n,s_n$ of states and
  actions, such that $\states=\states[0]$ and
  $\states[k-1]\step{a_k}\states[k]$ for $k=1,\dots,n$.
  \linebreak
  A \emph{process} is given by a state $s$ in a labelled transition
  system, and encompasses all the states and transitions reachable
  from $s$.

  \paragraph{Relational characterisation}

  The definition of branching bisimilarity that is most widely used
  has a co-inductive flavour. 
  It defines when a binary relation on states preserves the
  behaviour of the associated processes. It then
  declares two states to be equivalent if there exists such a relation
  relating them. We shall refer to this kind of characterisation as a
  \emph{relational characterisation} of branching bisimilarity.

  \begin{definition} \label{def:bbisim} A symmetric binary relation
    $\brelsym$ on $\States$ is a \emph{branching bisimulation} if it
    satisfies the following condition for all
    $\states,\statet\in\States$ and $a\in\Acts$:
    \begin{enumerate}\itemsep 0pt
    \renewcommand{\theenumi}{T}
    \renewcommand{\labelenumi}{(\theenumi)}
    \item \label{cnd:stepsim}
      if $\states\brel\statet$ and $\states\step{\acta}\state{s'}$ for
      some state $\state{s'}$,
      then there exist states $\state{t'}$ and $\state{t''}$
      such that
        \plat{$\statet\ssteps{}\state{t''}\step{\opt{\acta}}\state{t'}$},
        $\states\brel\state{t''}$
      and
        $\state{s'}\brel\state{t'}$.
    \end{enumerate}
    We write $\states\bbisim\statet$ if there exists a branching
    bisimulation $\brelsym$ such that $\states\brel\statet$.
    The relation $\bbisim$ on states is referred to as (the relational
    characterisation of) branching bisimilarity.
  \end{definition}

  \noindent
  The relational characterisation of branching bisimilarity presented
  above is from \cite{Gla93}. As shown in \cite{Bas96,Gla93,GW96}, it
  yields the same concept of branching bisimilarity as the
  original definition in \cite{GW96}. The technical advantage of the
  above definition over the original definition is that the defined
  notion of branching bisimulation is \emph{compositional}: the
  composition of two branching bisimulations is again a branching
  bisimulation. Basten \cite{Bas96} gives an example showing that
  the condition used in the original definition of $\bbisimsym$ of
  \cite{GW96} fails to be compositional in this sense, and thus argued
  that establishing transitivity directly for the original definition
  is not straightforward.

  \paragraph{Coloured-trace characterisation}
  
  To substantiate their claim that branching bisimilarity indeed
  preserves the branching structure of processes, van Glabbeek
  and Weijland present in \cite{GW96} an alternative characterisation
  of the notion in terms of coloured traces. Below we repeat this
  characterisation.

  \begin{definition}\label{def:colourings}
  A \emph{colouring} is an equivalence on $S$.
  Given a colouring $\C$ and a state $s\in S$, the \emph{colour} $\C(s)$
  of $s$ is the equivalence class containing $s$.

  For $\pi = s_0,a_1,s_1,\dots,a_n,s_n$ a path from $s$, let $\C(\pi)$
  be the alternating sequence of colours and actions obtained from
  $\C(s_0),a_1,\C(s_1),\ldots,a_n,\C(s_n)$ by contracting all
  subsequences $C,\tau,C,\tau,\dots,\tau,C$ to $C$. The sequence
  $\C(\pi)$ is called a \emph{$\C$-coloured trace} of $s$.
  A colouring $\C$ is \emph{consistent} if two states of the
  same colour always have the same $\C$-coloured traces.

  We write $s\cc t$ if there exists a consistent colouring $\C$ with
  $\C(s)=\C(t)$.
  \end{definition}
 
  \noindent
  In \cite{GW96} it is proved that $\cc$ coincides with the relational
  characterisation $\bbisim$ of branching bisimilarity.

  \paragraph{Modal characterisation}

  A modal characterisation of a behavioural equivalence is a modal
  logic such that two processes are equivalent \IFF{} they satisfy the
  same formulas of the logic. The modal logic thus corresponding to a
  behavioural equivalence then allows one,
  for any two inequivalent processes, to formally express
  a behavioural property that distinguishes
  them. Whereas colourings or bisimulations are good tools to show
  that two processes are equivalent, modal formulas are better for
  proving inequivalence. The first modal characterisation
  of a behavioural equivalence is due to Hennessy and Milner
  \cite{HM85}. They provided a modal characterisation of
  (strong) bisimilarity on image-finite labelled transition systems,
  using a modal logic that is nowadays referred to as the
  \emph{Hennessy-Milner Logic}. The modal characterisations of
  branching bisimilarity presented below are adaptations
  of the Hennessy-Milner Logic.

  The class of formulas $\JBFrm$ of the modal logic for branching
  bisimilarity proposed in \cite{Gla93} is generated by the following
  grammar:
  \begin{equation} \label{eq:grammar}
    \frm\ ::=\
      \compl\frm\ \mid\
      \textstyle{\Meet\Frm}\ \mid\
      \frm\runtil[\acta]\frm
        \qquad\text{($\acta\in\Acts$,
               $\varphi\in\JBFrm$ and
               $\Frm\subseteq\JBFrm$).}
  \end{equation}
  In case the cardinality $|S|$ of the set of states of our labelled
  transition system is less than some infinite cardinal $\kappa$, we
  may require that $|\Frm|<\kappa$ in conjunctions, thus
  obtaining a \emph{set} of formulas rather than a proper class. We
  shall use the following standard abbreviations:
    $\true=\Meet\emptyset$,
    $\false=\compl\true$
  and
    $\bigvee\Frm=\compl\Meet\{\compl\frm\mid\frm\in\Frm\}$.

  We define when a formula $\frm$ is \emph{valid} in a state $\states$
  (notation: $\states\sat\frm$) inductively as follows:\vspace{-2pt}
  \begin{enumerate}
  \renewcommand{\theenumi}{\roman{enumi}}
  \renewcommand{\labelenumi}{(\theenumi)}
  \item
    $\states\sat\compl\frm$ \IFF{} $\states\not\sat\frm$;
  \item
    $\states\sat\Meet\Frm$
      \IFF{}
    $\states\sat\frm$ for all $\frm\in\Frm$;
  \item
    $\states\sat\frm\runtil[\acta]\frm[\psi]$
      \IFF{}
    there exist states $\state{s'}$ and $\state{s''}$ such that
      $\states\ssteps\state{s''}\step{\opt{\acta}}\state{s'}$,
      $\state{s''}\sat\frm$ and $\states'\sat\frm[\psi]$.
  \end{enumerate}
  Validity induces an equivalence on states: we define
  $\bbvalidsym\subseteq\States\times\States$ by
  \begin{equation*}
    \states\bbvalid\statet\quad
  \text{iff}\quad
    \all{\frm\in\JBFrm}.\
      {\states\sat\frm}\Leftrightarrow{\statet\sat\frm}
  \enskip.
  \end{equation*}
  In \cite{Gla93} it was shown that $\bbvalidsym$ coincides with
  $\bbisim$, that is, branching bisimilarity is characterised by the
  modal logic above.

  Clause (iii) in the definition of validity appears to be rather
  liberal. More stringent alternatives are obtained by using
  $\frm\funtil\frm[\psi]$ or $\frm\until\frm[\psi]$ instead of
  $\frm\runtil[\acta]\frm[\psi]$, with the following definitions:
  \begin{enumerate}
  \item[(iii$'$)] $\states\sat\frm\funtil\frm[\psi]$
          \IFF{}
        either $\acta=\sact$ and $\states\sat\frm[\psi]$, or
        there exists a sequence of states
        $\states[0],\dots,\states[n],\states[n+1]$ ($n\geq 0$)
        such that
          $\states=\states[0]
             \step{\sact}\cdots\step{\sact}\states[n]
             \step{\acta}\states[n+1]$,
          $\states[i]\sat\frm$ for all $i=0,\dots,n$
        and $\states[n+1]\sat\frm[\psi]$.
  \item[(iii$''$)] $\states\sat\frm\until[\acta]\frm[\psi]$
          \IFF{}
        there exists states
        $\states[0],\dots,\states[n],\states[n+1]$ ($n\geq 0$)
        such that
          $\states=\states[0]
             \step{\sact}\cdots\step{\sact}\states[n]
             \step{\opt{\acta}}\states[n+1]$,
          $\states[i]\sat\frm$ for all $i=0,\dots,n$
        and $\states[n+1]\sat\frm[\psi]$.
  \end{enumerate}
  The modality $\funtil$ stems from De Nicola \&
  Vaandrager \cite{DV95}. There it was shown, for labelled transition
  systems with \emph{bounded nondeterminism}, that branching
  bisimilarity, $\bbisimsym$, is characterised by the logic with
  negation, binary conjunction and this \emph{until}
  modality. The modality $\until$ is a common strengthening of
  $\funtil$ and the \emph{just-before} modality $\runtil[\acta]$ above; it
  was first considered in \cite{Gla93}.

  To be able to compare the expressiveness of modal logics, the
  following definitions are proposed by Laroussinie, Pinchinat \&
  Schnoebelen \cite{LPS95}.
  \begin{definition}\label{expressiveness}
    Two modal formulas $\varphi$ and $\psi$ that are interpreted on
    states of labelled transition systems are \emph{equivalent},
    written $\varphi \equi \psi$, if $s\sat\varphi
    \Leftrightarrow s\sat\psi$ for all states $s$ in all labelled
    transition systems. Two modal logics are equally expressive if for
    every formula in the one there is an equivalent formula in
    the other.
  \end{definition}

\noindent
  As remarked in \cite{Gla93}, the modalities $\funtil$
  and $\until$ are equally expressive, since
  \begin{alignat*}{3}
    & \varphi \funtil[\tau] \psi
    &&\ ~\equi~\
      \psi \vee \varphi \until[\tau] \psi
  \enskip&&\ ,\\
    & \varphi \until[\tau] \psi
    &&\ ~\equi~\
      \varphi \wedge \varphi \funtil[\tau] \psi
  \enskip&&\ \text{and}\\
    & \varphi \until[\acta] \psi
    &&\ ~\equi~\
      \varphi \funtil \psi
  &&\ \text{for all $a\neq\silent$}.
  \end{alignat*}
  Note that the modality $\runtil$ can be expressed in terms of
  $\until$:
  \begin{alignat*}{3}
    & \varphi \runtil \psi
    &&\ ~\equi~\
      \top\until[\tau](\varphi \until \psi)\enskip.
  \end{alignat*}

  \noindent
  Laroussinie, Pinchinat \& Schnoebelen established in \cite{LPS95}
  that the modal logic with negation, binary conjunction and $\runtil$
  from \cite{Gla93} and the logic with negation, binary conjunction
  and $\funtil$ from \cite{DV95} are equally expressive.

  \section[Adding divergence]
          {Relational characterisations of \bbd{}}
          \label{sec:relchar}

  The notion branching bisimilarity discussed in the previous section
  abstracts from divergence (i.e, infinite internal activity). In the
  remainder of this paper, we discuss a refinement of the notion of
  branching bisimulation equivalence that takes
  divergence into account. In this section we present several
  conditions that can be added to the notion of branching bisimulation
  in order to make it divergence sensitive. The induced notions of
  branching bisimilarity with explicit divergence will all turn out to
  be equivalent.

  \begin{definition}\cite{Gla93} \label{def:bbisimd}
    A symmetric binary relation $\brelsym$ on $\States$ is a
    \emph{branching bisimulation with explicit divergence} if it is a
    branching bisimulation (i.e., it satisfies condition
    (\ref{cnd:stepsim}) of Definition~\ref{def:bbisim}) and in addition
    satisfies the following condition for all
    $\states,\statet\in\States$ and $a\in\Acts$:
    \begin{enumerate}\itemsep 0pt
  \renewcommand{\theenumi}{D}
  \renewcommand{\labelenumi}{(\theenumi)}
  \item \label{cnd:rvgdivsim}
    if $\states\brel\statet$
    and there is an infinite sequence of states
      $(\states[k])_{k\in\N}$
    such that
      $\states=\states[0]$,
      $\states[k]\step{\sact}\states[k+1]$
    and
      $\states[k]\brel\statet$ for all $k\in\N$,
    then there exists an infinite sequence of states
      $(\statet[\ell])_{\ell\in\N}$
    such that
      $\statet=\statet[0]$,
      $\statet[\ell]\step{\sact}\statet[\ell+1]$ for all $\ell\in\N$,
    and
      $\states[k]\brel\statet[\ell]$ for all $k,\ell\in\N$.
    \end{enumerate}
    We write $\states\rbbisimd\statet$ if there exists a branching
    bisimulation with explicit divergence $\brelsym$ such that
    $\states\brel\statet$.
  \end{definition}

\begin{figure}[htb]
\begin{center}\input{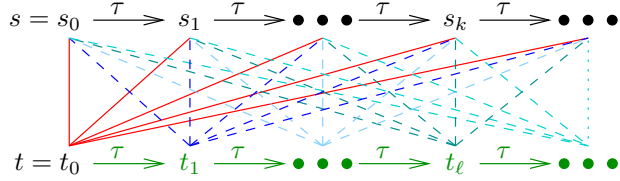}\end{center}
\caption{Condition~(\ref{cnd:rvgdivsim}).}
\label{fig:rvgdivsim}
\end{figure}

\noindent
Figure~\ref{fig:rvgdivsim} illustrates condition
(\ref{cnd:rvgdivsim}).  In \cite{Gla93} it was claimed that the notion
$\rbbisimd$ defined above coincides with \emph{branching bisimilarity
  with explicit divergence} as defined earlier in
\cite{GW96}. In this paper we will substantiate this claim.  On the
way to this end, we need to show that $\rbbisimd$ is an equivalence
and has the so-called \emph{stuttering property}.

The difficulty in proving that $\rbbisimd$ is an equivalence is in
establishing transitivity. Basten's proof in \cite{Bas96} that
$\bbisimsym$ (i.e., branching bisimilarity without explicit
divergence) is transitive, is obtained as an immediate consequence of
the fact that whenever two binary relations $\brel_1$ and $\brel_2$
satisfy (\ref{cnd:stepsim}), then so does their composition
$\brel_1\relcomp\brel_2$ (see Lemma~\ref{lem:relcomp} below).
The condition (\ref{cnd:rvgdivsim}) fails to be compositional, as we
show in the following example.

\begin{figure}[htb]
\begin{center}\input{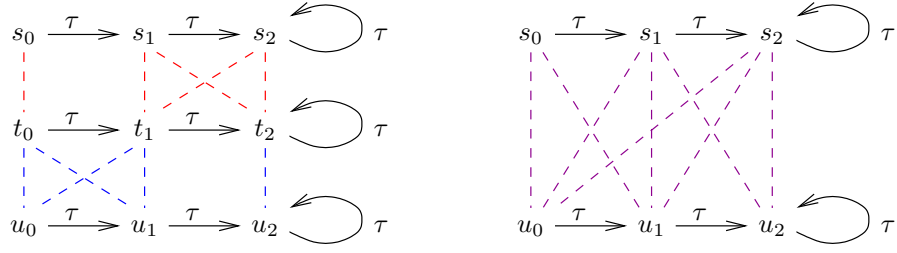}\end{center}
\caption{The composition of branching bisimulations with explicit
  divergence is not a branching bisimulation with explicit divergence.}
\label{fig:incompositional_original}
\end{figure}

  \begin{example} \label{exa:incompositionality}
    Consider the labelled transition system depicted on the
    left-hand side of Figure~\ref{fig:incompositional_original} together with
    the branching bisimulations with explicit divergence
    \begin{gather*}
      \brel_1=\{(\states[0],\statet[0]),(\statet[0],\states[0]),
                (\states[1],\statet[1]),(\statet[1],\states[1]),
                (\states[2],\statet[2]),(\statet[2],\states[2]),
                (\states[1],\statet[2]),(\statet[2],\states[1]),
                (\states[2],\statet[1]),(\statet[1],\states[2])
              \}~~~~~\hspace{-1.67pt}\text{and}\\
      \brel_2=\{(\statet[0],\stateu[0]),(\stateu[0],\statet[0]),
                (\statet[1],\stateu[1]),(\stateu[1],\statet[1]),
                (\statet[2],\stateu[2]),(\stateu[2],\statet[2]),
                (\statet[0],\stateu[1]),(\stateu[1],\statet[0]),
                (\statet[1],\stateu[0]),(\stateu[0],\statet[1])
              \}
    \enskip.
    \end{gather*}
    The composition $\brelsym = {\brel_1\relcomp\brel_2}$ on the
    relevant fragment is depicted on the right-hand side of
    Figure~\ref{fig:incompositional_original}. 
    Note that
      $\states[0]$
    gives rise to a divergence
    {\ofwhicheverystateis} related by $\brel$ to $\stateu[0]$.
    However, since $\states[0]$ and $\stateu[2]$ are not related
    according to $\brel$, there is no divergence from $\stateu[0]$
    {\ofwhicheverystateis} related to every state on the divergence from
    $\states[0]$. We conclude that $\brel$ does not satisfy the
    condition (\ref{cnd:rvgdivsim}).    
  \end{example}

  \noindent
  Our proof that $\rbbisimd$ is an equivalence proceeds along the same
  lines as Basten's proof in \cite{Bas96} that $\bbisim$ is an
  equivalence: we replace (\ref{cnd:rvgdivsim}) by an alternative
  divergence condition that is compositional,
  prove that the resulting notion of bisimilarity is an equivalence,
  and then establish that it coincides with $\rbbisimd$.
  In the remainder of this section, we shall arrive at our
  compositional alternative for (\ref{cnd:rvgdivsim}) through a series
  of adaptations of (\ref{cnd:rvgdivsim}).

  First, we observe that (\ref{cnd:rvgdivsim}) has a technically
  convenient reformulation: instead of requiring the existence of a
  \emph{divergence} from $\statet$ all the states {of which} enjoy
  certain properties, it suffices to require that there exists a
  \emph{state} reachable from $\statet$ by a single $\sact$-transition
  with these properties. Formally, the reformulation of
  (\ref{cnd:rvgdivsim}) is:
  \begin{enumerate}\itemsep 0pt
  \renewcommand{\theenumi}{D$_0\hspace{-1pt}$}
  \renewcommand{\labelenumi}{(\theenumi)}
  \item \label{cnd:rvgdivsimshort}
    if $\states\brel\statet$
    and there is an infinite sequence of states
      $(\states[k])_{k\in\N}$
    such that
      $\states=\states[0]$,
      $\states[k]\step{\sact}\states[k+1]$
    and
      $\states[k]\brel\statet$ for all $k\in\N$,
    then there exists a state $\state{t'}$
    such that
      $\statet\step{\sact}\state{t'}$
    and
      $\states[k]\brel\state{t'}$ for all $k\in\N$.
  \end{enumerate}

\begin{figure}[htb]
\begin{center}\input{rvgdivsimshort.pstex_t}\end{center}
\caption{Condition~(\ref{cnd:rvgdivsimshort}).}
\label{fig:rvgdivsimshort}
\end{figure}

  \noindent
  Figure~\ref{fig:rvgdivsimshort} illustrates condition
  (\ref{cnd:rvgdivsimshort}).
  If a binary relation satisfies (\ref{cnd:rvgdivsimshort}), then the
  divergence from $\statet$ required by (\ref{cnd:rvgdivsim}) can be
  inductively constructed. (We omit the inductive construction here;
  the proof of Proposition~\ref{prop:short} below contains a very
  similar inductive construction.)

  For our next adaptation we observe that (\ref{cnd:rvgdivsimshort})
  has some redundancy.  Note that it requires $\state{t'}$ to be
  related to \emph{every} state on the divergence from
  $\states$. However, the universal quantification in the conclusion
  can be relaxed to an existential quantification: it suffices to
  require that $\statet$ has an immediate $\sact$-successor that is
  related to \emph{some} state on the divergence from $\states$. The
  requirement can be expressed as follows:
  \begin{enumerate}\itemsep 0pt
   \renewcommand{\theenumi}{D$_1\hspace{-1pt}$}
   \renewcommand{\labelenumi}{(\theenumi)}
   \addtocounter{enumi}{1}
    \item \label{cnd:onestepdivsim}
      if $\states\brel\statet$
      and there is an infinite sequence of states
        $(\states[k])_{k\in\N}$
      such that
         $\states=\states[0]$,
         $\states[k]\step{\sact}\states[k+1]$
      and
         $\states[k]\brel\statet$ for all $k\in\N$,
      then there exists a state $\state{t'}$
      such that
        \plat{$\statet\step{\sact}\state{t'}$}
      and
        $\states[k]\brel\state{t'}$ for some $k\in\N$.
  \end{enumerate}

\begin{figure}[htb]
\begin{center}\input{onestepdivsim.pstex_t}\end{center}
\caption{Condition~(\ref{cnd:onestepdivsim}).}
\label{fig:divcond0}
\end{figure}

  \noindent
  Condition (\ref{cnd:onestepdivsim}) appears in the definition of
  divergence-sensitive stuttering simulation of Nejati \cite{Nej03}.
  It is illustrated in Figure~\ref{fig:divcond0}. We write
  $\states\onestepbbisimd\statet$ if there exists a symmetric binary
  relation $\brelsym$ satisfying (\ref{cnd:stepsim}) and
  (\ref{cnd:onestepdivsim}) such that $\states\brel\statet$. Note that
  every relation satisfying (\ref{cnd:rvgdivsim}) also satisfies
  (\ref{cnd:onestepdivsim}), so it follows that $\rbbisimdsym
  \subseteq \onestepbbisimdsym$.

  The following example illustrates that condition
  (\ref{cnd:onestepdivsim}) is still not compositional, not even if
  the composed relations satisfy (\ref{cnd:stepsim}).

  \begin{figure}[htb]
  \begin{center}\input{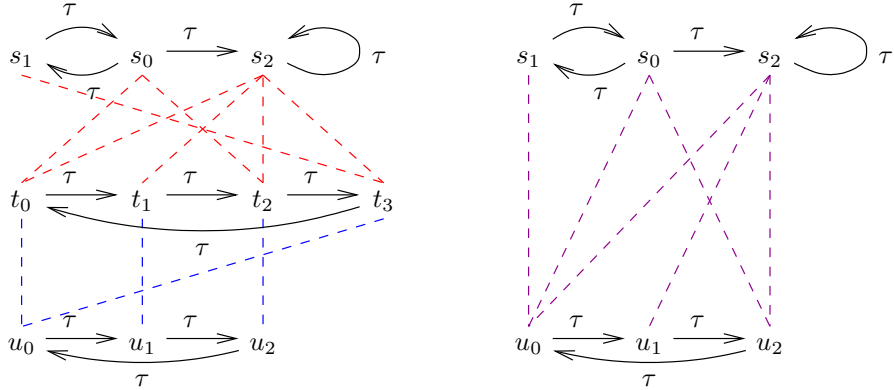}\end{center}
  \caption{The composition of binary relations satisfying
    (\ref{cnd:stepsim}) and (\ref{cnd:onestepdivsim}) does not necessarily satisfy (\ref{cnd:onestepdivsim}). }
  \label{fig:incompositional_onestepdivsim}
  \end{figure}

  \begin{example}
    Consider the labelled transition system depicted on the
    left-hand side of Figure~\ref{fig:incompositional_onestepdivsim}
    together with two binary relations satisfying
    (\ref{cnd:stepsim}) and (\ref{cnd:onestepdivsim}):
    \begin{gather*}
      \brel_1=\{(\states[0],\statet[0]),(\statet[0],\states[0]),
                (\states[0],\statet[2]),(\statet[2],\states[0]),
                (\states[1],\statet[3]),(\statet[3],\states[1])
              \}\cup
              \{(\states[2],\statet[i]),(\statet[i],\states[2])
                   \mid 0\leq i \leq 3
              \}\ \text{and}\\
      \brel_2=\{(\statet[i],\stateu[i]),(\stateu[i],\statet[i])
                  \mid 0\leq i \leq 2
              \}\cup
              \{(\statet[3],\stateu[0]),(\stateu[0],\statet[3])
              \}
    \enskip.
    \end{gather*}
    Note that, since $\states[1]$ is not $\brelsym_1$-related to
    $\statet[0]$, the divergence
      $\states[0]
         \step{\sact}\states[1]
         \step{\sact}\states[0]
         \step{\sact}\states[1]
         \step{\sact}\cdots$
    need not be simulated by $\statet[0]$ in such a way that
    $\statet[1]$ is related to either $\states[0]$ or $\states[1]$.

    Now consider the composition
       $\brelsym = {\brel_1\relcomp\brel_2}$.
    Both $\states[0]$ and $\states[1]$ are $\brelsym$-related to
    $\stateu[0]$, whereas the state $\stateu[1]$ is not
    $\brelsym$-related to $\states[0]$ nor to $\states[1]$.
    We conclude that $\brelsym$ does not satisfy
    (\ref{cnd:onestepdivsim}).
  \end{example}

\noindent
  The culprit in the preceding example appears to be the fact that
  (\ref{cnd:onestepdivsim}) only considers divergences from $\states$
  {\ofwhicheverystateis} related to $\statet$. Our second alternative
  omits this restriction. It considers \emph{every} divergence from
  $\states$ and requires that it is simulated by $\statet$. 
  \begin{enumerate}\itemsep 0pt
   \renewcommand{\labelenumi}{(\theenumi)}
   \renewcommand{\theenumi}{D$_2\hspace{-1pt}$}
   \item \label{cnd:newdivsimshort}
       if $\states\brel\statet$
       and there is an infinite sequence of states
         $(\states[k])_{k\in\N}$
       such that
          $\states=\states[0]$
      and
          $\states[k]\step{\sact}\states[k+1]$ for all $k\in\N$,
      then there exists a state $\state{t'}$
      such that
        $\statet\step{\sact}\state{t'}$
      and
        $\states[k]\brel\state{t'}$ for some $k\in\N$.
  \end{enumerate}

\begin{figure}[htb]
\begin{center}\input{newdivsimshort.pstex_t}\end{center}
\caption{Condition~(\ref{cnd:newdivsimshort}).}
\label{fig:newdivsimshort}
\end{figure}

\noindent
  Figure~\ref{fig:newdivsimshort} illustrates condition
  (\ref{cnd:newdivsimshort}).
  In contrast to the preceding divergence conditions, it does have the
  property that if two relations both satisfy it, then so does their
  relational composition. However, to facilitate a direct proof of this
  property, it is technically convenient to reformulate condition
  (\ref{cnd:newdivsimshort}) such that it requires a divergence from
  $\statet$ rather than just one $\sact$-step:

  \begin{enumerate}\itemsep 0pt
   \renewcommand{\labelenumi}{(\theenumi)}
   \renewcommand{\theenumi}{D$_3\hspace{-1pt}$}
   \item \label{cnd:newdivsim}
       if $\states\brel\statet$
       and there is an infinite sequence of states
         $(\states[k])_{k\in\N}$
       such that
          $\states=\states[0]$
       and
          $\states[k]\step{\sact}\states[k+1]$ for all $k\in\N$,
       then there exist an infinite sequence of states
         $(\statet[\ell])_{\ell\in\N}$
       and a mapping
         $\statemap: {\N\rightarrow\N}$
       such that
         $\statet=\statet[0]$,
         $\statet[\ell]\step{\sact}\statet[\ell+1]$
       and
         $\states[\statemap(\ell)]\brel\statet[\ell]$ for all $\ell\in\N$.
  \end{enumerate}

\begin{figure}[htb]
\begin{center}\input{newdivsim.pstex_t}\end{center}
\caption{Condition~(\ref{cnd:newdivsim}).}
\label{fig:newdivsim}
\end{figure}

\noindent
  Figure~\ref{fig:newdivsim} illustrates condition (\ref{cnd:newdivsim}).

  \begin{proposition} \label{prop:short}
    A binary relation $\brelsym$ satisfies
      {\rm (\ref{cnd:newdivsimshort})}
    iff it satisfies
      {\rm (\ref{cnd:newdivsim})}.
  \end{proposition}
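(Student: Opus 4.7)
The plan is to prove the two implications separately, with the forward direction being immediate and the reverse direction requiring a straightforward inductive construction.

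For the direction (\ref{cnd:newdivsim}) $\Rightarrow$ (\ref{cnd:newdivsimshort}), suppose $\brelsym$ satisfies (\ref{cnd:newdivsim}) and consider $\states\brel\statet$ together with a divergence $\states=\states[0]\step{\sact}\states[1]\step{\sact}\cdots$ from $\states$. Condition (\ref{cnd:newdivsim}) supplies an infinite $\sact$-sequence $(\statet[\ell])_{\ell\in\N}$ starting at $\statet[0]=\statet$ together with a mapping $\statemap$ so that $\states[\statemap(\ell)]\brel\statet[\ell]$ for every $\ell$. Taking $\state{t'}:=\statet[1]$ and $k:=\statemap(1)$, one has $\statet\step{\sact}\state{t'}$ and $\states[k]\brel\state{t'}$, which is exactly the conclusion of (\ref{cnd:newdivsimshort}).

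For the converse (\ref{cnd:newdivsimshort}) $\Rightarrow$ (\ref{cnd:newdivsim}), I would construct the sequence $(\statet[\ell])_{\ell\in\N}$ and the mapping $\statemap$ by simultaneous induction on $\ell$. For the base, set $\statet[0]:=\statet$ and $\statemap(0):=0$; then $\states[\statemap(0)]=\states=\states[0]\brel\statet=\statet[0]$. For the inductive step, assume $\statet[\ell]$ and $\statemap(\ell)$ have been defined with $\states[\statemap(\ell)]\brel\statet[\ell]$. The tail $\states[\statemap(\ell)]\step{\sact}\states[\statemap(\ell)+1]\step{\sact}\cdots$ of the original divergence is itself an infinite $\sact$-sequence starting at $\states[\statemap(\ell)]$, so (\ref{cnd:newdivsimshort}) applied to the pair $(\states[\statemap(\ell)],\statet[\ell])$ and this tail divergence yields a state $\state{t'}$ with $\statet[\ell]\step{\sact}\state{t'}$ and $\states[\statemap(\ell)+k]\brel\state{t'}$ for some $k\in\N$. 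Setting $\statet[\ell+1]:=\state{t'}$ and $\statemap(\ell+1):=\statemap(\ell)+k$ extends the construction one step while preserving the invariant.

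The result of the construction is an infinite sequence $\statet=\statet[0]\step{\sact}\statet[1]\step{\sact}\cdots$ from $\statet$ such that $\states[\statemap(\ell)]\brel\statet[\ell]$ for every $\ell\in\N$, which is precisely what (\ref{cnd:newdivsim}) demands. There is no real obstacle here; the only point worth noting is that the inductive step relies on the tail of an infinite divergence remaining an infinite divergence from its starting state, so (\ref{cnd:newdivsimshort}) is applicable at each stage, and the construction uses dependent choice in the standard way to make a consistent choice of $\state{t'}$ at each step.
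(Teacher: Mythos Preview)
Your proof is correct and follows essentially the same approach as the paper: the implication (\ref{cnd:newdivsim}) $\Rightarrow$ (\ref{cnd:newdivsimshort}) is dismissed as trivial there too, and the converse is obtained by the same simultaneous inductive construction of $(\statet[\ell])_{\ell\in\N}$ and $\statemap$ using the tail of the divergence at each step. Your definition $\statemap(\ell+1):=\statemap(\ell)+k$ is in fact the correct one; the paper writes $\statemap(\ell+1)=k'$, which appears to be a minor slip.
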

  \begin{proof}
    The implication from right to left is trivial.
    For the implication from left to right, suppose that $\brelsym$
    satisfies (\ref{cnd:newdivsimshort}) and that
    $\states\brel\statet$, and consider an infinite sequence of states
    $(\states[k])_{k\in\N}$ such that
      $\states=\states[0]$
    and
      $\states[k]\step{\sact}\states[k+1]$ for all $k\in\N$. We construct
    an infinite sequence of states
      $(\statet[\ell])_{\ell\in\N}$
    and a mapping
      $\statemap: {\N \rightarrow \N}$
    such that
      $\statet=\statet[0]$,
      $\statet[\ell]\step{\sact}\statet[\ell+1]$
    and
      $\states[\statemap(\ell)]\brel\statet[\ell]$ for all $\ell\in\N$.

    The infinite sequence $(\statet[\ell])_{\ell\in\N}$ and the mapping
    $\statemap:\N\rightarrow\N$ can be defined simultaneously by
    induction on $l$:
    \begin{enumerate}\itemsep 0pt
      \item We define $\statet[0]=\statet$ and $\statemap(0)=0$;
        it then clearly holds that
          $\states[\statemap(0)]\brel\statet[0]$.
      \item Suppose that the sequence $(\statet[\ell])_{\ell\in\N}$
        and the mapping $\statemap:\N\rightarrow\N$ have been defined
        up to $\ell$. Then, in particular,
        $\states[\statemap(\ell)]\brel\statet[\ell]$. 
        Since $(\states[\statemap(\ell)+k])_{k\in\N}$ is an infinite
        sequence such that
          $\states[\statemap(\ell)+k]\step{\sact}\states[\statemap(\ell)+k+1]$
        for all $k\in\N$, by (\ref{cnd:newdivsimshort}) there exists
        $\state{t'}$ such that $\statet[\ell]\step{\sact}\state{t'}$ and
        $\states[\statemap(\ell)+k']\brel\state{t'}$ for some
        $k'\in\N$.
        We define $\statet[\ell+1]=\state{t'}$ and
        $\statemap(\ell+1)=k'$.
  \qed
    \end{enumerate}
  \end{proof}

  \noindent
  We write $\states\wbbisimd\statet$ if 
  there exists a symmetric binary relation $\brelsym$ satisfying
  (\ref{cnd:stepsim}) and (\ref{cnd:newdivsim}) such that
  \mbox{$\states\brel\statet$}. Note that (\ref{cnd:onestepdivsim}) is
  a weaker requirement than (\ref{cnd:newdivsimshort}), and hence, by
  Proposition~\ref{prop:short}, than (\ref{cnd:newdivsim}).
  It follows that $\wbbisimdsym \subseteq \onestepbbisimdsym$. Also
  note that (\ref{cnd:newdivsimshort}) and (\ref{cnd:newdivsim}) on
  the one hand and (\ref{cnd:rvgdivsim}) and (\ref{cnd:rvgdivsimshort})
  on the other hand are incomparable.

  Using that (\ref{cnd:newdivsim}) is compositional, it will be
  straightforward to establish that $\wbbisimdsym$ is an equivalence.
  Then, it remains to establish that $\rbbisimd$ and $\wbbisimd$
  coincide. We shall prove that $\wbbisimd$ is included in $\rbbisimd$
  by establishing that $\wbbisimd$ is a branching bisimulation with
  explicit divergence; that $\wbbisimd$ is an equivalence is crucial
  in the proof of this property.  Instead of proving the converse
  inclusion directly, we obtain a stronger result by establishing that
  a notion of bisimilarity defined using a weaker divergence condition
  and therefore including $\rbbisimd$, is included in $\wbbisimdsym$.
  The weakest divergence condition we encountered so far is
  (\ref{cnd:onestepdivsim}). It is, however, possible to further
  weaken (\ref{cnd:onestepdivsim}): instead of requiring that
  $\state{t'}$ is an immediate $\sact$-successor, it is enough require
  that $\state{t'}$ can be reached from $\statet$ by one or more
  $\sact$-transitions. Formally,

  \begin{enumerate}\itemsep 0pt
   \renewcommand{\theenumi}{D$_4\hspace{-1pt}$}
   \renewcommand{\labelenumi}{(\theenumi)}
   \addtocounter{enumi}{1}
    \item \label{cnd:divsim}
      if $\states\brel\statet$
      and there is an infinite sequence of states
        $(\states[k])_{k\in\N}$
      such that
         $\states=\states[0]$,
         $\states[k]\step{\sact}\states[k+1]$
      and
         $\states[k]\brel\statet$ for all $k\in\N$,
      then there exists a state $\state{t'}$
      such that
        \plat{$\statet\plussteps\state{t'}$}
      and
        $\states[k]\brel\state{t'}$ for some $k\in\N$.
  \end{enumerate}

\begin{figure}[htb]
\begin{center}\input{divsim.pstex_t}\end{center}
\caption{Condition~(\ref{cnd:divsim}).}
\label{fig:divsim}
\end{figure}

\noindent
  Figure~\ref{fig:divsim} illustrates condition (\ref{cnd:divsim}).
  We write $\states\bbisimd\statet$ if there exists a symmetric binary
  relation $\brelsym$ satisfying (\ref{cnd:stepsim}) and
  (\ref{cnd:divsim}) such that 
  $\states\brel\statet$. Clearly,
    $\onestepbbisimdsym\subseteq\bbisimdsym$,
  and hence also $\wbbisimdsym\subseteq\bbisimdsym$ and
  $\rbbisimdsym\subseteq\bbisimdsym$.

  In the next section we shall also prove that
    $\bbisimdsym\subseteq\wbbisimdsym$.
  A crucial tool in our proof of this inclusion will be the notion of
  \emph{stuttering closure} of a binary relation $\brel$ on
  states. The stuttering closure of $\brel$ enjoys the so-called
  \emph{stuttering property}: if from state $\states$ a state
  $\state{s'}$ can be reached through a sequence of
  $\sact$-transitions, and both $\states$ and $\state{s'}$ are 
  $\brel$-related to the same state $\statet$, then all intermediate
  states between $\states$ and $\state{s'}$ are $\brel$-related to
  $\statet$ too.
  We shall prove a lemma to the effect that if a binary relation on
  states satisfies (\ref{cnd:stepsim}) and (\ref{cnd:divsim}), then
  its stuttering closure satisfies (\ref{cnd:stepsim}) and
  (\ref{cnd:newdivsim}), and use it to establish the inclusion
  $\bbisimdsym\subseteq\wbbisimdsym$. An easy corollary of the lemma
  is that $\bbisimdsym$ has the stuttering property.
  Here our proof also has a similarity with Basten's proof in
  \cite{Bas96}; in his proof that the notions of branching
  bisimilarity induced by (\ref{cnd:stepsim}) and by the original
  condition used in \cite{GW96} coincide, establishing the stuttering
  property is a crucial step.

\begin{figure}[htb]
\begin{center}\input{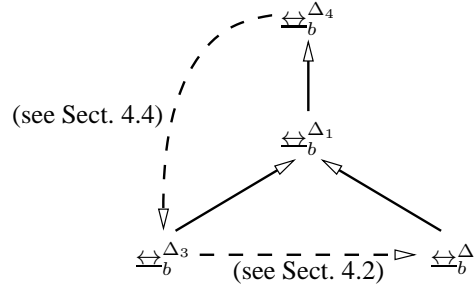}\end{center}
\caption{Inclusion graph.}
\label{fig:inclusion}
\end{figure}

  Figure~\ref{fig:inclusion} shows some inclusions between the
  different versions of branching bisimilarity with explicit
  divergence. (Note that we never defined $\rbbisimdzero$ and
  $\wbbisimdtwo$, as these would be the same as $\rbbisimd$ and
  $\wbbisimd$, respectively.)
  The solid arrows indicate inclusions that have already
  been argued for above; the dashed arrows indicate inclusions that
  will be established below.

  \begin{remark}
    We shall establish in the next section that
    $\rbbisimdsym=\bbisimdsym$. Note that, once we have this, we can
    replace the second condition of Definition~\ref{def:bbisimd} by
    any \emph{interpolant} of (\ref{cnd:rvgdivsim}) and
    (\ref{cnd:divsim}), i.e., any condition that is implied by
    (\ref{cnd:rvgdivsim}) and implies (\ref{cnd:divsim}), and end up
    with the same equivalence.  For instance, we could replace it by
    condition (\ref{cnd:onestepdivsim}), or by the condition of Gerth,
    Kuiper, Peled \& Penczek~\cite{GKPP99}:
    \begin{enumerate}\itemsep 0pt \item[]
      if $\states\brel\statet$
      and there is an infinite sequence of states
        $(\state{s_k})_{k\in\N}$
      such that
         $\states=\state{s_0}$,
         $\state{s_k}\step{\sact}\state{s_{k+1}}$
      and
         $\state{s_k}\brel\statet$ for all $k\in\N$,
      then there exists a state $\state{t'}$ such that
        $\statet\step{\sact}\state{t'}$ and
        $\state{s_k}\brel\state{t'}$ for some $k>0$.
    \end{enumerate}
    Similarly, we will prove that $\wbbisimdsym=\bbisimdsym$, and so
    we can replace the second condition of
    Definition~\ref{def:bbisimd} by an interpolant of
    (\ref{cnd:newdivsim}) and (\ref{cnd:divsim}). For instance, the
    condition
    \begin{enumerate}\itemsep 0pt
    \item[]
      if $\states\brel\statet$
      and there is an infinite sequence of states
        $(\states[k])_{k\in\N}$
      such that
         $\states=\states[0]$
      and
         $\states[k]\step{\sact}\states[k+1]$ for all $k\in\N$,
      then there exists a state $\state{t'}$
      such that
        \plat{$\statet\plussteps\state{t'}$}
      and
        $\states[k]\brel\state{t'}$ for some $k\geq 0$
    \end{enumerate}
    is a convenient interpolant of (\ref{cnd:newdivsim}) and
    (\ref{cnd:divsim}) to use when showing that two states are
    branching bisimulation equivalent with explicit divergence.
  \end{remark}

  \section[Equivalence and stuttering]
  {\bbd{} is an equivalence with the stuttering property} \label{sec:eqst}

  Our goal is now to establish that the relational characterisations
  of branching bisimilarity with explicit divergence introduced in the
  previous section all coincide, that they are equivalences and that
  they enjoy the stuttering property. To this end, we first show that
  $\wbbisimdsym$ is an equivalence relation; condition
  (\ref{cnd:newdivsim}) will enable a direct proof of this fact. Using
  that $\wbbisimdsym$ is an equivalence, we obtain $\wbbisimdsym
  \subseteq \rbbisimdsym$.  Then, we define the notion of
  \emph{stuttering closure} and use it to establish $\bbisimdsym
  \subseteq \wbbisimdsym$. Together with the observation $\rbbisimdsym
  \subseteq \bbisimdsym$ made above, the cycle of inclusions yields
  that the relations $\rbbisimdsym$, $\wbbisimdsym$ and $\bbisimdsym$
  coincide. It then follows that $\rbbisimdsym$ is an equivalence.  We
  have not been able to find a less roundabout way to obtain this
  result. The intermediate results needed for the equivalence proof
  also yields that $\rbbisimdsym$ has the stuttering property.

  \subsection[]{$\wbbisimdsym$ is an equivalence} \label{subsec:wbbisimdeq}

The proofs below are rather straightforward. Nevertheless, the
proof strategy employed for Lemmas \ref{lem:relunion}
and~\ref{lem:relcomp} would fail for $\rbbisimdsym$, $\onestepbbisimd$
and $\bbisimdsym$. It is for this reason that we present all detail.

  \begin{lemma} \label{lem:relunion}
    Let $\{\brelsym_i\mid i\in I\}$ be a family of binary relations.
  \begin{enumerate}\itemsep 0pt
  \renewcommand{\theenumi}{\roman{enumi}}
  \renewcommand{\labelenumi}{(\theenumi)}
  \item
    If $\brelsym_i$ satisfies {\rm (\ref{cnd:stepsim})} for all $i\in
    I$, then so does the union $\bigcup_{i\in I}\brelsym_i$.
  \item \label{lem:relunion:item:newdivsim}
    If $\brelsym_i$ satisfies {\rm (\ref{cnd:newdivsim})} for all
    $i\in I$, then so does the union $\bigcup_{i\in I}\brelsym_i$.
  \end{enumerate}
  \end{lemma}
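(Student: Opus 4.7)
The plan is to reduce each statement to the corresponding condition on a single member of the family. Given any pair with $\states \brel \statet$ in the union $\brelsym = \bigcup_{i\in I}\brelsym_i$, pick an index $i$ such that $\states \brel[\brelsym_i] \statet$, apply the assumed condition to $\brelsym_i$, and observe that the witnesses it returns remain related under $\brelsym$ because $\brelsym_i \subseteq \brelsym$ for every $i$.

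For part (i), suppose $\states \step{\acta} \state{s'}$. Condition (\ref{cnd:stepsim}) applied to $\brelsym_i$ yields states $\state{t''}, \state{t'}$ with $\statet \ssteps \state{t''} \step{\opt{\acta}} \state{t'}$, $\states \brel[\brelsym_i] \state{t''}$ and $\state{s'} \brel[\brelsym_i] \state{t'}$; since $\brelsym_i \subseteq \brelsym$, the same holds with $\brelsym$ in place of $\brelsym_i$, which is exactly what (\ref{cnd:stepsim}) demands of the union.

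For part (ii), suppose $(\states[k])_{k\in\N}$ is an infinite sequence with $\states = \states[0]$ and $\states[k] \step{\sact} \states[k+1]$ for all $k\in\N$. Condition (\ref{cnd:newdivsim}) applied to $\brelsym_i$ directly delivers a sequence $(\statet[\ell])_{\ell\in\N}$ and a map $\statemap\colon\N\to\N$ with $\statet = \statet[0]$, $\statet[\ell] \step{\sact} \statet[\ell+1]$ and $\states[\statemap(\ell)] \brel[\brelsym_i] \statet[\ell]$ for all $\ell\in\N$. Re-embedding $\brelsym_i$ into $\brelsym$ finishes the case.

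I do not expect any real obstacle; the entire point of formulation (\ref{cnd:newdivsim}) is to make precisely this one-line argument succeed. The instructive observation, foreshadowed by the authors' preamble to the lemma, is \emph{why} the same strategy fails for (\ref{cnd:rvgdivsim}), (\ref{cnd:rvgdivsimshort}), (\ref{cnd:onestepdivsim}) and (\ref{cnd:divsim}): each of these carries the additional hypothesis that \emph{every} $\states[k]$ in the divergence from $\states$ be $\brel$-related to $\statet$. When $\brelsym$ is a union, each relation $\states[k]\brel\statet$ may be witnessed by a different index $i_k$, so there need not be any single $\brelsym_i$ that witnesses them all at once, and the condition cannot be invoked for one particular family member. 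Condition (\ref{cnd:newdivsim}) sidesteps this obstacle by dropping the per-state relatedness premise entirely, leaving only the clean existence of the divergence from $\states$, which is a property of the transition system rather than of the relation.
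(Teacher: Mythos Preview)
Your proposal is correct and mirrors the paper's proof essentially verbatim: pick an index $i$ with $\states\brel_i\statet$, apply the relevant condition to $\brelsym_i$, then pass the witnesses back to $\brelsym$ via $\brelsym_i\subseteq\brelsym$. Your closing paragraph explaining why the same strategy fails for (\ref{cnd:rvgdivsim}), (\ref{cnd:rvgdivsimshort}), (\ref{cnd:onestepdivsim}) and (\ref{cnd:divsim}) is also exactly the point the authors allude to in their preamble to the lemma.
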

  \begin{proof}
  Let $\brelsym=\bigcup_{i\in I}\brelsym_i$.
  \begin{enumerate}\itemsep 0pt
  \renewcommand{\theenumi}{\roman{enumi}}
  \renewcommand{\labelenumi}{(\theenumi)}
  \item Suppose that $\brelsym_i$ satisfies
    (\ref{cnd:stepsim}) for all $i \in I$.
    To prove that $\brelsym$ also satisfies (\ref{cnd:stepsim}),
    suppose that
      $\states\brel\statet$
    and
      $\states\step{\acta}\state{s'}$ for some state $\state{s'}$.
    Then $\states\brel_i\statet$ for some $i\in I$.
    Since $\brelsym_i$ satisfies
    (\ref{cnd:stepsim}), it follows that there are states
    $\state{t'}$ and $\state{t''}$ such that
      \plat{$\statet\ssteps{}\state{t''}\step{\opt{\acta}}\state{t'}$},
      $\states\brel_i\state{t''}$
    and
      $\state{s'}\brel_i\state{t'}$,
    and hence $\states\brel\state{t''}$ and
    $\state{s'}\brel\state{t'}$.
   \item Suppose that $\brelsym_i$ satisfies
     (\ref{cnd:newdivsim}) for all $i \in I$.
     To prove that $\brelsym$ satisfies (\ref{cnd:newdivsim}), suppose
     that $\states\brel\statet$ and that there is an infinite sequence
     of states $(\states[k])_{k\in\N}$ such that
       $\states=\states[0]$
     and
       $\states[k]\step{\sact}\states[k+1]$.
     From $\states\brel\statet$ it follows that
     $\states\brel_i\statet$ for some $i \in I$.
     By (\ref{cnd:newdivsim}) there exist an infinite sequence of states
       $(\statet[\ell])_{\ell\in\N}$
     and a mapping
       $\statemap:\N\rightarrow\N$
     such that
       $\statet=\statet[0]$,
       $\statet[\ell]\step{\sact}\statet[\ell+1]$
     and
       $\states[\statemap(\ell)]\brel_i\statet[\ell]$ for all $\ell\in\N$,
     and from the latter it follows that
       $\states[\statemap(\ell)]\brel\statet[\ell]$
     for all $\ell\in\N$.
  \qed
  \end{enumerate}
  \end{proof}

  \begin{lemma} \label{lem:isync}
    Let $\brelsym$ be a binary relation that satisfies
    {\rm (\ref{cnd:stepsim})}.
    If $\states\brel\statet$ and $\states\ssteps{}\state{s'}$,
    then there is a state $\state{t'}$ such that
      $\statet\ssteps{}\state{t'}$
    and
      $\state{s'}\brel\state{t'}$.
  \end{lemma}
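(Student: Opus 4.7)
The plan is a straightforward induction on the length $n$ of the $\sact$-sequence witnessing $\states\ssteps\state{s'}$.

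For the base case $n=0$ we have $\state{s'}=\states$, so we may simply take $\state{t'}=\statet$; then $\statet\ssteps\state{t'}$ by reflexivity and $\state{s'}\brel\state{t'}$ is exactly the assumption $\states\brel\statet$.

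For the inductive step, suppose $\states\step{\sact}\states[1]\ssteps\state{s'}$ where the second segment has length $n$. Applying condition~(\ref{cnd:stepsim}) to $\states\brel\statet$ and $\states\step{\sact}\states[1]$ yields states $\state{t''}$ and $\state{t_1}$ with $\statet\ssteps\state{t''}\step{\opt{\sact}}\state{t_1}$, $\states\brel\state{t''}$ and $\states[1]\brel\state{t_1}$. Because the last step is a $\sact$-step or an equality, in either case $\statet\ssteps\state{t_1}$. Now the induction hypothesis applied to $\states[1]\brel\state{t_1}$ and $\states[1]\ssteps\state{s'}$ supplies a state $\state{t'}$ with $\state{t_1}\ssteps\state{t'}$ and $\state{s'}\brel\state{t'}$; concatenating gives $\statet\ssteps\state{t'}$, as required.

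There is no real obstacle here: the only point to be careful about is the handling of the $\opt{\sact}$-step in condition~(\ref{cnd:stepsim}), but this is harmless because a single optional $\sact$-transition is absorbed into $\sstepssym$ (the reflexive-transitive closure of $\step{\sact}$). The intermediate state $\state{t''}$, although supplied by~(\ref{cnd:stepsim}), plays no role in this lemma; it will however be essential in later proofs that make use of the stuttering closure.
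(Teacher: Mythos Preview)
Your proof is correct and follows exactly the same approach as the paper: a straightforward induction on the length of the $\sact$-sequence from $\states$ to $\state{s'}$, applying condition~(\ref{cnd:stepsim}) once per step. The paper merely states that such an induction works without spelling out the details; you have filled them in correctly, including the harmless absorption of the $\opt{\sact}$-step into $\sstepssym$.
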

  \begin{proof}
    Let $\states[0],\dots,\states[n]$ be states such that
      $\states=
         \states[0]\step{\sact}\cdots\step{\sact}\states[n]
           =\state{s'}$.
    By (\ref{cnd:stepsim}) and a straightforward
    induction on $n$ there exist states
      $\statet[0],\dots,\statet[n]$
    such that
      $\statet=
         \statet[0]\ssteps{}\cdots\ssteps{}\statet[n]
           =\state{t'}$
    and $\states[i]\brel\statet[i]$ for all $i\leq n$.
  \end{proof}

  \begin{lemma} \label{lem:relcomp}
    Let $\brelsym_1$ and $\brelsym_2$ be binary relations.
  \begin{enumerate}\itemsep 0pt
  \renewcommand{\theenumi}{\roman{enumi}}
  \renewcommand{\labelenumi}{(\theenumi)}
  \item
    If $\brelsym_1$ and $\brelsym_2$ both satisfy
    {\rm (\ref{cnd:stepsim})}, then so does their composition
    $\brelsym_1\relcomp\brelsym_2$.
  \item
    If $\brelsym_1$ and $\brelsym_2$ both satisfy
    {\rm (\ref{cnd:newdivsim})}, then so does their composition
    $\brelsym_1\relcomp\brelsym_2$.
  \end{enumerate}
  \end{lemma}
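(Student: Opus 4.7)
The plan is to handle each part separately; neither requires cleverness, and the whole point of having introduced condition (\ref{cnd:newdivsim}) (as opposed to (\ref{cnd:rvgdivsim}) or (\ref{cnd:onestepdivsim})) is precisely to make part (ii) straightforward.

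For part (i), suppose $\states\,(\brelsym_1\relcomp\brelsym_2)\,\statet$ is witnessed by a state $\stateu$ with $\states\brel_1\stateu\brel_2\statet$, and suppose $\states\step{\acta}\state{s'}$. I would first apply (\ref{cnd:stepsim}) to $\brelsym_1$ at the pair $(\states,\stateu)$, producing states $\state{u''}$ and $\state{u'}$ with $\stateu\ssteps\state{u''}\step{\opt{\acta}}\state{u'}$, $\states\brel_1\state{u''}$, and $\state{s'}\brel_1\state{u'}$. Next, Lemma~\ref{lem:isync} applied to $\brelsym_2$ lifts the $\sact$-prefix $\stateu\ssteps\state{u''}$ to a sequence $\statet\ssteps\state{t''}$ with $\state{u''}\brel_2\state{t''}$. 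If the optional step is vacuous (i.e., $\acta=\sact$ and $\state{u''}=\state{u'}$), taking both the intermediate and the final witness on the $\statet$-side to be $\state{t''}$ closes the case, since then $\states,\state{s'}$ are both related to $\state{u''}$ by $\brelsym_1$ and $\state{u''}\brel_2\state{t''}$. Otherwise a second application of (\ref{cnd:stepsim}), now to $\brelsym_2$ at $(\state{u''},\state{t''})$, simulates the remaining genuine step $\state{u''}\step{\acta}\state{u'}$ and produces the required intermediate and target witnesses on the $\statet$-side.

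For part (ii), given $\states\brel_1\stateu\brel_2\statet$ and an infinite divergence $(\states[k])_{k\in\N}$ from $\states=\states[0]$, I would first apply (\ref{cnd:newdivsim}) to $\brelsym_1$ at $(\states,\stateu)$, obtaining an infinite $\sact$-sequence $(\stateu[m])_{m\in\N}$ with $\stateu=\stateu[0]$ and $\stateu[m]\step{\sact}\stateu[m+1]$, together with a mapping $\statemap_1\colon\N\to\N$ such that $\states[\statemap_1(m)]\brel_1\stateu[m]$ for all $m\in\N$. This new sequence is itself a $\sact$-divergence starting at $\stateu$, so (\ref{cnd:newdivsim}) applied to $\brelsym_2$ at $(\stateu,\statet)$ yields an infinite $\sact$-sequence $(\statet[\ell])_{\ell\in\N}$ from $\statet=\statet[0]$ and a mapping $\statemap_2\colon\N\to\N$ such that $\stateu[\statemap_2(\ell)]\brel_2\statet[\ell]$. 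Setting $\statemap:=\statemap_1\circ\statemap_2$, the chain $\states[\statemap(\ell)]\brel_1\stateu[\statemap_2(\ell)]\brel_2\statet[\ell]$ gives $\states[\statemap(\ell)]\,(\brelsym_1\relcomp\brelsym_2)\,\statet[\ell]$ for every $\ell\in\N$, as required.

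I do not anticipate any real obstacle. The reason part (ii) works so cleanly, in contrast to the earlier counterexamples for (\ref{cnd:rvgdivsim}) and (\ref{cnd:onestepdivsim}), is that (\ref{cnd:newdivsim}) has no side-condition demanding that every $\states[k]$ be related to $\statet$: the divergence produced on the $\stateu$-side by the first application can therefore be fed unchanged as input into the second application on the $\statet$-side, and the composed mapping takes care of the bookkeeping. Had one of the older conditions been used instead, it would have been necessary to verify that every $\stateu[m]$ is $\brelsym_2$-related to $\statet$, which need not hold in general.
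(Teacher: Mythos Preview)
Your proposal is correct and follows essentially the same approach as the paper's proof: both parts use the intermediate state to chain the two relations, with Lemma~\ref{lem:isync} handling the $\sact$-prefix in part~(i) and a case split on whether the optional step is vacuous, and in part~(ii) the composition of the two index mappings does the bookkeeping exactly as you describe. The only difference is cosmetic (you name the intermediate state $\stateu$ where the paper names it $\statet$, and vice versa).
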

  \begin{proof}
    Let $\brelsym=\brelsym_1\relcomp\brelsym_2$.
  \begin{enumerate}\itemsep 0pt
  \renewcommand{\theenumi}{\roman{enumi}}
  \renewcommand{\labelenumi}{(\theenumi)}
  \item To prove that $\brel$ satisfies
    (\ref{cnd:stepsim}), suppose $\states\brel\stateu$ and
    $\states\step{\acta}\state{s'}$.
    Then there exists a state $\statet$ such that
        $\states\brel_1\statet$ and $\statet\brel_2\stateu$.
    Since $\brelsym_1$ satisfies (\ref{cnd:stepsim}), there exist
    states $\state{t'}$ and $\state{t''}$ such that
        \plat{$\statet\ssteps{}\state{t''}\step{\opt{\acta}}\state{t'}$},
        $\states\brel_1\state{t''}$
    and
        $\state{s'}\brel_1\state{t'}$.
    By Lemma \ref{lem:isync} there is a state $\state{u''}$
    such that
      $\stateu\ssteps{}\state{u''}$
    and
      $\state{t''}\brel_2\state{u''}$.
    We now distinguish two cases:
    \begin{enumerate}\itemsep 0pt
    \item Suppose that $\acta=\sact$ and $\state{t''}=\state{t'}$.
      Then $\stateu\ssteps{}\state{u''}\step{\opt{\acta}}\state{u''}$,
      from
        $\states\brel_1\state{t''}$ and $\state{t''}\brel_2\state{u''}$
      it follows that
        $\states\brel\state{u''}$,
      and from
        $\state{s'}\brel_1\state{t'}$ and $\state{t'}\brel_2\state{u''}$
      it follows that
        $\state{s'}\brel\state{u''}$.
    \item Suppose that $\statet''\step{\acta}\state{t'}$.
      Then there exist states $\state{u'''}$ and $\state{u'}$ such
      that $\state{u''}\ssteps{}\state{u'''}\step{\opt{\acta}}\state{u'}$,
        $\state{t''}\brel_2\state{u'''}$
      and
        $\state{t'}\brel_2\state{u'}$.
      So, \plat{$\state{u}\ssteps{}\state{u'''}\step{\opt{\acta}}\state{u'}$},
      from
        $\states\brel_1\state{t''}$ and $\state{t''}\brel_2\state{u'''}$
      it follows that
        $\states \brel \state{u'''}$,
      and from
        $\state{s'}\brel_1\state{t'}$ and $\state{t'}\brel_2\state{u'}$
      it follows that
        $\state{s'}\brel\state{u'}$.
    \end{enumerate}
  \item To prove that $\brel$ satisfies (\ref{cnd:newdivsim}),
    suppose that
        $\states\brel\stateu$
    and that there is an infinite sequence of states
        $(\states[k])_{k\in\N}$
    such that
        $\states=\states[0]$,
        $\states[k]\step{\sact}\states[k+1]$ for all $k\in\N$.
    As before, there exists a state $\statet$ such that
        $\states\brel_1\statet$ and $\statet\brel_2\stateu$.
    From $\states\brel_1\statet$ it follows that there exist an
    infinite sequence of states
      $(\statet[\ell])_{\ell\in\N}$
    and a mapping
      $\statemap:\N\rightarrow\N$
    such that
      $\statet=\statet[0]$,
      $\statet[\ell]\step{\sact}\statet[\ell+1]$
    and
      $\states[\statemap(\ell)]\brel\statet[\ell]$ for all $\ell\in\N$.
    Hence, since $\statet\brel_2\stateu$, it follows that there exist
    an infinite sequence of states
      $(\stateu[m])_{m\in\N}$
    and a mapping
      $\statemap[\rho]:\N\rightarrow\N$
    such that
      $\stateu=\stateu[0]$,
      \plat{$\stateu[m]\step{\sact}\stateu[m+1]$}
    and
      $\statet[{\statemap[\rho](m)}]\brel_2\stateu[m]$ for all $m\in\N$.
    Clearly,
      $\states[\statemap({\statemap[\rho]}(m))]\brel\stateu[m]$
    for all $m\in\N$.
  \qed
  \end{enumerate}
  \end{proof}

  \begin{theorem} \label{theo:wbbisimdeq}
    $\wbbisimd$ is an equivalence.
  \end{theorem}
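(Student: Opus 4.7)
The plan is to verify reflexivity, symmetry and transitivity in turn, with transitivity being the only step that requires the structural lemmas just established.

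For \textbf{reflexivity}, I observe that the identity relation on $\States$ is symmetric, trivially satisfies (\ref{cnd:stepsim}), and satisfies (\ref{cnd:newdivsim}) by taking $\statet[\ell]=\states[\ell]$ and $\statemap$ to be the identity. Hence $\states\wbbisimd\states$ for every $\states\in\States$. For \textbf{symmetry}, if $\states\wbbisimd\statet$ via some relation $\brelsym$, then the very same relation, being symmetric, witnesses $\statet\wbbisimd\states$.

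For \textbf{transitivity}, suppose $\states\wbbisimd\statet$ and $\statet\wbbisimd\stateu$, witnessed by symmetric relations $\brelsym_1$ and $\brelsym_2$, respectively, each satisfying (\ref{cnd:stepsim}) and (\ref{cnd:newdivsim}). The obvious candidate witness is $\brelsym_1\relcomp\brelsym_2$, which by Lemma~\ref{lem:relcomp} does satisfy both conditions; however, a composition of symmetric relations need not be symmetric, so it does not directly qualify. The fix is to consider
\[
  \brelsym \;=\; (\brelsym_1\relcomp\brelsym_2) \;\cup\; (\brelsym_2\relcomp\brelsym_1)\enskip.
\]
Since $\brelsym_1$ and $\brelsym_2$ are symmetric, $(\brelsym_1\relcomp\brelsym_2)^{-1} = \brelsym_2\relcomp\brelsym_1$, so $\brelsym$ is symmetric. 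By Lemma~\ref{lem:relcomp}, both $\brelsym_1\relcomp\brelsym_2$ and $\brelsym_2\relcomp\brelsym_1$ satisfy (\ref{cnd:stepsim}) and (\ref{cnd:newdivsim}); by Lemma~\ref{lem:relunion}, these properties are preserved by arbitrary unions, so $\brelsym$ satisfies them as well. Since $\states\,(\brelsym_1\relcomp\brelsym_2)\,\stateu$ via the intermediate state $\statet$, we conclude $\states\brel\stateu$, and hence $\states\wbbisimd\stateu$.

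There is no real obstacle: the work has already been done in Lemmas~\ref{lem:relunion} and~\ref{lem:relcomp}. The only subtlety worth flagging is the symmetrisation trick in the transitivity step, which is precisely the reason one cannot get away with composing witnesses directly; this symmetrisation works here only because condition (\ref{cnd:newdivsim}) is preserved under arbitrary unions, a property that (as noted in the excerpt preceding the lemmas) fails for (\ref{cnd:rvgdivsim}), (\ref{cnd:onestepdivsim}) and (\ref{cnd:divsim}).
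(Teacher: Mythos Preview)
Your proof is correct and follows essentially the same approach as the paper: reflexivity via the identity relation, symmetry from the symmetry of the witness, and transitivity via the symmetrised composition $(\brelsym_1\relcomp\brelsym_2)\cup(\brelsym_2\relcomp\brelsym_1)$, invoking Lemmas~\ref{lem:relunion} and~\ref{lem:relcomp}. Your additional remark explaining why the symmetrisation trick works for (\ref{cnd:newdivsim}) but not for the other divergence conditions is apt and matches the paper's motivation for isolating this condition.
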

  \begin{proof}
    The diagonal on $\States$ (i.e., the binary relation
    $\{(\states,\states)\mid\states\in\States\}$) is a symmetric
    relation that satisfies (\ref{cnd:stepsim}) and
    (\ref{cnd:newdivsimshort}), so $\wbbisimd$ is reflexive.
    Furthermore, that $\wbbisimd$ is symmetric is immediate from the
    required symmetry of the witnessing relation.

    To prove that $\wbbisimd$ is transitive, suppose
    $\states\wbbisimd\statet$ and $\statet\wbbisimd\stateu$.
    Then there exist symmetric binary relations $\brelsym_1$ and
    $\brelsym_2$ satisfying (\ref{cnd:stepsim}) and
    (\ref{cnd:newdivsim}) such that
      $\states\brel_1\statet$ and $\statet\brel_2\stateu$.
    The relation
      $\brelsym={(\brel_1\relcomp\brel_2)\cup(\brel_2\relcomp\brel_1)}$
    is clearly symmetric and, by Lemmas \ref{lem:relunion} and
    \ref{lem:relcomp}, satisfies (\ref{cnd:stepsim}) and
    (\ref{cnd:newdivsim}). 
    Hence, since $\states\brel\stateu$, it follows that
    $\states\wbbisimd\stateu$.
  \end{proof}

  \subsection[]{$\wbbisimdsym$ is included in $\rbbisimdsym$}
  \label{subsec:winr}

  To prove the inclusion $\wbbisimdsym\subseteq\rbbisimdsym$ we
  establish that $\wbbisimdsym$ is a branching bisimulation with
  explicit divergence.

  \begin{lemma} \label{lem:wbbisimdlargest}
    The relation $\wbbisimdsym$ satisfies {\rm (\ref{cnd:stepsim})} and
    {\rm (\ref{cnd:newdivsim})}.
  \end{lemma}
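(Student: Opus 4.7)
The plan is to realise $\wbbisimdsym$ as the union of a family of relations each of which satisfies (\ref{cnd:stepsim}) and (\ref{cnd:newdivsim}), and then quote Lemma~\ref{lem:relunion} to conclude that the union satisfies both conditions as well.

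More precisely, let $\{\brelsym_i \mid i \in I\}$ be the family of \emph{all} symmetric binary relations on $\States$ that satisfy (\ref{cnd:stepsim}) and (\ref{cnd:newdivsim}). By the definition of $\wbbisimdsym$, we have $\states\wbbisimd\statet$ iff there exists some $i\in I$ with $\states\brel_i\statet$; hence $\wbbisimdsym = \bigcup_{i\in I}\brelsym_i$. Lemma~\ref{lem:relunion}(i) then immediately gives that $\wbbisimdsym$ satisfies (\ref{cnd:stepsim}), and Lemma~\ref{lem:relunion}(ii) gives that $\wbbisimdsym$ satisfies (\ref{cnd:newdivsim}).

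There is essentially no obstacle here; the proof is a one-line invocation of Lemma~\ref{lem:relunion}. The only point worth noting is that this is exactly the place where the proof strategy would \emph{fail} for the analogous statements about $\rbbisimdsym$, $\onestepbbisimdsym$ or $\bbisimdsym$, because the corresponding divergence conditions (\ref{cnd:rvgdivsim}), (\ref{cnd:onestepdivsim}) and (\ref{cnd:divsim}) are not preserved under unions (the conclusion requires relatedness to the given $\statet$ along the whole divergence, which a different relation in the family might not provide). The fact that (\ref{cnd:newdivsim}) drops the hypothesis ``$\states[k]\brel\statet$ for all $k$'' is precisely what makes unions work, and therefore what makes $\wbbisimdsym$ itself qualify as a witnessing relation.
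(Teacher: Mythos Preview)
Your proposal is correct and matches the paper's own proof essentially verbatim: the paper also observes that $\wbbisimdsym$ is, by definition, the union of all symmetric relations satisfying (\ref{cnd:stepsim}) and (\ref{cnd:newdivsim}), and then invokes Lemma~\ref{lem:relunion}. Your closing remark about why this argument fails for $\rbbisimdsym$, $\onestepbbisimdsym$ and $\bbisimdsym$ is also in line with the paper's own commentary preceding the lemma.
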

  \begin{proof}
    Directly from the definition it follows that $\wbbisimdsym$ is the
    union of all symmetric relations satisfying
    (\ref{cnd:stepsim}) and (\ref{cnd:newdivsim}), so, using
    Lemma~\ref{lem:relunion}, $\wbbisimdsym$ itself satisfies
    (\ref{cnd:stepsim}) and (\ref{cnd:newdivsim}). \vspace{1ex}
  \end{proof}
    \noindent
    In fact, it is now clear that $\wbbisimdsym$ is the largest
    symmetric binary relation satisfying (\ref{cnd:stepsim}) and
    (\ref{cnd:newdivsim}).

  \begin{lemma} \label{lem:wbbisimdrvgdivsim}
    The relation $\wbbisimdsym$ satisfies {\rm (\ref{cnd:rvgdivsim})}.
  \end{lemma}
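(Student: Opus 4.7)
The plan is to leverage the fact that $\wbbisimdsym$ is already known to satisfy the weaker condition (\ref{cnd:newdivsim}) (by Lemma~\ref{lem:wbbisimdlargest}), and that it is an equivalence (by Theorem~\ref{theo:wbbisimdeq}), to upgrade the ``some $k$'' style conclusion of (\ref{cnd:newdivsim}) to the ``every $k,\ell$'' style conclusion of (\ref{cnd:rvgdivsim}). The decisive extra information available in the hypothesis of (\ref{cnd:rvgdivsim}) is that \emph{every} state $\states[k]$ on the divergence from $\states$ is already $\wbbisimd$-related to $\statet$.

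Concretely, suppose $\states\wbbisimd\statet$ and there is an infinite sequence $(\states[k])_{k\in\N}$ with $\states=\states[0]$, $\states[k]\step{\sact}\states[k+1]$, and $\states[k]\wbbisimd\statet$ for every $k\in\N$. Applying Lemma~\ref{lem:wbbisimdlargest} to $\wbbisimdsym$ together with this divergence, I obtain an infinite sequence $(\statet[\ell])_{\ell\in\N}$ and a mapping $\statemap:\N\rightarrow\N$ such that $\statet=\statet[0]$, $\statet[\ell]\step{\sact}\statet[\ell+1]$ and $\states[\statemap(\ell)]\wbbisimd\statet[\ell]$ for all $\ell\in\N$.

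It remains to show that $\states[k]\wbbisimd\statet[\ell]$ for every $k,\ell\in\N$, and this is where the extra hypothesis pays off. For each $\ell\in\N$ we have $\states[\statemap(\ell)]\wbbisimd\statet$ by the hypothesis of (\ref{cnd:rvgdivsim}), and $\states[\statemap(\ell)]\wbbisimd\statet[\ell]$ from the previous step; by symmetry and transitivity of $\wbbisimdsym$ (Theorem~\ref{theo:wbbisimdeq}) this gives $\statet\wbbisimd\statet[\ell]$. For any $k\in\N$ we also have $\states[k]\wbbisimd\statet$, so one more application of transitivity yields $\states[k]\wbbisimd\statet[\ell]$, as required.

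No real obstacle is foreseen: the proof is essentially a two-step chase through $\statet$ using the equivalence relation properties of $\wbbisimdsym$. The only point worth emphasising is that without Theorem~\ref{theo:wbbisimdeq} (transitivity of $\wbbisimdsym$) the argument would collapse, which explains why the order of results in Section~\ref{sec:eqst}, treating transitivity before (\ref{cnd:rvgdivsim}), is essential.
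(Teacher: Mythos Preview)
Your proof is correct and essentially identical to the paper's: both apply Lemma~\ref{lem:wbbisimdlargest} to obtain the $(\statet[\ell])$-sequence via (\ref{cnd:newdivsim}), and then use that $\wbbisimdsym$ is an equivalence (Theorem~\ref{theo:wbbisimdeq}) together with the hypothesis $\states[k]\wbbisimd\statet$ to chain $\states[k]\wbbisimd\statet$, $\states[\statemap(\ell)]\wbbisimd\statet$, $\states[\statemap(\ell)]\wbbisimd\statet[\ell]$ into $\states[k]\wbbisimd\statet[\ell]$ for all $k,\ell$. Your closing remark about the necessity of having transitivity first exactly matches the paper's organisation.
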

  \begin{proof}
    Suppose that
      $\states\wbbisimd\statet$
    and that there is an infinite sequence of states
      $(\states[k])_{k\in\N}$
    such that
      $\states=\states[0]$,
      $\states[k]\step{\sact}\states[k+1]$
    and
      $\states[k]\wbbisimd\statet$ for all $k\in\N$.
    According to Lemma~\ref{lem:wbbisimdlargest}, the relation
    $\wbbisimdsym$ satisfies (\ref{cnd:newdivsim}), so there exist an
    infinite sequence of states
      $(\statet[\ell])_{\ell\in\N}$
    and a mapping
      $\statemap:\N\rightarrow\N$
    such that
      $\statet=\statet[0]$,
      $\statet[\ell]\step{\sact}\statet[\ell+1]$
    and
      $\states[\statemap(\ell)]\wbbisimd\statet[\ell]$ for all $\ell\in\N$.
    By Theorem~\ref{theo:wbbisimdeq}, $\wbbisimdsym$ is an
    equivalence, so it follows from
      $\states[k]\wbbisimd\statet$,
      $\states[\statemap(\ell)]\wbbisimd\statet$
    and
      $\states[\statemap(\ell)]\wbbisimd\statet[\ell]$
    that
      $\states[k]\wbbisimd\statet[\ell]$ for all $k,\ell\in\N$.
    Hence $\wbbisimdsym$ satisfies (\ref{cnd:rvgdivsim}).
  \end{proof}

  \begin{theorem} \label{theo:wbbisimdinclrbbisimd}
    $\wbbisimdsym\subseteq\rbbisimdsym$.
  \end{theorem}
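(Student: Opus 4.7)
The plan is essentially to observe that all the hard work has already been done in Lemmas~\ref{lem:wbbisimdlargest} and~\ref{lem:wbbisimdrvgdivsim}, and to assemble the pieces. Recall that $\rbbisimdsym$ is defined as the union of all symmetric binary relations satisfying both (\ref{cnd:stepsim}) and (\ref{cnd:rvgdivsim}); thus, to show $\wbbisimdsym \subseteq \rbbisimdsym$, it suffices to verify that $\wbbisimdsym$ itself is such a relation, because then for every pair $\states\wbbisimd\statet$ the relation $\wbbisimdsym$ serves as a witness that $\states\rbbisimd\statet$.

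I would therefore proceed as follows. First, I would note that $\wbbisimdsym$ is symmetric: by definition it is the union over all symmetric relations $\brelsym$ satisfying (\ref{cnd:stepsim}) and (\ref{cnd:newdivsim}), and the union of a family of symmetric relations is symmetric. Next, I would invoke Lemma~\ref{lem:wbbisimdlargest} to obtain that $\wbbisimdsym$ satisfies (\ref{cnd:stepsim}). Finally, I would invoke Lemma~\ref{lem:wbbisimdrvgdivsim} to obtain that $\wbbisimdsym$ satisfies (\ref{cnd:rvgdivsim}). Together these three facts say exactly that $\wbbisimdsym$ is a branching bisimulation with explicit divergence in the sense of Definition~\ref{def:bbisimd}, from which the inclusion is immediate.

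There is no genuine obstacle here; the proof is a one-line synthesis of the preceding two lemmas together with the definitional observation about symmetry. The real content of the theorem lies upstream, in the equivalence of $\wbbisimdsym$ (Theorem~\ref{theo:wbbisimdeq}), which was what allowed Lemma~\ref{lem:wbbisimdrvgdivsim} to go through: transitivity of $\wbbisimdsym$ is precisely what lets us upgrade the existentially quantified conclusion of (\ref{cnd:newdivsim}) (some $\states[\statemap(\ell)]\wbbisimd\statet[\ell]$) to the universally quantified conclusion of (\ref{cnd:rvgdivsim}) (all $\states[k]\wbbisimd\statet[\ell]$). So the ``hard part'' was really dispatched already in the previous subsection, and what remains here is merely to record the consequence.
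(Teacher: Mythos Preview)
Your proposal is correct and follows essentially the same approach as the paper: verify that $\wbbisimdsym$ is symmetric, satisfies (\ref{cnd:stepsim}) via Lemma~\ref{lem:wbbisimdlargest}, and satisfies (\ref{cnd:rvgdivsim}) via Lemma~\ref{lem:wbbisimdrvgdivsim}, hence is itself a branching bisimulation with explicit divergence. The only cosmetic difference is that the paper cites Theorem~\ref{theo:wbbisimdeq} for symmetry (as a consequence of $\wbbisimdsym$ being an equivalence), whereas you argue it directly from the union-of-symmetric-relations observation; both are fine.
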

  \begin{proof}
    By Theorem~\ref{theo:wbbisimdeq}, the relation $\wbbisimdsym$ is symmetric.
    By Lemma~\ref{lem:wbbisimdlargest}, it satisfies (\ref{cnd:stepsim}), and
    by Lemma~\ref{lem:wbbisimdrvgdivsim} it satisfies
    (\ref{cnd:rvgdivsim}).  So $\wbbisimdsym$ is a branching
    bisimulation with explicit divergence, and hence
    $\states\wbbisimd\statet$ implies $\states\rbbisimd\statet$.
  \end{proof}

\subsection[]{Stuttering closure}

  \begin{definition} \label{df:stuttering}
    A binary relation $\brelsym$
    has the \emph{stuttering property} if, whenever
    $\statet[0]\step{\sact}\cdots\step{\sact}\statet[n]$,
    $\states\brel\statet[0]$ and
    $\states\brel\statet[n]$, then
    $\states\brel\statet[i]$ for all $i=0,\dots,n$.
  \end{definition}

  \noindent
  The following operation converts any binary relation $\brelsym$ on
  $\States$ into a larger relation $\bbdscsym$ that has the stuttering
  property.

  \begin{definition}
    Let $\brelsym$ be a binary relation on $\States$.
    The \emph{stuttering closure} $\bbdscsym$ of $\brelsym$ is defined by
    \begin{equation*}
      \bbdscsym =
        \{(\states,\statet)
             \mid \is{\bstate{s},\astate{s},\bstate{t},\astate{t}\in\States}.\
                    \bstate{s}\ssteps{}\state{s}\ssteps{}\astate{s}\ \&\
                    \bstate{t}\ssteps{}\state{t}\ssteps{}\astate{t}\ \&\
                    \bstate{s}\brel\astate{t}\ \&\
                    \astate{s}\brel\bstate{t}
        \}
    \enskip.
    \end{equation*}
  \end{definition}

\begin{figure}[thb]
\begin{center}\input{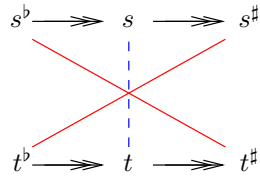}\end{center}
\caption{Stuttering closure.}
\label{fig:stutclos}
\end{figure}

  \noindent
  Figure~\ref{fig:stutclos} illustrates the notion of stuttering
  closure.
  Clearly $\brelsym \subseteq \bbdscsym$.
  We establish a few basic properties of the stuttering closure.

  \begin{lemma} \label{lem:stuttering}
    The stuttering closure of a binary relation has the
    \emph{stuttering property}.
  \end{lemma}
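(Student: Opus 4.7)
My plan is to cross-combine the two witnesses supplied by the hypotheses $\states\bbdsc\statet[0]$ and $\states\bbdsc\statet[n]$ into a single witness for $\states\bbdsc\statet[i]$.

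Suppose $\statet[0]\step{\sact}\cdots\step{\sact}\statet[n]$, $\states\bbdsc\statet[0]$ and $\states\bbdsc\statet[n]$. Unfolding the definition of the stuttering closure, the first hypothesis yields states $\bstate{s}_A,\astate{s}_A,\bstate{t}_A,\astate{t}_A$ with $\bstate{s}_A\ssteps\states\ssteps\astate{s}_A$, $\bstate{t}_A\ssteps\statet[0]\ssteps\astate{t}_A$, $\bstate{s}_A\brel\astate{t}_A$ and $\astate{s}_A\brel\bstate{t}_A$; the second hypothesis analogously yields states $\bstate{s}_B,\astate{s}_B,\bstate{t}_B,\astate{t}_B$, where in particular $\bstate{t}_B\ssteps\statet[n]\ssteps\astate{t}_B$.

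For each $i\in\{0,\dots,n\}$ I claim that the tuple $(\bstate{s}_B,\astate{s}_A,\bstate{t}_A,\astate{t}_B)$ is a valid witness for $\states\bbdsc\statet[i]$. On the $s$-side, $\bstate{s}_B\ssteps\states\ssteps\astate{s}_A$ is obtained directly from the two witnesses. On the $t$-side, the combined path $\bstate{t}_A\ssteps\statet[0]\step{\sact}\cdots\step{\sact}\statet[n]\ssteps\astate{t}_B$ passes through $\statet[i]$, which yields both $\bstate{t}_A\ssteps\statet[i]$ and $\statet[i]\ssteps\astate{t}_B$. The two required cross-relations $\bstate{s}_B\brel\astate{t}_B$ and $\astate{s}_A\brel\bstate{t}_A$ are exactly those supplied by the second and first witness, respectively.

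I do not anticipate a real obstacle: once the cross-combination is spotted, verification is a routine concatenation of $\sstepssym$-paths. The reason one must mix the two witnesses is that $\astate{t}_A$ (in the $\tau$-future of $\statet[0]$) need not lie beyond $\statet[i]$, and symmetrically $\bstate{t}_B$ need not lie before $\statet[i]$; taking $\bstate{t}_A$ and $\astate{t}_B$ instead straddles $\statet[i]$ correctly, and the matching choices $\bstate{s}_B$ and $\astate{s}_A$ on the $s$-row are precisely what is needed so that the two original cross-relations can be reused without modification.
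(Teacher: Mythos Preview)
Your proof is correct and follows essentially the same approach as the paper: cross-combining parts of the two witnesses so that on the $t$-side one obtains $\bstate{t}_A\ssteps\statet[i]\ssteps\astate{t}_B$ via the given $\sact$-path, while the $s$-side and the two $\brel$-pairs are reused unchanged. The paper's version is only slightly more economical, extracting from each hypothesis just the two states that are actually used (your $\astate{s}_A,\bstate{t}_A$ from the first and $\bstate{s}_B,\astate{t}_B$ from the second) rather than naming all eight.
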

  \begin{proof}
    Let $\brelsym$ be a binary relation and let $\bbdscsym$ be its
    stuttering closure.
    To show that $\bbdscsym$ has the stuttering property, suppose that
      $\statet[0]\step{\sact}\cdots\step{\sact}\statet[n]$,
      $\states\bbdsc\statet[0]$
    and
      $\states\bbdsc\statet[n]$.
    Then, on the one hand, there exist states $\astate{s}$ and
    $\bstate{t_0}$ such that
      $\states\ssteps{}\astate{s}$,
      $\bstate{t_0}\ssteps{}\statet[0]$
    and
      $\astate{s}\brel\bstate{t_0}$,
    and on the other hand there exist states $\bstate{s}$ and
    $\astate{t_n}$ such that
      $\bstate{s}\ssteps{}\states$,
      $\statet[n]\ssteps{}\astate{t_n}$
    and
      $\bstate{s}\brel\astate{t_n}$.
    Now, since
      $\bstate{s}\ssteps{}\states\ssteps{}\astate{s}$
    and
      $\bstate{t_0}\ssteps{}\statet[i]\ssteps{}\astate{t_n}$
       for all $i=0,\dots,n$,
    it follows that $\states\bbdsc\statet[i]$.
  \end{proof}

  \begin{remark}
    The stuttering closure $\bbdscsym$ of a binary relation $\brel$ is
    (contrary to what our terminology may suggest) not necessarily the
    smallest relation containing $\brel$ with the stuttering property.
    For a counterexample, consider a transition system with states
    $\bstate{s}$, $\astate{s}$, $\bstate{t}$ and $\astate{t}$ and
    transitions $\bstate{s}\step{\silent}\astate{s}$ and
    $\bstate{t}\step{\silent}\astate{t}$; the binary relation
     $$\brelsym=\{(\bstate{s},\astate{t}),(\astate{t},\bstate{s}),
                  (\astate{s},\bstate{t}),(\bstate{t},\astate{s}),
                  (\astate{s},\astate{t}),(\astate{t},\astate{s})\}$$
    has the stuttering property, but $\bbdscsym$ has additionally the
    pairs $(\bstate{s},\bstate{t})$ and $(\bstate{t},\bstate{s})$.
  \end{remark}

  \begin{lemma} \label{lem:bbdscsym}
    The stuttering closure $\bbdsc$ of a symmetric binary relation
    $\brelsym$ is symmetric.
  \end{lemma}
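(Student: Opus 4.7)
The plan is to unfold the definition of the stuttering closure and simply swap the roles of the two ``diagonal'' witness pairs, relying on the symmetry of $\brelsym$. Suppose $\states\bbdsc\statet$. Then there exist states $\bstate{s},\astate{s},\bstate{t},\astate{t}$ such that $\bstate{s}\ssteps\states\ssteps\astate{s}$, $\bstate{t}\ssteps\statet\ssteps\astate{t}$, $\bstate{s}\brel\astate{t}$ and $\astate{s}\brel\bstate{t}$.

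To establish $\statet\bbdsc\states$, I would use the very same four witnesses, but regard $\bstate{t}$ and $\astate{t}$ as playing the role of ``$\bstate{s}$'' and ``$\astate{s}$'' for $\statet$, and symmetrically $\bstate{s},\astate{s}$ for $\states$. The two $\ssteps$-conditions then hold by assumption. For the crossed relational conditions, since $\brelsym$ is symmetric, $\bstate{s}\brel\astate{t}$ yields $\astate{t}\brel\bstate{s}$, and $\astate{s}\brel\bstate{t}$ yields $\bstate{t}\brel\astate{s}$, which are exactly the two $\brel$-facts needed to witness $\statet\bbdsc\states$.

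There is essentially no obstacle here: the definition of $\bbdsc$ was evidently designed to treat the two components symmetrically, so symmetry of $\brelsym$ transfers directly. The proof is a one-liner once the witnesses are unpacked.
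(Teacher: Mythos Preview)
Your proposal is correct and essentially identical to the paper's proof: both unfold the definition of $\bbdscsym$, invoke the symmetry of $\brelsym$ to flip $\bstate{s}\brel\astate{t}$ and $\astate{s}\brel\bstate{t}$ into $\astate{t}\brel\bstate{s}$ and $\bstate{t}\brel\astate{s}$, and conclude $\statet\bbdsc\states$.
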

  \begin{proof}
    Suppose $\states\bbdsc\statet$; then there exist states
    $\bstate{s}$, $\astate{s}$, $\bstate{t}$ and $\astate{t}$ such
    that
      $\bstate{s}\ssteps{}\states\ssteps{}\astate{s}$,
      $\bstate{t}\ssteps{}\statet\ssteps{}\astate{t}$,
      $\bstate{s}\brel\astate{t}$
    and
      $\astate{s}\brel\bstate{t}$.
    Since $\brel$ is symmetric, it follows that
      $\bstate{t}\brel\astate{s}$
    and
      $\astate{t}\brel\bstate{s}$.
    Hence $\statet\bbdsc\states$.
  \end{proof}

  \begin{lemma} \label{lem:bbdscsync}
    Let $\bbdscsym$ be the stuttering closure of $\brelsym$.
    If $\brelsym$ satisfies {\rm (\ref{cnd:stepsim})} and
    $\states\bbdsc\statet$, then there exists $\state{t'}$ such
    that $\statet\ssteps{}\state{t'}$ and $\state{s}\brel\state{t'}$.
  \end{lemma}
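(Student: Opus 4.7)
The plan is to use the witnesses supplied by the definition of stuttering closure together with Lemma~\ref{lem:isync}. Unpacking $\states\bbdsc\statet$ gives states $\bstate{s},\astate{s},\bstate{t},\astate{t}$ such that $\bstate{s}\ssteps\states\ssteps\astate{s}$, $\bstate{t}\ssteps\statet\ssteps\astate{t}$, $\bstate{s}\brel\astate{t}$ and $\astate{s}\brel\bstate{t}$.

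The bridge to exploit is $\bstate{s}\brel\astate{t}$: it sits to the left of $\states$ on the upper path and to the right of $\statet$ on the lower path, which is exactly the configuration Lemma~\ref{lem:isync} consumes. Applying that lemma with the $\brel$-related pair $(\bstate{s},\astate{t})$ and the silent path $\bstate{s}\ssteps\states$, we obtain a state $\state{t'}$ with $\astate{t}\ssteps\state{t'}$ and $\states\brel\state{t'}$.

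Since $\statet\ssteps\astate{t}$ by the choice of $\astate{t}$ and $\ssteps$ is transitive, we also have $\statet\ssteps\state{t'}$, which together with $\states\brel\state{t'}$ is precisely the conclusion required. No real obstacle is anticipated; the only subtlety is choosing the correct bridge of the two supplied by the stuttering closure. Using $(\astate{s},\bstate{t})$ instead would push $\ssteps$-steps from $\bstate{t}$, producing a successor of $\astate{s}$ rather than one reachable from $\statet$, which would not answer the statement.
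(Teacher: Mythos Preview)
Your proof is correct and essentially identical to the paper's own argument: both unpack the stuttering closure to obtain the four witnesses, apply Lemma~\ref{lem:isync} to the pair $\bstate{s}\brel\astate{t}$ along the silent path $\bstate{s}\ssteps\states$, and then use $\statet\ssteps\astate{t}$ to conclude $\statet\ssteps\state{t'}$. Your closing remark about the alternative bridge is extraneous and slightly garbled, but the proof itself needs no change.
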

  \begin{proof}
    Suppose $\states\bbdsc\statet$; then there exist states
    $\bstate{s}$, $\astate{s}$, $\bstate{t}$ and $\astate{t}$ such
    that
      $\bstate{s}\ssteps{}\states\ssteps{}\astate{s}$,
      $\bstate{t}\ssteps{}\statet\ssteps{}\astate{t}$,
      $\bstate{s}\brel\astate{t}$
    and
      $\astate{s}\brel\bstate{t}$.
    From $\bstate{s}\brel\astate{t}$ and
    $\bstate{s}\ssteps{}\states$
    it follows by Lemma~\ref{lem:isync} that there exists
    $\state{t'}$ such that
    $(\statet\mathalpha{\ssteps{}})\,\astate{t}\ssteps{}t'$ and
    $\states\brel\state{t'}$.
  \end{proof}

  \begin{lemma} \label{lem:bbdscstepsim}
    If $\brelsym$ satisfies {\rm (\ref{cnd:stepsim})}, then so does
    its stuttering closure $\bbdscsym$.
  \end{lemma}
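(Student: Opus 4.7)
The plan is to reduce the claim for $\bbdscsym$ directly to the hypothesis that $\brelsym$ itself satisfies (\ref{cnd:stepsim}), using Lemma~\ref{lem:bbdscsync} as the bridge and the trivial inclusion $\brelsym\subseteq\bbdscsym$ to promote $\brelsym$-pairs back up to $\bbdscsym$-pairs in the conclusion.

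Concretely, suppose $\states\bbdsc\statet$ and $\states\step{\acta}\state{s'}$. By Lemma~\ref{lem:bbdscsync} applied to $\states\bbdsc\statet$, there exists a state $\state{t_0}$ with $\statet\ssteps\state{t_0}$ and $\states\brel\state{t_0}$. Since $\brelsym$ satisfies (\ref{cnd:stepsim}) by assumption, applying it to $\states\brel\state{t_0}$ together with $\states\step{\acta}\state{s'}$ yields states $\state{t''}$ and $\state{t'}$ such that $\state{t_0}\ssteps\state{t''}\step{\opt{\acta}}\state{t'}$, $\states\brel\state{t''}$, and $\state{s'}\brel\state{t'}$. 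Concatenating the two $\ssteps$-segments gives $\statet\ssteps\state{t''}\step{\opt{\acta}}\state{t'}$, which is the required matching move.

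It remains to lift the two $\brelsym$-relatednesses to $\bbdscsym$. This is immediate from the fact that $\brelsym\subseteq\bbdscsym$: for any $\state{u}\brel\state{v}$, choosing $\bstate{u}=\astate{u}=\state{u}$ and $\bstate{v}=\astate{v}=\state{v}$ in the definition of stuttering closure (using reflexivity of $\ssteps$) witnesses $\state{u}\bbdsc\state{v}$. Hence $\states\bbdsc\state{t''}$ and $\state{s'}\bbdsc\state{t'}$, completing the verification of (\ref{cnd:stepsim}) for $\bbdscsym$.

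There is really no serious obstacle here; the content of the proof sits entirely in Lemma~\ref{lem:bbdscsync}, which has already done the work of extracting a genuine $\brelsym$-successor of $\statet$ reachable by $\sact$-steps from an arbitrary $\bbdscsym$-partner $\statet$. Once that $\brelsym$-witness is in hand, the step-simulation condition for $\brelsym$ handles the transition $\states\step{\acta}\state{s'}$ directly, and the resulting successors already lie in $\brelsym$, hence \emph{a fortiori} in $\bbdscsym$, so no further stuttering manipulations are needed.
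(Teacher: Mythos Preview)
Your proof is correct and follows essentially the same route as the paper: invoke Lemma~\ref{lem:bbdscsync} to obtain an $\brelsym$-related state reachable from $\statet$ by $\sact$-steps, apply (\ref{cnd:stepsim}) for $\brelsym$ there, concatenate the $\sact$-paths, and use $\brelsym\subseteq\bbdscsym$ to conclude. The only difference is cosmetic (you write $\state{t_0}$ where the paper writes $\state{t^{\dagger}}$, and you spell out the inclusion $\brelsym\subseteq\bbdscsym$ explicitly).
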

  \begin{proof}
    Suppose that $\states\bbdsc\statet$ and that
    $\states\step{\acta}\state{s'}$ for some $\state{s'}$.
    Then by Lemma~\ref{lem:bbdscsync} there exists $\state{t^{\dagger}}$
    such that $\statet\ssteps{}\state{t^{\dagger}}$ and
    $\states\brel\state{t^{\dagger}}$.
    Hence, since $\states\step{\acta}\state{s'}$, it follows by
    (\ref{cnd:stepsim}) that there exist states $\state{t''}$ and
    $\state{t'}$ such that
    \begin{equation*}
      (\statet\mathalpha{\ssteps{}})\,
      \state{t^{\dagger}}\ssteps{}\state{t''}\step{\opt{\acta}}\state{t'},\
      \states\brel\state{t''}\
    \text{and}\
      \state{s'}\brel\state{t'}
    \enskip.
    \end{equation*}
    Now, $\states\brel\state{t''}$ and $\state{s'}\brel\state{t'}$
    respectively imply $\states\bbdsc\state{t''}$ and
    $\state{s'}\bbdsc\state{t'}$.
  \end{proof}

  \subsection[]{Closing the cycle of inclusions} \label{subsec:closing}

  Using the notion of stuttering closure we can now prove
  $\bbisimdsym\subseteq\wbbisimdsym$, thereby closing the cycle of
  inclusions. To prove the inclusion we establish that if $\brelsym$
  is a witnessing relation for $\bbisimdsym$, then $\bbdscsym$ is a
  witnessing relation for $\wbbisimdsym$.

  \begin{lemma} \label{lem:bbdscnewdivsim}
    If $\brelsym$ satisfies  {\rm (\ref{cnd:stepsim})} and\
    {\rm (\ref{cnd:divsim})}, then $\bbdscsym$ satisfies
    {\rm (\ref{cnd:newdivsim})}.
  \end{lemma}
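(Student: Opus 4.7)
By Proposition~\ref{prop:short} it suffices to prove that $\bbdscsym$ satisfies condition~(\ref{cnd:newdivsimshort}). So suppose $\states \bbdsc \statet$ and that we are given an infinite $\sact$-divergence $\states = \states[0] \step{\sact} \states[1] \step{\sact} \cdots$; the goal is to exhibit a state $\state{t'}$ with $\statet \step{\sact} \state{t'}$ and $\states[k] \bbdsc \state{t'}$ for some $k \in \N$. I start by unfolding the stuttering-closure witnesses for $\states \bbdsc \statet$: fix states $\bstate{s}, \astate{s}, \bstate{t}, \astate{t}$ with $\bstate{s} \ssteps \states \ssteps \astate{s}$, $\bstate{t} \ssteps \statet \ssteps \astate{t}$, $\bstate{s} \brel \astate{t}$ and $\astate{s} \brel \bstate{t}$.

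Prepending the finite $\sact$-path $\bstate{s} \ssteps \states$ to the given divergence yields an infinite $\sact$-sequence $\bstate{s} = \state{a_0} \step{\sact} \state{a_1} \step{\sact} \cdots$ with $\state{a_{n+k}} = \states[k]$ for the appropriate offset $n$. Using (\ref{cnd:stepsim}) iteratively (as in the proof of Lemma~\ref{lem:isync}), starting from $\bstate{s} \brel \astate{t}$, I build a matching, possibly-stuttering sequence $\astate{t} = \state{u_0} \ssteps \state{u_1} \ssteps \cdots$ on the $\statet$-side with $\state{a_m} \brel \state{u_m}$ for all $m$. Then I case split on whether this right-hand sequence is eventually stationary. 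If it is not, then $\astate{t}$ itself admits an infinite divergence, and hence so does $\statet$; I take $\state{t'}$ as the first $\sact$-successor of $\statet$ along the resulting divergence. If it stabilises at some $\state{u_N}$, then every $\states[k]$ with $n + k \geq N$ is $\brel$-related to $\state{u_N}$, so the tail $\states[k] \step{\sact} \states[k+1] \step{\sact} \cdots$ consists entirely of states $\brel$-related to $\state{u_N}$; invoking $\brel$'s condition~(\ref{cnd:divsim}) on this data produces $\state{u^*}$ with $\state{u_N} \plussteps \state{u^*}$ and $\states[k_0] \brel \state{u^*}$ for some $k_0$. Since $\statet \ssteps \astate{t} \ssteps \state{u_N} \plussteps \state{u^*}$ contains at least one $\sact$-step, I can take $\state{t'}$ as the first successor of $\statet$ along this path.

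In both cases the final task is to exhibit a witness quadruple for $\states[k] \bbdsc \state{t'}$ for a suitably chosen $k$, and this is the main obstacle I expect. The naive choice $(\states[k], \states[k], \state{t'}, \state{u^*})$ supplies the forward cross-relation $\states[k] \brel \state{u^*}$ but not the backward one $\states[k] \brel \state{t'}$. The remedy is to select $\bstate{s'}$ and $\astate{s'}$ further along the $\states[\cdot]$-divergence so that the paired iterate $\state{u_m}$ from the construction above descends below $\state{t'}$; the synchronous matching then delivers the required backward cross-relation, while the forward one is obtained from $\bstate{s} \brel \astate{t}$ extended via Lemma~\ref{lem:isync} along the continuation of the $\bstate{s}$-side sequence. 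The bookkeeping is finicky but routine once $\state{t'}$ has been correctly identified in the right segment of the path $\statet \ssteps \state{u^*}$.
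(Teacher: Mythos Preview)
Your overall strategy---reduce to (\ref{cnd:newdivsimshort}), build a matched $\sact$-sequence $(\state{u_m})$ on the $\statet$-side via iterated (\ref{cnd:stepsim}) starting from $\bstate{s}\brel\astate{t}$, and case-split on whether it stabilises---is reasonable, but the step you dismiss as ``finicky but routine'' is where the argument actually breaks. Every pair $\state{a_m}\brel\state{u_m}$ your construction produces has its right-hand side at or beyond $\astate{t}$; in the critical sub-case $\statet=\astate{t}$ these all lie at or beyond $\statet$, so none of them can serve as $(t')^\flat$, which must lie \emph{before} $\state{t'}$. The only genuinely backward pair at your disposal is the original $\astate{s}\brel\bstate{t}$, but using it forces $(\states[k])^\sharp=\astate{s}$, hence $\states[k]\ssteps\astate{s}$---and this is only known for $k=0$, since $\astate{s}$ need not lie on the divergence $\states[0],\states[1],\dots$ at all. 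With $k=0$ the forward witness then fails in precisely the sub-case you are trying to cover. Your description also swaps the two directions: the synchronous matching supplies a pair of the flat--sharp form $(\states[k])^\flat\brel (t')^\sharp$, which is the relation you already had, not the sharp--flat one you are missing.

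The paper sidesteps all of this by never attempting to build explicit stuttering-closure quadruples. It first invokes Lemma~\ref{lem:bbdscsync} to obtain $\statet=\statet[0]\step{\sact}\cdots\step{\sact}\statet[n]$ with $\states\brel\statet[n]$; since then $\states\bbdsc\statet[0]$ and $\states\bbdsc\statet[n]$, Lemma~\ref{lem:stuttering} (the stuttering property of $\bbdscsym$) immediately yields $\states\bbdsc\statet[1]$, dispatching the case $n>0$. For $n=0$ one has $\states\brel\statet$ outright, and the paper then splits on whether every $\states[k]$ is $\brelsym$-related to $\statet$: if so, (\ref{cnd:divsim}) gives $\statet\plussteps\statet[m]$ with $\states[k]\brel\statet[m]$, and since also $\states[k]\brel\statet$, Lemma~\ref{lem:stuttering} again yields $\states[k]\bbdsc\statet[1]$; if not, (\ref{cnd:stepsim}) applied at the first failure produces the required single step directly. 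The tool you are missing is Lemma~\ref{lem:stuttering}: once $\states[k]$ is $\bbdscsym$-related to both ends of a $\sact$-path through $\state{t'}$, you get $\states[k]\bbdsc\state{t'}$ for free, and the witness-chasing disappears.
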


  \begin{proof}
    Suppose that $\brelsym$ satisfies (\ref{cnd:stepsim}) and
    (\ref{cnd:divsim}). By Proposition~\ref{prop:short} it suffices to
    establish that $\bbdscsym$ satisfies (\ref{cnd:newdivsimshort}).
    Suppose that $\states\bbdsc\statet$ and there exists an infinite
    sequence of states $(\states[k])_{k\in\N}$ such that
      $\states=\states[0]$
    and
      $\states[k]\step{\sact}\states[k+1]$ for all $k\in\N$.
    We have to show that there exists a state $\state{t'}$ such that
    $\statet\step{\sact}\state{t'}$ and $\states[k]\bbdsc\state{t'}$
    for some $k\in\N$.

    As $\states\bbdsc\statet$, by Lemma~\ref{lem:bbdscsync}
    there exist
      $\statet[0],\dots,\statet[n]$
    such that
      $\statet=\statet[0]
                 \step{\sact}\cdots\step{\sact}
               \statet[n]$
    and
      $\states\brel\statet[n]$.
    By Lemma~\ref{lem:stuttering},
    $\states\bbdsc\statet[i]$ for all $i=0,\dots,n$,
    so if $n>0$, then we can take $\state{t'}=\statet[1]$.
    We proceed with the assumption that $n=0$; so
    $\states\brel\statet$.

    First suppose that $\states[k]\brel\statet$ for all $k\in\N$.
    Then by condition~(\ref{cnd:divsim}) there exist
      $\statet[0],\dots,\statet[m]$
    such that
      $\statet=\statet[0]
                 \step{\sact}\cdots\step{\sact}
               \statet[m]$
    with $m>0$ and $\states[k]\brel\statet[m]$ for some $k\in\N$.
    As $\states[k]\bbdsc\statet[0]$ and $\states[k]\bbdsc\statet[m]$,
    it follows by Lemma~\ref{lem:stuttering} that
    $\states[k]\bbdsc\statet[i]$ for all $i=0,\dots,n$.
    Hence, in particular, $\states[k]\bbdsc\statet[1]$, so we can take
    $\state{t'}=\statet[1]$.

    In the remaining case there is a $k_0\in\N$ such that
    $\states[k]\brel\statet$ for all $k\leq k_0$ while
    $\states[k_0+1]$ and $\state{t}$ are \emph{not} related by
    $\brel$.
    Since $\states[k_0]\brel\statet$ and
    $\states[k_0]\step{\sact}\states[k_0+1]$, by
    condition~(\ref{cnd:stepsim}) of Definition~\ref{def:bbisimd}
    there exist states
      $\statet[0],\dots,\statet[m],\statet[m+1]$
    such that
      \plat{$\statet=\statet[0]
                 \step{\sact}\cdots\step{\sact}
               \statet[m]\step{\opt{\sact}}\statet[m+1]$},
      $\states[k_0]\brel\statet[m]$
    and
      $\states[k_0+1]\brel\statet[m+1]$.
    Since $\states[k_0+1]$ and $\state{t}$ are not related by
    $\brel$, it follows that $\statet[0]\not=\statet[m+1]$,
    so either $m>0$ or $\statet[m]\step{\sact}\statet[m+1]$.
    In case $m>0$, since
      $\states[k_0]\bbdsc\statet[0]$
    and
      $\states[k_0]\bbdsc\statet[m]$,
    by Lemma~\ref{lem:stuttering} it follows that
    $\states[k_0]\bbdsc\statet[1]$, so we can take
    $\state{t'}=\statet[1]$.
    In case $m=0$ and
      $\statet=\statet[m]\step{\sact}\statet[m+1]$,
    we can take $\state{t'}=\statet[m+1]$.
  \end{proof}

  \begin{theorem} \label{theo:bbisimdinclwbbisimd}
     $\bbisimdsym\subseteq\wbbisimdsym$.
  \end{theorem}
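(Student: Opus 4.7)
The plan is to apply the stuttering-closure machinery assembled in the preceding lemmas, which was set up precisely for this purpose. Suppose $\states\bbisimd\statet$; then by definition there exists a symmetric binary relation $\brelsym$ satisfying (\ref{cnd:stepsim}) and (\ref{cnd:divsim}) with $\states\brel\statet$. I would pass to the stuttering closure $\bbdscsym$ of $\brelsym$ and argue that $\bbdscsym$ witnesses $\states\wbbisimd\statet$.

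First, since $\brelsym\subseteq\bbdscsym$, we still have $\states\bbdsc\statet$. Next, I would assemble the three closure properties already proved: Lemma~\ref{lem:bbdscsym} gives that $\bbdscsym$ is symmetric (because $\brelsym$ is), Lemma~\ref{lem:bbdscstepsim} gives that $\bbdscsym$ satisfies (\ref{cnd:stepsim}), and Lemma~\ref{lem:bbdscnewdivsim} gives that $\bbdscsym$ satisfies (\ref{cnd:newdivsim}). Together these say that $\bbdscsym$ is a symmetric relation witnessing $\wbbisimdsym$, so $\states\wbbisimd\statet$, as required.

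There is no real obstacle here: all the technical work has been pushed into the stuttering-closure lemmas, in particular Lemma~\ref{lem:bbdscnewdivsim}, which is the only place where one had to convert the weak divergence condition (\ref{cnd:divsim}) into the stronger one (\ref{cnd:newdivsim}) by exploiting the stuttering property. The present theorem is then merely a bookkeeping step that packages those lemmas into the stated inclusion.
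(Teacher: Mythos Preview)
Your proof is correct and essentially identical to the paper's own argument: both take a witnessing relation for $\bbisimdsym$, pass to its stuttering closure, and invoke Lemmas~\ref{lem:bbdscsym}, \ref{lem:bbdscstepsim}, and~\ref{lem:bbdscnewdivsim} to conclude that this closure witnesses $\wbbisimdsym$. Your commentary that the real content lies in Lemma~\ref{lem:bbdscnewdivsim} is also apt.
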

  \begin{proof}
    Suppose that $\states\bbisimd\statet$.
    Then there exists a binary relation $\brelsym$ satisfying
    (\ref{cnd:stepsim}) and (\ref{cnd:divsim}),
    such that $\states\brel\statet$.
    By Lemma~\ref{lem:bbdscsym} the stuttering closure $\bbdsc$ of
    $\brelsym$ is symmetric, by Lemma~\ref{lem:bbdscstepsim} it
    satisfies (\ref{cnd:stepsim}), and by
    Lemma~\ref{lem:bbdscnewdivsim} it satisfies (\ref{cnd:newdivsim}).
    Moreover, $\states\bbdsc\statet$.
    Hence, $\states\wbbisimd\statet$.
 \vspace{1ex}
  \end{proof}

  The inclusions already established in Section~\ref{sec:relchar}
  together with the inclusions established in Theorems
  \ref{theo:wbbisimdinclrbbisimd} and \ref{theo:bbisimdinclwbbisimd}
  yield the following corollary (see also Figure~\ref{fig:inclusion}).

  \begin{corollary} \label{cor:bbisimdeqwbbisimd}
    $\rbbisimdsym=\bbisimdsym=\wbbisimdsym$.
  \qed\end{corollary}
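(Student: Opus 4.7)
The plan is simply to close the cycle of inclusions that has been carefully assembled over the course of Section~\ref{sec:relchar} and Section~\ref{sec:eqst}. Since equality of three sets can be obtained from any cyclic chain of three inclusions, I would first identify the three inclusions already available and then chain them together.

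First I would recall the trivial inclusions $\rbbisimdsym \subseteq \onestepbbisimdsym$ and $\wbbisimdsym \subseteq \onestepbbisimdsym$ observed just after the introductions of conditions (\ref{cnd:onestepdivsim}) and (\ref{cnd:newdivsim}), together with $\onestepbbisimdsym \subseteq \bbisimdsym$, which was noted directly after the introduction of condition~(\ref{cnd:divsim}) (using that $\statet \step{\sact} \state{t'}$ implies $\statet \plussteps \state{t'}$). Composing these gives $\rbbisimdsym \subseteq \bbisimdsym$. Next I would invoke Theorem~\ref{theo:bbisimdinclwbbisimd} to get $\bbisimdsym \subseteq \wbbisimdsym$, and finally Theorem~\ref{theo:wbbisimdinclrbbisimd} to get $\wbbisimdsym \subseteq \rbbisimdsym$. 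The three inclusions
\[
  \rbbisimdsym \;\subseteq\; \bbisimdsym \;\subseteq\; \wbbisimdsym \;\subseteq\; \rbbisimdsym
\]
together force equality throughout, establishing the corollary.

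There is no real obstacle at this point: all the substantive work (compositionality of (\ref{cnd:newdivsim}), the stuttering-closure construction and the fact that it converts a relation satisfying (\ref{cnd:stepsim}) and (\ref{cnd:divsim}) into one satisfying (\ref{cnd:stepsim}) and (\ref{cnd:newdivsim}), and that $\wbbisimdsym$ is an equivalence and hence already satisfies (\ref{cnd:rvgdivsim})) has already been carried out in the preceding subsections. The only care needed is to cite the right earlier statements in the right order, so the reader can verify that the cycle genuinely closes without any hidden circularity.
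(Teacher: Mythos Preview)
Your proposal is correct and matches the paper's own approach: the corollary is obtained precisely by combining the inclusions $\rbbisimdsym\subseteq\onestepbbisimdsym\subseteq\bbisimdsym$ from Section~\ref{sec:relchar} with Theorems~\ref{theo:bbisimdinclwbbisimd} and~\ref{theo:wbbisimdinclrbbisimd} to close the cycle $\rbbisimdsym\subseteq\bbisimdsym\subseteq\wbbisimdsym\subseteq\rbbisimdsym$. The paper presents this as immediate (hence the bare \qed), and your write-up simply spells out the chain explicitly.
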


  \begin{corollary} \label{cor:bbisimdeq}
    The relation $\rbbisimdsym$ is an equivalence.
  \qed\end{corollary}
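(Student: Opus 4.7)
The plan is almost immediate given what has been accumulated. By Corollary~\ref{cor:bbisimdeqwbbisimd} we already know that $\rbbisimdsym$, $\bbisimdsym$ and $\wbbisimdsym$ denote the very same binary relation on $\States$. Hence any property of $\wbbisimdsym$ is automatically inherited by $\rbbisimdsym$.

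The strategy is therefore simply to invoke Theorem~\ref{theo:wbbisimdeq}, which asserts that $\wbbisimdsym$ is reflexive, symmetric and transitive, and then transport these properties along the equality $\rbbisimdsym = \wbbisimdsym$ provided by Corollary~\ref{cor:bbisimdeqwbbisimd}. No further construction on relations is required, and there is no real obstacle: all the substantive work (establishing compositionality of (\ref{cnd:newdivsim}) in Lemma~\ref{lem:relcomp}, proving $\wbbisimdsym\subseteq\rbbisimdsym$ via the stuttering closure, and closing the cycle of inclusions) has already been carried out in Sections~\ref{subsec:wbbisimdeq}--\ref{subsec:closing}. The corollary is precisely the pay-off of that sequence of results, explaining why the detour through $\wbbisimdsym$ was worthwhile in the first place: a direct proof of transitivity for $\rbbisimdsym$ via composition of witnessing relations fails, because (\ref{cnd:rvgdivsim}) is not compositional (cf.\ Example~\ref{exa:incompositionality}).

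So the one-line proof would read: by Corollary~\ref{cor:bbisimdeqwbbisimd}, $\rbbisimdsym = \wbbisimdsym$, and by Theorem~\ref{theo:wbbisimdeq} the latter is an equivalence; hence so is $\rbbisimdsym$.
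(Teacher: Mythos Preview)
Your proposal is correct and matches the paper's approach exactly: the corollary is immediate from Corollary~\ref{cor:bbisimdeqwbbisimd} together with Theorem~\ref{theo:wbbisimdeq}. One minor slip in your parenthetical summary of the background: the stuttering closure is used to establish $\bbisimdsym\subseteq\wbbisimdsym$ (Section~\ref{subsec:closing}), not $\wbbisimdsym\subseteq\rbbisimdsym$; the latter is obtained in Section~\ref{subsec:winr} by showing directly that $\wbbisimdsym$ satisfies (\ref{cnd:rvgdivsim}).
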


  \noindent
  Recall that the proof strategy employed in
  Lemma~\ref{lem:relunion}(\ref{lem:relunion:item:newdivsim})
  to show that any union of binary relations satisfying
  (\ref{cnd:newdivsim}) also satisfies (\ref{cnd:newdivsim}),
  fails with (\ref{cnd:rvgdivsim}) or (\ref{cnd:divsim}) instead of
  (\ref{cnd:newdivsim}).
  In fact, it is easy to show that these results do not even hold.
  Therefore, we could not directly infer from the definition of
  $\rbbisimdsym$ that it is itself a branching bisimulation with
  explicit divergence. But now it follows, for
    $\rbbisimdsym=\wbbisimdsym$ satisfies (\ref{cnd:stepsim}) and
  (\ref{cnd:newdivsim}) by Lemma~\ref{lem:wbbisimdlargest}, and
  hence also the weaker condition (\ref{cnd:divsim}). It  satisfies
  (\ref{cnd:rvgdivsim}) by Lemma~\ref{lem:wbbisimdrvgdivsim}.

  \begin{corollary} \label{cor:bbisimdbrel}
    $\rbbisimdsym$ is the largest symmetric relation satisfying
    {\rm (\ref{cnd:stepsim})} and {\rm (\ref{cnd:divsim})}. It even
    satisfies {\rm (\ref{cnd:rvgdivsim})}, {\rm (\ref{cnd:newdivsim})}
    and {\rm (\ref{cnd:newdivsimshort})}. It therefore is the largest
    branching bisimulation with explicit divergence.
  \qed\end{corollary}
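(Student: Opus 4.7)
The plan is to assemble the statement from results already in hand, using Corollary~\ref{cor:bbisimdeqwbbisimd} (which gives $\rbbisimdsym = \bbisimdsym = \wbbisimdsym$) to transfer properties freely between these three equal relations. Essentially, everything is already proved; what remains is to collect and arrange the facts.

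First I would verify each of the four conditions in turn. For (\ref{cnd:stepsim}) and (\ref{cnd:newdivsim}): these hold of $\wbbisimdsym$ directly by Lemma~\ref{lem:wbbisimdlargest}, hence of $\rbbisimdsym$. For (\ref{cnd:newdivsimshort}): this is immediate from (\ref{cnd:newdivsim}) by Proposition~\ref{prop:short}. For (\ref{cnd:rvgdivsim}): this is Lemma~\ref{lem:wbbisimdrvgdivsim}, again applied through $\rbbisimdsym = \wbbisimdsym$. It then remains to note that (\ref{cnd:newdivsim}) implies the weaker (\ref{cnd:divsim}): given a divergence $(\states[k])_{k\in\N}$ from $\states$ all related to $\statet$, the witnessing sequence $(\statet[\ell])_{\ell\in\N}$ and map $\statemap$ produced by (\ref{cnd:newdivsim}) give $\statet \plussteps \statet[1]$ and $\states[\statemap(1)]\brel\statet[1]$, which is exactly what (\ref{cnd:divsim}) demands (with $\state{t'} = \statet[1]$).

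For the two maximality claims, I would invoke the definitions of $\bbisimdsym$ and $\rbbisimdsym$. By Definition of $\bbisimdsym$, if $\brelsym$ is any symmetric binary relation satisfying (\ref{cnd:stepsim}) and (\ref{cnd:divsim}), then $\states\brel\statet$ implies $\states\bbisimd\statet$; so $\brelsym \subseteq \bbisimdsym = \rbbisimdsym$. Since we have just shown that $\rbbisimdsym$ is itself symmetric and satisfies (\ref{cnd:stepsim}) and (\ref{cnd:divsim}), it is the largest such relation. The argument for branching bisimulation with explicit divergence is entirely analogous, using Definition~\ref{def:bbisimd}: any such relation is contained in $\rbbisimdsym$ by definition, and $\rbbisimdsym$ is itself one of them because it is symmetric and satisfies (\ref{cnd:stepsim}) and (\ref{cnd:rvgdivsim}).

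There is no serious obstacle in this proof; the only point requiring slight care is the derivation of (\ref{cnd:divsim}) from (\ref{cnd:newdivsim}), since the two conditions differ both in hypothesis (the former assumes every $\states[k]$ is $\brelsym$-related to $\statet$, the latter does not) and in conclusion (the former asks only for a single $\plussteps$-successor, the latter for a full matching divergence). However, the hypothesis of (\ref{cnd:divsim}) is strictly stronger than that of (\ref{cnd:newdivsim}), and its conclusion is strictly weaker, so the implication goes through without difficulty.
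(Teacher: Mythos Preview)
Your proposal is correct and follows essentially the same route as the paper: use Corollary~\ref{cor:bbisimdeqwbbisimd} to identify $\rbbisimdsym$ with $\wbbisimdsym$, invoke Lemma~\ref{lem:wbbisimdlargest} for (\ref{cnd:stepsim}) and (\ref{cnd:newdivsim}), Lemma~\ref{lem:wbbisimdrvgdivsim} for (\ref{cnd:rvgdivsim}), Proposition~\ref{prop:short} for (\ref{cnd:newdivsimshort}), and observe that (\ref{cnd:divsim}) is weaker than (\ref{cnd:newdivsim}). Your treatment of the maximality claims via the definitions of $\bbisimdsym$ and $\rbbisimdsym$ is spelled out more explicitly than in the paper, which leaves this step implicit, but the argument is the same.
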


  \noindent
  The following corollary is another consequence, which we need in the
  next section.

  \begin{corollary} \label{cor:bbisimdstuttering}
    The relation $\rbbisimdsym$ has the \emph{stuttering property}.
  \end{corollary}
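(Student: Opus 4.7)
The plan is to exploit the machinery developed for the stuttering closure, essentially by showing that $\rbbisimdsym$ is its own stuttering closure. Concretely, I would start from Corollary~\ref{cor:bbisimdeqwbbisimd}, which gives $\rbbisimdsym = \wbbisimdsym$, and then work with the stuttering closure $\bbdscsym$ of $\brelsym := \wbbisimdsym$ (viewed as a binary relation on $\States$).

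The first step is to verify that $\bbdscsym$ satisfies all the properties required to be a witness for $\wbbisimdsym$. By Theorem~\ref{theo:wbbisimdeq} the relation $\wbbisimdsym$ is symmetric, so Lemma~\ref{lem:bbdscsym} gives that $\bbdscsym$ is symmetric. By Lemma~\ref{lem:wbbisimdlargest} the relation $\wbbisimdsym$ satisfies (\ref{cnd:stepsim}) and (\ref{cnd:newdivsim}); thus Lemma~\ref{lem:bbdscstepsim} yields (\ref{cnd:stepsim}) for $\bbdscsym$. Since (\ref{cnd:newdivsim}) is strictly stronger than (\ref{cnd:divsim}) (as observed in Section~\ref{sec:relchar}), $\wbbisimdsym$ also satisfies (\ref{cnd:divsim}), and so Lemma~\ref{lem:bbdscnewdivsim} gives that $\bbdscsym$ satisfies (\ref{cnd:newdivsim}).

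The second step closes the argument: $\bbdscsym$ is a symmetric relation satisfying (\ref{cnd:stepsim}) and (\ref{cnd:newdivsim}), and hence, by the remark following Lemma~\ref{lem:wbbisimdlargest} (that $\wbbisimdsym$ is the largest such relation), $\bbdscsym \subseteq \wbbisimdsym$. Combined with the trivial inclusion $\wbbisimdsym \subseteq \bbdscsym$ (every relation is contained in its stuttering closure), this gives $\bbdscsym = \wbbisimdsym$. Now Lemma~\ref{lem:stuttering} tells us that $\bbdscsym$ has the stuttering property, so $\wbbisimdsym = \rbbisimdsym$ does too.

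I do not expect any real obstacle here: all the non-trivial work (closure lemmas, maximality of $\wbbisimdsym$, and equality of the three relations) has been done in the preceding subsections, and the proof is essentially a bookkeeping argument that $\wbbisimdsym$ is stable under stuttering closure. The one place to be mildly careful is invoking Lemma~\ref{lem:bbdscnewdivsim}, which requires (\ref{cnd:divsim}) rather than (\ref{cnd:newdivsim}) of the underlying relation, so the implication $(\ref{cnd:newdivsim}) \Rightarrow (\ref{cnd:divsim})$ should be mentioned explicitly.
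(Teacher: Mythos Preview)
Your proof is correct and follows essentially the same route as the paper: both take the stuttering closure of the relation $\rbbisimdsym = \wbbisimdsym$, use Lemmas~\ref{lem:bbdscsym}, \ref{lem:bbdscstepsim} and~\ref{lem:bbdscnewdivsim} to show this closure is a symmetric relation satisfying (\ref{cnd:stepsim}) and (\ref{cnd:newdivsim}), invoke maximality of $\wbbisimdsym$ to conclude the closure collapses back to $\rbbisimdsym$, and then apply Lemma~\ref{lem:stuttering}. The only cosmetic difference is that the paper names the relation $\rbbisimdsym$ and cites Corollary~\ref{cor:bbisimdbrel} for (\ref{cnd:divsim}) directly, whereas you name it $\wbbisimdsym$ and derive (\ref{cnd:divsim}) from (\ref{cnd:newdivsim}); either way the argument is the same.
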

  \begin{proof}
    Since $\rbbisimdsym$ satisfies (\ref{cnd:stepsim})
    and (\ref{cnd:divsim}), its stuttering
    closure $\widehat\bisim_b^{\Delta}$ satisfies (\ref{cnd:stepsim})
    and  (\ref{cnd:newdivsim}) by Lemmas~\ref{lem:bbdscstepsim}
    and~\ref{lem:bbdscnewdivsim}.
    Moreover, $\widehat\bisim_b^{\Delta}$ is symmetric by
    Lemma~\ref{lem:bbdscsym}. Therefore
    $\widehat\bisim_b^{\Delta}$ is included in $\wbbisimdsym$ (cf.\
    the proof of Lemma~\ref{lem:wbbisimdlargest}). As
    $\rbbisimdsym\subseteq\widehat\bisim_b^{\Delta}\subseteq\wbbisimdsym$
    we find $\rbbisimdsym=\widehat\bisim_b^{\Delta}$.
    Thus, by Lemma~\ref{lem:stuttering}, $\rbbisimdsym$ has the
    stuttering property.
  \end{proof}

\section[Coloured-trace characterisation]
        {Coloured-trace characterisation of $\bbd$} \label{sec:ctchar}

  We now recall from \cite{GW96} the original characterisation in
  terms of coloured traces of branching bisimilarity with
  explicit divergence, and establish that it coincides with the
  relational characterisations of Section~\ref{sec:relchar}.

  \begin{definition}\label{def:colouringsext}
    Let $\C$ be a colouring.
    A state $s$ is \emph{$\C$-divergent} if there exists an infinite
    sequence of states
        $(\states[k])_{k\in\N}$
      such that
        $\states=\states[0]$,
        $\states[k]\step{\sact}\states[k+1]$
      and
        $\C(\states[k])=\C(s)$ for all $k\in\N$.
    A consistent colouring is said to \emph{preserve divergence} if no
    $\C$-divergent state has the same colour as a nondivergent state.

    We write $s\ccd t$ if there exists a consistent, divergence
    preserving colouring $\C$ with $\C(s)=\C(t)$.
  \end{definition}

\noindent
  We prove that $\ccdsym = \rbbisimdsym$.

\begin{lemma}\label{BB2}
Let $\C$ be a colouring such that two states with the same colour have
the same $\C$-coloured traces of length three (i.e.\ colour - action - colour).
Then $\C$ is consistent.
\end{lemma}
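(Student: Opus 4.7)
\medskip
\noindent\textit{Proof plan.} The plan is to prove, by induction on $n$, that for every $n \geq 0$ and all states $s,t$ with $\C(s)=\C(t)$, every $\C$-coloured trace of $s$ of length $2n+1$ is also a $\C$-coloured trace of $t$. The base cases $n=0$ (single-colour trace) and $n=1$ (length-three trace) are immediate from $\C(s)=\C(t)$ and from the hypothesis of the lemma, respectively.

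\smallskip\noindent
The crucial auxiliary fact (to be stated and verified separately) is a \emph{splitting property} for coloured traces: if $\sigma = C_0,a_1,C_1,a_2,C_2,\dots,a_n,C_n$ is a $\C$-coloured trace of $s$ with $n\geq 1$, then there exists a state $s'$ with $\C(s')=C_1$ such that $C_0,a_1,C_1$ is a $\C$-coloured trace of $s$ (witnessed by a prefix of the path that witnesses $\sigma$) and $C_1,a_2,C_2,\dots,a_n,C_n$ is a $\C$-coloured trace of $s'$ (witnessed by the corresponding suffix). This follows by inspecting the block structure of the uncontracted trace of the witnessing path: choose as $s'$ the first state of the block of colour $C_1$ along the path.

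\smallskip\noindent
For the inductive step with $n\geq 2$, assume $\sigma=C_0,a_1,C_1,\dots,a_n,C_n$ is a coloured trace of $s$ and that $\C(s)=\C(t)=C_0$. Apply the splitting property to obtain $s'$ as above. By the length-three hypothesis, $C_0,a_1,C_1$ is also a coloured trace of $t$, witnessed by a path $t=t_0,b_1,\dots,b_p,t_p$ ending in a state $t':=t_p$ with $\C(t')=C_1$. Since $\C(s')=\C(t')=C_1$ and the suffix $C_1,a_2,\dots,a_n,C_n$ has length $2(n{-}1)+1$, the induction hypothesis yields a path from $t'$ whose coloured trace is exactly this suffix.

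\smallskip\noindent
The one routine obstacle is to verify that concatenating these two paths at $t'$ produces a path from $t$ whose contracted coloured trace is $\sigma$, rather than some further-contracted sequence. Here one uses that in a contracted trace no $C,\tau,C$ pattern remains, so the terminal $C_1$-block of the first path and the initial $C_1$-block of the second simply merge into one $C_1$-block in the concatenation, and the contraction of the concatenated colour-action sequence reconstructs precisely $C_0,a_1,C_1,a_2,\dots,a_n,C_n$. This step is a direct unfolding of the block structure described above and requires no new ideas beyond the splitting property.
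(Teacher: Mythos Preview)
Your proposal is correct and follows essentially the same approach as the paper. The paper decomposes the source coloured trace $C_0,a_1,\dots,a_n,C_n$ into length-three segments $C_{i-1},a_i,C_i$ all at once (your splitting property applied iteratively), then builds the target path segment by segment via an induction on the index $i$; you instead induct on $n$, peeling off one segment per recursive call---the same argument in recursive rather than iterative dress, and you are in fact more explicit than the paper about why the concatenation does not cause further contraction.
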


\begin{proof}
Suppose $\C(s_0)=\C(t_0)$ and $C_0, a_1, C_1, \dots, a_n, C_n$ is a
coloured trace of $s_0$. Then, for $i=1,\dots,n$, there are states $s_i$
and paths $\pi_i$ from $s_{i-1}$ to $s_{i}$, such that
$\C(\pi_i)=C_{i-1},a_i,C_{i}$. With induction on $i$, for $i=1,\dots,n$
we find states $t_i$ with $C(s_i)=C(t_i)$ and paths $\rho_i$
from $t_{i-1}$ to $t_{i}$ such that $\C(\rho_i)=C_{i-1},a_i,C_{i}$.
Namely, the assumption about $\C$ allows us to find $\rho_i$ given
$t_{i-1}$, and then $t_i$ is defined as the last state of $\rho_i$.
Concatenating all the paths $\rho_i$ yields a path $\rho$
from $t_0$ with $\C(\rho)=C_0, a_1, C_1,\dots, a_n, C_n$.
\end{proof}

\begin{theorem}
$\ccdsym = \rbbisimdsym$.
\end{theorem}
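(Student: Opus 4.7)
The plan is to prove the two inclusions separately, exploiting the machinery already assembled. The inclusion $\ccdsym \subseteq \rbbisimdsym$ unwinds the definitions directly; the reverse uses Lemma~\ref{BB2} to reduce consistency to matching length-three coloured traces, together with the properties of $\rbbisimdsym$ collected in Corollaries~\ref{cor:bbisimdeq}--\ref{cor:bbisimdstuttering}.

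For $\ccdsym \subseteq \rbbisimdsym$, given a consistent divergence-preserving colouring $\C$ with $\C(s) = \C(t)$, I would take $\brelsym = \{(u,v) \mid \C(u) = \C(v)\}$ and check that it is a branching bisimulation with explicit divergence. Symmetry is immediate. For condition~(\ref{cnd:stepsim}), if $u \brel v$ and $u \step{a} u'$, the coloured trace of the one-transition path $u, a, u'$ either collapses to the length-one trace $\C(u)$---when $a = \sact$ and $\C(u) = \C(u')$, in which case $v'' = v' = v$ witnesses the clause---or equals $\C(u), a, \C(u')$; in the latter case consistency produces a path from $v$ realising this trace, whose unique non-stuttering transition supplies the required $v''$ and $v'$. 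For condition~(\ref{cnd:rvgdivsim}), if $(u_k)_{k\in\N}$ is a divergence from $u$ with every $u_k \brel v$, then every $u_k$ has colour $\C(v)$, so $u$ is $\C$-divergent; divergence preservation promotes $v$ to $\C$-divergent as well, and any witnessing divergence from $v$ consists of states of colour $\C(v)$, hence $\brel$-related to every $u_k$.

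For $\rbbisimdsym \subseteq \ccdsym$, I would colour states by their $\rbbisimdsym$-equivalence classes, which is well-defined by Corollary~\ref{cor:bbisimdeq}. By Lemma~\ref{BB2} consistency reduces to matching coloured traces of length three. So suppose $s \rbbisimd t$ and $s$ admits a path $s = s_0 \step{\sact} \cdots \step{\sact} s_{n-1} \step{a} s_n \step{\sact} \cdots \step{\sact} s_m$ whose coloured trace is $C_0, a, C_1$---meaning $\C(s_i) = C_0$ for $i < n$, $\C(s_i) = C_1$ for $i \geq n$, and either $a \neq \sact$ or $C_0 \neq C_1$. Lemma~\ref{lem:isync} applied to $s \ssteps s_{n-1}$ yields $t^*$ with $t \ssteps t^*$ and $s_{n-1} \rbbisimd t^*$; then condition~(\ref{cnd:stepsim}) applied to $s_{n-1} \step{a} s_n$ gives $t^* \ssteps t''' \step{\opt{a}} t'$ with $s_{n-1} \rbbisimd t'''$ and $s_n \rbbisimd t'$. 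The degenerate case $a = \sact$ with $t''' = t'$ would force $C_0 = C_1$, contradicting the hypothesis, so $t''' \step{a} t'$ is a genuine transition. The stuttering property (Corollary~\ref{cor:bbisimdstuttering}) applied to the $\sact$-path from $t$ to $t'''$, noting $t \rbbisimd t$ and $t \rbbisimd t'''$, certifies that every intermediate state is $\rbbisimdsym$-related to $t$ and therefore carries colour $C_0$; hence the path $t \ssteps t''' \step{a} t'$ has coloured trace exactly $C_0, a, C_1$. Divergence preservation is then immediate: a $\C$-divergence from $s$ consists of states $\rbbisimdsym$-equivalent to $s$, hence to $t$, and condition~(\ref{cnd:rvgdivsim})---available via Corollary~\ref{cor:bbisimdbrel}---delivers a divergence from $t$ all of whose states are $\rbbisimdsym$-equivalent to $t$, i.e., of colour $\C(t)$.

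The main obstacle I anticipate is the length-three case analysis: ruling out the stuttering sub-case $t''' = t'$ of~(\ref{cnd:stepsim}) using the side condition that either $a \neq \sact$ or $C_0 \neq C_1$, and---more subtly---invoking the stuttering property to certify that the entire $\sact$-prefix from $t$ to $t'''$ stays inside colour $C_0$. Without that, the matching path from $t$ could pass through extraneous colours and produce a coloured trace strictly longer than three, which Lemma~\ref{BB2} would no longer let us reuse.
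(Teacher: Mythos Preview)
Your proposal is correct and follows essentially the same route as the paper: for $\ccdsym \subseteq \rbbisimdsym$ you verify (\ref{cnd:stepsim}) and (\ref{cnd:rvgdivsim}) for the colouring directly, and for $\rbbisimdsym \subseteq \ccdsym$ you take $\rbbisimdsym$ as the colouring, invoke Lemma~\ref{BB2} to reduce consistency to length-three traces, use Lemma~\ref{lem:isync} plus (\ref{cnd:stepsim}) to build the matching path, and appeal to the stuttering property to keep the $\tau$-prefix inside $C_0$. The only notable difference is that you explicitly dispose of the degenerate sub-case $t'''=t'$ of (\ref{cnd:stepsim}) (arguing it would force $C_0=C_1$), whereas the paper leaves this implicit; this is a presentational refinement rather than a different idea.
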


\begin{proof}
``$\subseteq$'':
Let $\C$ be a consistent colouring that preserves divergence.
It suffices to show that $\C$ is a branching bisimulation with explicit
divergence.

Suppose $s ~\C~ t$, i.e.\ $\C(s)=\C(t)$, and $s \step{a}s'$ for some
state $s'$. In case $a=\tau$ and $\C(s')=\C(s)$ we have $s' ~\C~ t$
and condition (\ref{cnd:stepsim}) is satisfied. So suppose $a\neq\tau$
or $\C(s')\neq\C(s)$. Then $s$, and therefore also $t$, has a coloured
trace $\C(s),a,\C(s')$. This implies that there are states
$t_0,\dots,t_n$ for some $n\geq 0$ and $t'$ with $t=t_0\step\tau t_1
\step\tau \cdots \step\tau t_n \step{\opt{a}} t'$ such that
$\C(t_i)=\C(s)$ for $i=0,...,n$ and $\C(t')=\C(s')$. Hence
(\ref{cnd:stepsim}) is satisfied.

Now suppose $s ~\C~ t$ and there is an infinite sequence of states
      $(\states[k])_{k\in\N}$
    such that
      $\states=\states[0]$,
      $\states[k]\step{\sact}\states[k+1]$
    and
      $\states[k]~\C~\statet$ for all $k\in\N$.
Then $\C(\states[k])=\C(s)$ for all $k\in\N$.
Hence $s$, and therefore also $t$, is $\C$-divergent.
Thus there exists an infinite sequence of states
      $(\statet[\ell])_{\ell\in\N}$
    such that
      $\statet=\statet[0]$,
      $\statet[\ell]\step{\sact}\statet[\ell+1]$
    and
      $\C(\statet[\ell])=\C(t)$ for all $\ell\in\N$.
It follows that
      $\states[k]~\C~\statet[\ell]$ for all $k,\ell\in\N$.
Hence also (\ref{cnd:rvgdivsim}) is satisfied.

``$\supseteq$'': It suffices to show that $\rbbisimdsym$ is a
consistent, divergence preserving colouring.  By
Corollary~\ref{cor:bbisimdeq} it is an equivalence.  We also
use that it satisfies (\ref{cnd:stepsim}) and (\ref{cnd:rvgdivsim})
(Corollary~\ref{cor:bbisimdbrel}) and
has the stuttering property (Corollary~\ref{cor:bbisimdstuttering}).
We invoke Lemma~\ref{BB2} for proving consistency.

Suppose that $\states$ and $\statet$ have the same colour, i.e.,
$s\rbbisimd t$, and let $C,\acta,D$ be a $\rbbisimdsym$-coloured trace
of $s$. Then $\acta\neq\sact$ or $C\neq D$, and there are states
$\state{s''}$ and $\state{s'}$ with
$\states\ssteps\state{s''}\step{\acta}\state{s'}$, such that
$\state{s''},\state{s}\in C$ and $\state{s'}\in D$. As $\rbbisimdsym$
satisfies (\ref{cnd:stepsim}), by Lemma~\ref{lem:isync} there is a
state $\state{t^\dagger}$ with $\statet\ssteps\state{t^\dagger}$ and
$\state{s''}\rbbisimd \state{t^\dagger}$.
Therefore there exist states $\state{t''}$ and $\state{t'}$ such that
  $(\statet \ssteps)\,
      \state{t^\dagger}\ssteps\state{t''}\step{\opt{\acta}}\state{t'}$,
  $\state{s''}\rbbisimd\state{t''}$ and
  $\state{s'}\rbbisimd{}\state{t'}$.
As $\rbbisimdsym$ has the stuttering property and
$\state{t''}\rbbisimd\state{s''}\rbbisimd\states\rbbisimd\statet$,
all states on the $\sact$-path from $t$ to $t''$ have the same colour
as $\states$. Hence  $C,\acta,D$ is a $\rbbisimdsym$-coloured trace of
$\statet$.

Now suppose $\states$ and $\statet$ have the same colour and $\states$
is $\rbbisimd$-divergent. Then there is an infinite sequence of states
      $(\states[k])_{k\in\N}$
such that
      $\states=\states[0]$,
      $\states[k]\step{\sact}\states[k+1]$
    and
      $\states[k]\rbbisimd\states\rbbisimd\statet$ for all $k\in\N$.
As $\rbbisimd$ satisfies (\ref{cnd:rvgdivsim}), this implies
that there exists an infinite sequence of states
      $(\statet[\ell])_{\ell\in\N}$
    such that
      $\statet=\statet[0]$,
      $\statet[\ell]\step{\sact}\statet[\ell+1]$
    and
      $\states[k]\rbbisimd\statet[\ell]$ for all $k,\ell\in\N$.
It follows that
      $\statet[\ell]\rbbisimd \statet$ for all $\ell\in\N$,
and hence $\statet$ is $\rbbisimd$-divergent.
\end{proof}

  \section[Modal characterisations]
          {Modal characterisations of $\bbd$} \label{sec:modchar}

  We shall now establish agreement between the relational and modal
  characterisations of $\bbd$ proposed in \cite{Gla93}. The class of
  formulas $\JBEDFrm$ of the modal logic for $\bbd$ proposed in
  \cite{Gla93} is generated by the grammar obtained by adding the following
  clause to the grammar in \eqref{eq:grammar} of Section~\ref{sec:bb}:
  \begin{equation} \label{eq:grammarext}
    \frm\ ::=\
      \rDiv\frm
        \qquad\text{($\varphi\in\JBEDFrm$).}
  \end{equation}
  We extend the inductive definition of validity in
  Section~\ref{sec:bb} with:
  \begin{enumerate}\itemsep 0pt
  \renewcommand{\theenumi}{\roman{enumi}}
  \renewcommand{\labelenumi}{(\theenumi)}
  \addtocounter{enumi}{3}
  \item
    $\states\sat\rDiv\frm$
      \IFF{}
    there exists an infinite sequence $(\states[k])_{k\in\omega}$
    of states such that $\states\ssteps\states[0]$,
      $\states[k]\step{\sact}\states[k+1]$
    and
      $\states[k]\sat\frm$ for all $k\in\omega$.
  \end{enumerate}
  Again, validity induces an equivalence on states: we define
  $\requivalidsym\subseteq\States\times\States$ by
  \begin{equation*}
    \states\requivalid\statet\quad
  \text{iff}\quad
    \all{\frm\in\JBEDFrm}.\
      {\states\sat\frm}\Leftrightarrow{\statet\sat\frm}
  \enskip.
  \end{equation*}
  We shall now establish that $\requivalidsym$ coincides with
  $\rbbisimd$.

  \begin{theorem} \label{theo:rmodchar}
    For all states $\states$ and $\statet$:
      $\states\rbbisimd\statet$
    \IFF{}
      $\states\requivalid\statet$.
  \end{theorem}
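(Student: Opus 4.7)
The plan is to prove both directions by Hennessy--Milner style arguments. Corollary~\ref{cor:bbisimdbrel} tells us that $\rbbisimdsym$ simultaneously satisfies all the divergence conditions from Section~\ref{sec:relchar}, so we may freely use whichever is most convenient in each direction; conversely, Corollary~\ref{cor:bbisimdeqwbbisimd} lets us verify the weaker condition (\ref{cnd:divsim}) in place of (\ref{cnd:rvgdivsim}) when showing a relation is contained in $\rbbisimdsym$.

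For the forward direction $\rbbisimdsym\subseteq\requivalidsym$, I would proceed by structural induction on $\frm\in\JBEDFrm$, showing that $\states\rbbisimd\statet$ implies $\states\sat\frm\Leftrightarrow\statet\sat\frm$. Negation and conjunction are immediate. For $\frm\runtil[\acta]\frm[\psi]$, given witnesses $\states\ssteps\state{s''}\step{\opt{\acta}}\state{s'}$ for $\states\sat\frm\runtil[\acta]\frm[\psi]$, Lemma~\ref{lem:isync} supplies $\state{t^\dagger}$ with $\statet\ssteps\state{t^\dagger}$ and $\state{s''}\rbbisimd\state{t^\dagger}$; a (possibly trivial) application of (\ref{cnd:stepsim}) then produces $\state{t''}$, $\state{t'}$ with $\statet\ssteps\state{t''}\step{\opt{\acta}}\state{t'}$, $\state{s''}\rbbisimd\state{t''}$ and $\state{s'}\rbbisimd\state{t'}$, after which the induction hypothesis closes the case. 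For $\rDiv\frm$, given $\states\ssteps\states[0]$ and an infinite $\sact$-chain $(\states[k])_{k\in\N}$ with each $\states[k]\sat\frm$, Lemma~\ref{lem:isync} supplies $\statet[0]$ with $\statet\ssteps\statet[0]$ and $\states[0]\rbbisimd\statet[0]$, and then condition (\ref{cnd:newdivsim}), which $\rbbisimdsym$ satisfies by Corollary~\ref{cor:bbisimdbrel}, supplies an infinite $\sact$-chain $(\statet[\ell])_{\ell\in\N}$ from $\statet[0]$ together with a map $\statemap\colon\N\to\N$ such that $\states[\statemap(\ell)]\rbbisimd\statet[\ell]$; the induction hypothesis then gives every $\statet[\ell]\sat\frm$, whence $\statet\sat\rDiv\frm$. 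The use of (\ref{cnd:newdivsim}) rather than (\ref{cnd:rvgdivsim}) is essential here: the witnessing chain $(\states[k])$ need not lie within the $\rbbisimd$-class of $\states$, so the hypothesis of (\ref{cnd:rvgdivsim}) would not be available.

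For the backward direction $\requivalidsym\subseteq\rbbisimdsym$, I would show that $\requivalidsym$ itself is a symmetric relation satisfying (\ref{cnd:stepsim}) and (\ref{cnd:divsim}), which suffices by Corollary~\ref{cor:bbisimdeqwbbisimd}. Symmetry is immediate from the definition. The main device is the characteristic formula: for each state $\states$, pick for every $\stateu\not\requivalid\states$ a distinguishing formula $\chi^{\states}_{\stateu}\in\JBEDFrm$ with $\states\sat\chi^{\states}_{\stateu}$ and $\stateu\not\sat\chi^{\states}_{\stateu}$, and set $\varphi_{\states}=\Meet\{\chi^{\states}_{\stateu}\mid\stateu\not\requivalid\states\}$; this formula is satisfied by precisely the states $\requivalidsym$-equivalent to $\states$ (and lies in $\JBEDFrm$ provided $\kappa$ is chosen large enough relative to $|\States|$). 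For (\ref{cnd:stepsim}), given $\states\requivalid\statet$ and $\states\step{\acta}\state{s'}$, the formula $\varphi_{\states}\runtil[\acta]\varphi_{\state{s'}}$ holds trivially at $\states$ (take $\state{s''}=\states$), hence also at $\statet$, delivering $\state{t''}$, $\state{t'}$ with $\state{t''}\requivalid\states$ and $\state{t'}\requivalid\state{s'}$. For (\ref{cnd:divsim}), given a divergence $(\states[k])_{k\in\N}$ with each $\states[k]\requivalid\states$, the formula $\rDiv\varphi_{\states}$ holds at $\states$ (witnessed by the trivial reach $\states\ssteps\states$ together with the $(\states[k])$ themselves), hence also at $\statet$, yielding states $\statet[0],\statet[1],\dots$ with $\statet\ssteps\statet[0]\step{\sact}\statet[1]\step{\sact}\cdots$ and each $\statet[\ell]\requivalid\states$; then $\statet\plussteps\statet[1]$ together with $\states[0]\requivalid\statet[1]$ is the required witness. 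The main obstacle is the forward $\rDiv$ case, where having (\ref{cnd:newdivsim}) available through Corollary~\ref{cor:bbisimdbrel} is exactly what makes the induction go through.
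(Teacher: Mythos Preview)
The proposal is correct and follows essentially the same approach as the paper: structural induction for the forward direction (using Lemma~\ref{lem:isync} together with (\ref{cnd:stepsim}) for $\runtil$ and Corollary~\ref{cor:bbisimdbrel}'s guarantee of (\ref{cnd:newdivsim}) for $\rDiv$), and for the backward direction showing that $\requivalidsym$ satisfies (\ref{cnd:stepsim}) and (\ref{cnd:divsim}) via distinguishing formulas. The only cosmetic difference is that you package the distinguishing formulas into a single global characteristic formula $\varphi_{\states}$ per state, whereas the paper builds the relevant conjunctions locally over the sets $T''$, $T'$, $T^\infty$ of reachable non-equivalent states; the two are interchangeable.
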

  \begin{proof}
    To establish the implication from left to right,
    we prove by structural induction on $\frm$ that if
    $\states\rbbisimd\statet$ and $\states\sat\frm$, then
    $\statet\sat\frm$. 
    There are four cases to consider.
    \begin{enumerate}
    \item Suppose $\frm=\neg\psi$ and $\states\sat\frm$.
      Then $\states\not\sat\psi$. As $\statet\rbbisimd\states$, it
      follows by the induction hypothesis that $\statet\not\sat\psi$,
      and hence $\statet\sat\frm$.
    \item Suppose $\states\sat\bigwedge \Psi$. Then, for all $\psi\in\Psi$,
      $\states\sat\psi$, and by induction $\statet\sat\psi$. This
      yields $\statet\sat\phi$.
    \item Suppose $\frm=\frm[\psi]\runtil[\acta]\frm[\chi]$ and
      $\states\sat\frm$.
      Then there exist states $\state{s'}$ and $\state{s''}$ such that
        \plat{$\states\ssteps\state{s''}\step{\opt{\acta}}\state{s'}$},
        $\state{s''}\sat\frm[\psi]$ and $\state{s'}\sat\frm[\chi]$.
      By Lemma~\ref{lem:isync}, there exists a state
      $\state{t^\dagger}$ such that
        \plat{$\statet\ssteps\state{t^\dagger}$}
      and
        $\state{s''}\rbbisimd\state{t^\dagger}$.
      From this it follows that there exist states $\state{t'}$ and
      $\state{t''}$ such that
        \plat{$\statet\ssteps\state{t''}\step{\opt{\acta}}\state{t'}$},
        $\state{s''}\rbbisimd\state{t''}$ and
        $\state{s'}\rbbisimd\state{t'}$,
      for if $\acta=\silent$ and $\state{s'}=\state{s''}$ we can take
      $\state{t'}=\state{t''}=\state{t^{\dagger}}$ and otherwise,
      since $\state{s''}\rbbisimd\state{t^\dagger}$, the states
      $\state{t'}$ and $\state{t''}$ exist by (\ref{cnd:stepsim}).
      It follows by the induction hypothesis that
        $\state{t''}\sat\frm[\psi]$ and $\state{t'}\sat\frm[\chi]$,
      and hence $\statet\sat\frm$.
    \item Suppose $\frm=\rDiv\frm[\psi]$ and
      $\states\sat\frm$. Then there exists an infinite sequence
      $(\states[k])_{k\in\N}$ such that $\states\ssteps\states[0]$,
      $\states[k]\step{\sact}\states[k+1]$ and
      $\states[k]\sat\frm[\psi]$ for all $k\in\N$.
      By Lemma~\ref{lem:isync}, there exists a state
      $\statet[0]$ such that
        \plat{$\statet\ssteps\statet[0]$}
      and
        $\states[0]\rbbisimd\statet[0]$.
      From Corollary~\ref{cor:bbisimdbrel} it follows that
      $\rbbisimdsym$ satisfies (\ref{cnd:newdivsim}), so
      there exist an infinite sequence of
      states $(\statet[\ell])_{\ell\in\N}$ and a mapping
        $\statemap:\N\rightarrow\N$
      such that
        $\statet[\ell]\step{\sact}\statet[\ell+1]$
      and
        $\states[\statemap(\ell)]\rbbisimd\statet[\ell]$ for all
        $\ell\in\N$.
      By the induction hypothesis $\statet[\ell]\sat\frm[\psi]$ for all
      $\ell\in\N$, and hence $\statet\sat\frm$.
    \end{enumerate}
    For the implication from right to left, it suffices by
    Corollary~\ref{cor:bbisimdeqwbbisimd} to prove that
    $\requivalidsym$ is symmetric and satisfies the
    conditions~(\ref{cnd:stepsim}) and (\ref{cnd:divsim}).

    That $\requivalidsym$ is symmetric is clear from its definition.

    To establish condition~(\ref{cnd:stepsim}) of
    Definition~\ref{def:bbisimd}, suppose that
    $\states\requivalid\statet$ and $\states\step{\acta}\state{s'}$,
    and define sets $T''$ and $T'$ as follows:
    \begin{gather*}
      \States[T'']=\{\state{t''}\in\States
            \mid\statet\ssteps{}\state{t''}\ \&\
            \states\not\requivalid\state{t''}
          \};\ \text{and}\\
      \States[T']=\{\state{t'}\in\States
            \mid\exists{\state{t''}\in\States}.\
                  \statet\ssteps{}\state{t''}\step{\opt{\acta}}\state{t'}\
                \&\
                  \state{s'}\not\requivalid\state{t'}
         \}
    \;.
    \end{gather*}
    For each $\state{t''}\in\States[T'']$ let
    $\frm[{\varphi_{\state{t''}}}]$
    be a formula such that
      $\states\sat\frm[{\varphi_{\state{t''}}}]$ and
      $\state{t''}\not\sat\frm[{\varphi_{\state{t''}}}]$,
    and let
      $\frm[\varphi]
         =\Meet\{\frm[\varphi_{\state{t''}}]\mid\state{t''}\in
         \States[T'']\}$.
    Similarly, for each $\state{t'}\in\States[T']$ let
    $\frm[{\psi_{\state{t'}}}]$ be a formula with
    $\state{s'}\sat\frm[{\psi_{\state{t'}}}]$ and
    $\state{t'}\not\sat\frm[{\psi_{\state{t'}}}]$,
    and let
      $\frm[\psi]
         =\Meet\{\frm[\psi_{\state{t'}}]
                   \mid \state{t'}\in\States[T']\}$.
    Note that $\states\sat\frm[\varphi]\runtil[\acta]\frm[\psi]$, and
    hence, since $\states\requivalid\statet$, also
      $\statet\sat\frm[\varphi]\runtil[\acta]\frm[\psi]$.
    So, there exist states $\state{t'}$ and $\state{t''}$ such that
      \plat{$\statet\ssteps\state{t''}\step{\opt{\acta}}\state{t'}$},
      $\state{t''}\sat\frm[\varphi]$ and $\state{t'}\sat\frm[\psi]$.
    From $\state{t''}\sat\frm[\varphi]$ it follows that
      $\state{t''}\not\in\States[T'']$,
    so $\states\requivalid\state{t''}$;
    from $\state{t'}\sat\frm[\psi]$ it follows that
      $\state{t'}\not\in\States[T']$,
    so $\state{s'}\requivalid\state{t'}$.
    Thereby, condition~(\ref{cnd:stepsim}) is established.

    To establish condition~(\ref{cnd:divsim}), suppose that
    $\states\requivalid\statet$ and that there exists an
    infinite sequence $(\states[k])_{k\in\N}$
    such that $\states=\states[0]$,
    $\states[k]\step{\sact}\states[k+1]$ and
    $\states[k]\requivalid\statet$ for all $k\in\N$.
    Define the set $\States[T^\infty]$ by
    \begin{equation*}
      \States[T^\infty]
        = \{\state{t'}\in\States\mid
              \statet\ssteps{}\state{t'}\ \&\
              \states\not\requivalid\state{t'}\}
    \;.
    \end{equation*}
    For each $t'\in\States[T^\infty]$ let $\frm[\varphi_{\state{t'}}]$
    be a formula such that $\states\sat\frm[\varphi_{\state{t'}}]$ and
      $\state{t'}\not\sat\frm[\varphi_{t'}]$,
    and let
      $\frm=
         \Meet\{\frm[\varphi_{\state{t'}}]
                  \mid\state{t'}\in\States[T^\infty]\}$.
    Since $\states[0]=\states\sat\frm$ and
      $\states[k]\requivalid\statet\requivalid\states$,
    it follows that $\states[k]\sat\frm$ for
    all $k\in\N$, and therefore $\states\sat\rDiv\frm$.
    Hence, $\statet\sat\rDiv\frm$, so there exists an infinite
    sequence $(\statet[\ell])_{\ell\in\N}$ such that
    $\statet\ssteps\statet[0]$,
    $\statet[\ell]\step{\sact}\statet[\ell+1]$ and
    $\statet[\ell]\sat\frm$ for all $\ell\in\N$.
    It follows that $\statet[\ell]\not\in\States[T^\infty]$,
    so $\states\requivalid\statet[\ell]$, for all $\ell\in\N$, and
    hence $\states[k]\requivalid\states\requivalid\statet[\ell]$ for all
    $k,\ell\in\N$.
    It follows, in particular, that
      \plat{$\statet\plussteps\statet[1]$}
      and
        $\states[k]\requivalid\statet[1]$ for some $k\in\N$.
    Thereby, also condition~(\ref{cnd:divsim}) is established.\vspace{1ex}
  \end{proof}

  We already mentioned in Section~\ref{sec:bb} the result of
  Laroussinie, Pinchinat \& Schnoebelen \cite{LPS95} that the modal
  logic with negation, binary conjunction and $\funtil$ and the logic
  with negation, binary conjunction and $\runtil$ are equally
  expressive. Below, we adapt their method to show that replacing
  $\runtil$ by $\funtil$ or $\until$ in the modal logic for $\bbd$
  proposed in \cite{Gla93} also yields an equally expressive logic.

  Henceforth we denote by $\UEDFrm$ the set of formulas generated by
  the grammar that is obtained when replacing $\frm\runtil\frm$ by
  $\frm\until\frm$ in the grammar for $\JBEDFrm$ (see
  \eqref{eq:grammar} in Section~\ref{sec:bb} and \eqref{eq:grammarext}
  at the beginning of this section). The central idea, from
  \cite{LPS95}, is that any formula in $\JBEDFrm$ can be written as a
  Boolean combination of formulas that propagate either upwards or
  downwards along a path of $\tau$-transitions.  A formula $\frm$ that
  propagates upwards, i.e., with the property that if $s\ssteps t$ and
  $s\sat\frm$, then also $t\sat\frm$, we shall call an \emph{upward
    formula}. A formula $\frm$ that propagates downwards, i.e., with
  the property that if $s\ssteps t$ and $t\sat\frm$, then also
  $s\sat\frm$, we shall call a \emph{downward formula}.

  \begin{lemma} \label{lem:separation}
    Every $\frm\in\JBEDFrm$ is equivalent with a
    formula of the form $\bigvee\Frm[\Phi]$, where each formula in
    $\Frm[\Phi]$ is a conjunction of an upward and a downward formula.
  \end{lemma}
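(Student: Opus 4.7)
The plan is to proceed by structural induction on $\varphi\in\JBEDFrm$. The crucial preliminary observation is that every formula whose outermost connective is a modality---i.e., every formula of the form $\psi\runtil\chi$ or $\rDiv\psi$---is automatically a \emph{downward} formula, regardless of its subformulas. Indeed, if $\states\ssteps\statet$ and $\statet\sat\psi\runtil\chi$ with witnesses $\statet\ssteps\state{s''}\step{\opt{\acta}}\state{s'}$, then $\states\ssteps\statet\ssteps\state{s''}$ provides exactly the same witnesses for $\states$; the argument for $\rDiv\psi$ is entirely analogous, since any witnessing divergence reachable from $\statet$ via $\ssteps$ is also reachable from $\states$.

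With this observation in hand, the two modal cases of the induction are immediate: if $\varphi$ has one of the two modalities as its outermost connective, then $\varphi$ itself is downward, so $\varphi\equi\true\meet\varphi$ is already a single-disjunct normal form, with $\true$ serving as the (trivially) upward conjunct. In particular, the induction hypothesis is not invoked at the modal step.

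The two Boolean cases are handled by the standard distributive argument, using that a conjunction of upward (respectively, downward) formulas is again upward (respectively, downward), and that negation swaps upward and downward. For $\varphi=\Meet\Psi$---including the base case $\true=\Meet\emptyset$, which yields $\true\equi\true\meet\true$ vacuously---the induction hypothesis rewrites each $\psi\in\Psi$ as $\bigvee_{i\in I_\psi}(\psi_i^u\meet\psi_i^d)$; distributing $\bigwedge$ over $\bigvee$ then yields a disjunction of big conjunctions, and each such conjunction can be regrouped---gathering all upward conjuncts on one side and all downward conjuncts on the other---into the desired shape. For $\varphi=\compl\psi$, apply the induction hypothesis to $\psi$ and push the negation inward by De~Morgan, obtaining $\bigwedge_i(\compl\psi_i^u\vee\compl\psi_i^d)$; since $\compl\psi_i^u$ is downward and $\compl\psi_i^d$ is upward, a further distribution of $\bigwedge$ over $\bigvee$ followed by the same regrouping yields the required normal form.

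The only technical subtlety I anticipate is that the infinite distributivity used in the Boolean cases produces a disjunction indexed by a product $\prod_{\psi\in\Psi}I_\psi$, whose cardinality may exceed that of the original formula; under the paper's cardinality convention this is at worst a matter of enlarging the bounding cardinal $\kappa$ appropriately and does not affect the logical content of the argument. I do not foresee any conceptual obstacle: the only conceptual ingredient is the downward-closure observation for the two modalities, after which the proof reduces to pure Boolean combinatorics.
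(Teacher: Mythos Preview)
Your proposal is correct and follows essentially the same approach as the paper's proof: the key observation is that the modal formulas $\psi\runtil\chi$ and $\rDiv\psi$ are downward (so their negations are upward), after which everything reduces to Boolean manipulation. The paper's proof is a two-sentence sketch that appeals to ``the standard laws of Boolean algebra''; you have simply spelled out the structural induction and the distributivity/De~Morgan steps that this appeal hides, and you have also flagged the cardinality blow-up, which the paper does not mention.
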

  \begin{proof}
    Note that $\frm[\psi]\runtil\frm[\chi]$ and $\rDiv\frm[\psi]$ are
    downward formulas and that the negation of a downward formula is
    an upward formula. Furthermore, a conjunction of upward formulas is
    again an upward formula and a conjunction of downward formulas is
    again a downward formula. It follows, by the standard laws of
    Boolean algebra, that the formula $\frm$ is equivalent to a formula
    of the desired shape.
\vspace{1ex}  \end{proof}

\noindent
  The proof that for every formula $\frm \in \UEDFrm$ there exists an
  equivalent formula $\frm[\varphi']\in\JBEDFrm$ proceeds by
  induction on the structure of $\frm$, and the only nontrivial case
  is when $\frm=\frm[\psi]\until\frm[\chi]$. According to the
  induction hypothesis, for $\frm[\psi]$ and $\frm[\chi]$ there exist
  equivalent formulas in $\JBEDFrm$, so, by Lemma~\ref{lem:separation},
  $\frm[\psi]$ is equivalent to a disjunction of conjunctions of
  upward and downward formulas. The proof in \cite{LPS95} then relies
  on these disjunctions being finite. To generalise it to infinite
  disjunctions, we shall use the following lemma.

  \begin{lemma} \label{lem:infdisjunctions}
    Let $\Frm$ be a set of formulas and let $\frm$ be a formula.
    Then
    \begin{alignat*}{3}
      & \left(\textstyle\bigvee\Frm\right) \until \frm
      &&\ ~\equi~\
        \textstyle\bigvee\left\{
                       \left(\textstyle\bigvee\frm[\Phi']\right)\until\frm
                     \mid \text{$\frm[\Phi']$ a finite subset of $\Frm$}
                   \right\}
    \enskip.
    \end{alignat*}
  \end{lemma}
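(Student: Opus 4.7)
The plan is to exploit the fact that any witnessing path for $(\bigvee\Phi)\until\varphi$ is finite, so only finitely many disjuncts are ever relevant to validating the $\until$ modality at a given state. In other words, this is a ``compactness'' observation about the length of witness paths, not about $\Phi$ itself, which may be an arbitrary class.

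First I would dispatch the right-to-left direction, which is essentially trivial: for any finite $\Frm[\Phi']\subseteq\Frm[\Phi]$, the formula $\bigvee\Frm[\Phi']$ implies $\bigvee\Frm[\Phi]$ pointwise (directly from the definition of validity for $\bigvee$), and hence $(\bigvee\Frm[\Phi'])\until\frm$ implies $(\bigvee\Frm[\Phi])\until\frm$ at every state. Taking the disjunction over all such $\Frm[\Phi']$ preserves this implication.

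For the left-to-right direction, suppose $\states\sat(\bigvee\Frm[\Phi])\until\frm$. By the definition of $\until$, there exist $n\geq 0$ and states $\states[0],\dots,\states[n],\states[n+1]$ with $\states=\states[0]\step{\sact}\cdots\step{\sact}\states[n]\step{\opt{\acta}}\states[n+1]$, $\states[n+1]\sat\frm$, and $\states[i]\sat\bigvee\Frm[\Phi]$ for each $i=0,\dots,n$. For each such $i$, pick a witness $\frm[\psi_i]\in\Frm[\Phi]$ with $\states[i]\sat\frm[\psi_i]$, and set $\Frm[\Phi']=\{\frm[\psi_0],\dots,\frm[\psi_n]\}$. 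This is a finite subset of $\Frm[\Phi]$, each $\states[i]$ satisfies $\bigvee\Frm[\Phi']$ by construction, and $\states[n+1]\sat\frm$, so the same witnessing path shows $\states\sat(\bigvee\Frm[\Phi'])\until\frm$, and hence $\states$ satisfies the right-hand side disjunction.

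There is no real obstacle here; the only thing to take care of is that the argument works uniformly for $\until[\sact]$ and $\until[\acta]$ with $\acta\neq\sact$, which is automatic since the difference between $\opt{\acta}$ and a forced $\acta$-step only affects the last transition in the witness and not the bookkeeping of which $\frm[\psi_i]$'s are used along the $\sact$-prefix.
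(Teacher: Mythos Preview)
Your proof is correct and follows essentially the same approach as the paper's own proof: both directions are handled exactly as you do, by extracting from a finite witnessing path one disjunct per state to form the finite $\Frm[\Phi']$, and conversely by the trivial monotonicity of $\bigvee$. The paper's argument is slightly more terse in the right-to-left direction (it simply unfolds the semantics again rather than invoking monotonicity abstractly), but the content is identical.
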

  \begin{proof}
  \begin{enumerate}
  \item[($\Rrightarrow$)]
    Suppose
      $\states\sat\left(\bigvee\Frm\right) \until[\acta] \frm$.
    Then there exist states $\states[0],\dots,\states[n],\states[n+1]$
    such that
      $\states=\states[0]
                 \step{\sact}\cdots\step{\sact}
               \states[n]\step{\opt{\acta}}\states[n+1]$,
    $\states[i]\sat\bigvee\Frm$ for all $i=0,\dots,n$ and
    $\states[n+1]\sat\frm$.
    Since $\states[i]\sat\bigvee\Frm$, we can associate with
    every $\states[i]$ ($i=0,\dots,n$) a formula
    $\frm[\varphi_i]\in\Frm$ such that $\states[i]\sat\frm[\varphi_i]$.
    Let
      $\Frm[\Phi']=\{\frm[\varphi_i]\mid i=0,\dots,n\}$;
    then $\Frm[\Phi']$ is a finite subset of $\Frm$ such that
      $\states[i]\sat\bigvee\Frm[\Phi']$ for every $i=0,\dots,n$.
    It follows that
      $\states\sat(\bigvee\frm[\Phi'])\until\frm$,
    and hence
      $\states\sat\bigvee\{
                      (\bigvee\Frm[\Phi'])\until\frm
                    \mid \text{$\Frm[\Phi']$ a finite subset of $\Frm$}
                  \}$.
  \item[($\Lleftarrow$)]
    If $\states\sat\bigvee\{
                      (\bigvee\Frm[\Phi'])\until\frm
                    \mid \text{$\Frm[\Phi']$ a finite subset of $\Frm$}
                   \}$,
    then $\states\sat(\bigvee\Frm[\Phi'])\until\frm$ for some finite subset
    $\Frm[\Phi']$ of $\Frm$. So there exist states
    $\states[0],\dots,\states[n],\states[n+1]$ such that
      $\states=\states[0]
                 \step{\sact}\cdots\step{\sact}
               \states[n]\step{\opt{\acta}}\states[n+1]$,
    $\states[i]\sat\bigvee\Frm[\Phi']$ for all $i=0,\dots,n$ and
    $\states[n+1]\sat\frm$. Since $\states[i]\sat\bigvee\Frm[\Phi']$ implies
    $\states[i]\sat\bigvee\Frm$ for all $i=0,\dots,n$, it follows that
      $\states\sat(\bigvee\Frm)\until\frm$.
  \qed
  \end{enumerate}
  \end{proof}

\noindent
  We now adapt the method in \cite{LPS95} and show that replacing
  $\runtil$ by $\funtil$ or $\until$ in the modal logic for $\bbd$
  proposed in \cite{Gla93} yields an equally expressive logic.

  \begin{theorem}
    For every formula $\frm\in\UEDFrm$ there exists an
    equivalent formula
    $\frm[\varphi']\in\JBEDFrm$.
  \end{theorem}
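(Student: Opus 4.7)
The plan is a structural induction on $\frm \in \UEDFrm$. For the cases $\frm = \compl\frm[\psi]$, $\frm = \Meet\Frm[\Psi]$, and $\frm = \rDiv\frm[\psi]$, the induction hypothesis directly provides equivalent $\JBEDFrm$-formulas for the subformulas, and the outer connective is already part of the grammar for $\JBEDFrm$; these cases are immediate. The only nontrivial case is $\frm = \frm[\psi]\until\frm[\chi]$.

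For the until case, apply the induction hypothesis to $\frm[\psi]$ and $\frm[\chi]$ to obtain equivalent formulas $\frm[\psi''],\frm[\chi''] \in \JBEDFrm$. Next use Lemma~\ref{lem:separation} to rewrite $\frm[\psi'']$ as a disjunction $\bigvee\Frm[\Phi]$ in which every element is a conjunction $\frm[\alpha]\wedge\frm[\beta]$ of an upward formula $\frm[\alpha]$ and a downward formula $\frm[\beta]$. By Lemma~\ref{lem:infdisjunctions}, the formula $(\bigvee\Frm[\Phi])\until\frm[\chi'']$ is equivalent to the disjunction, over all finite subsets $\Frm[\Phi'] \subseteq \Frm[\Phi]$, of $(\bigvee\Frm[\Phi'])\until\frm[\chi'']$. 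It therefore suffices to show that every such \emph{finite} disjunction is equivalent to a formula in $\JBEDFrm$.

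For the single-conjunct subcase I claim $(\frm[\alpha]\wedge\frm[\beta])\until\frm[\chi''] \equi \frm[\alpha]\wedge(\frm[\beta]\runtil\frm[\chi''])$: the $\Rightarrow$ direction is immediate, and for $\Leftarrow$, given $\states\sat\frm[\alpha]$ and a witness $\states\ssteps\state{s''}\step{\opt\acta}\state{s'}$ with $\state{s''}\sat\frm[\beta]$ and $\state{s'}\sat\frm[\chi'']$, every intermediate state automatically satisfies $\frm[\alpha]$ (by upward propagation from $\states$) and $\frm[\beta]$ (by downward propagation from $\state{s''}$). For the general finite disjunction $(\bigvee_{j=1}^{k}(\frm[\alpha_j]\wedge\frm[\beta_j]))\until\frm[\chi'']$, I would adapt the construction of~\cite{LPS95}: the structural key is that along any witnessing path the set of positions at which a given conjunct $\frm[\alpha_j]\wedge\frm[\beta_j]$ holds is an interval $[t_j,u_j]$ of positions (upward formulas become true and stay true; downward formulas stay true and then become false), and these intervals must cover the witness. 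Since $k$ is finite, a case analysis on the finitely many combinatorial patterns of interval overlap yields an explicit equivalent formula built from $\runtil$, negation, and conjunction.

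The main obstacle is this last step: producing an explicit $\JBEDFrm$-formula in the finite-disjunction case. The separation of formulas into Boolean combinations of upward and downward parts and the reduction from infinite to finite disjunction are provided by Lemmas~\ref{lem:separation} and~\ref{lem:infdisjunctions}; the single-conjunct equivalence is an easy direct check; the outer induction is routine. What remains to be carefully written out is the adaptation of the combinatorial covering argument of \cite{LPS95} to the just-before modality $\runtil$ of our logic.
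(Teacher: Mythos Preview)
Your outline matches the paper's proof essentially verbatim through the reduction to finite subsets $\Frm[\Psi']$. The gap you flag---producing an explicit $\JBEDFrm$-formula for $\left(\bigvee\Frm[\Psi']\right)\until\frm[\chi']$ when $\Frm[\Psi']=\{\frm[\psi_i^{\mathrm u}]\meet\frm[\psi_i^{\mathrm d}]\mid i=1,\dots,n\}$---is precisely where the paper's argument has content, and the paper does not proceed by a direct case analysis on interval overlap patterns. Instead it runs an inner induction on $n=|\Frm[\Psi']|$: for $n=0$ the formula is $\false$, and for $n>0$ one uses the inductive hypothesis to obtain, for each $i$, a formula $\frm[\varphi_i']\in\JBEDFrm$ equivalent to $\left(\bigvee(\Frm[\Psi']\setminus\{\frm[\psi_i^{\mathrm u}]\meet\frm[\psi_i^{\mathrm d}]\})\right)\until\frm[\chi']$, and then verifies
\[
\left(\bigvee\Frm[\Psi']\right)\until\frm[\chi']
\ \equi\
\bigvee_{i=1}^n\left(\frm[\psi_i^{\mathrm u}]\meet\left(\frm[\psi_i^{\mathrm d}]\runtil\frm[\chi']\ \vee\ \frm[\psi_i^{\mathrm d}]\runtil[\sact]\frm[\varphi_i']\right)\right).
\]
Your interval intuition is exactly what justifies this equivalence: the $i$th disjunct handles the case where the $i$th conjunct holds at position $0$ of the witnessing path (so $\frm[\psi_i^{\mathrm u}]$ holds there and hence, being upward, along the whole path), while $\frm[\psi_i^{\mathrm d}]$ holds along an initial segment, at the end of which either one performs the $\acta$-step into $\frm[\chi']$, or one hands off via a $\sact$-step to a strictly shorter $\until$ already translated as $\frm[\varphi_i']$. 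Your single-conjunct formula is the $n=1$ instance of this (since then $\frm[\varphi_1']\equi\false$, and the second disjunct vanishes). So nothing in your plan is wrong; what you were missing is this clean recursive decomposition, which replaces the open-ended ``case analysis on finitely many interval patterns'' by a uniform one-step reduction in $n$.
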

  \begin{proof}
    The proof is by structural induction on $\frm$; the only
    nontrivial case is when $\frm=\frm[\psi]\until\frm[\chi]$.
    By the induction hypothesis there exist formulas
    $\frm[\psi'],\frm[\chi']\in\JBEDFrm$ such that
    $\frm[\psi]\equi\frm[\psi']$
    and
    $\frm[\chi]\equi\frm[\chi']$.
    By Lemma~\ref{lem:separation}, $\frm[\psi']\equi\bigvee\Frm[\Psi]$,
    where each formula in $\Frm[\Psi]$ is a conjunction of an
    upward and a downward formula. Hence, by the evident congruence
    property of $\equi$ and Lemma~\ref{lem:infdisjunctions},
    \begin{alignat*}{3}
      & \frm
      &&\ ~\equi~\
        \textstyle\bigvee\left\{
                       \left(\textstyle\bigvee\Psi'\right)\until\chi'
                     \mid \text{$\Frm[\Psi']$ a finite subset of
                       $\Frm[\Psi]$}
                   \right\}
    \enskip.
    \end{alignat*}
    Clearly, it now suffices to establish that
      $(\bigvee\Frm[\Psi'])\until\frm[\chi']$
    is equivalent to a formula in $\JBEDFrm$, for all finite subsets
    $\Frm[\Psi']$ of $\Frm[\Psi]$.
    Recall that $\Frm[\Psi]$ consists of conjunctions of an upward and
    a downward formula, so we can assume that
      $\Frm[\Psi']=
         \{\frm[\psi_i^\mathrm{u}]\meet\frm[\psi_i^\mathrm{d}]\mid
             i=1,\dots,n\}$;
    we proceed by induction on the cardinality of $\Frm[\Psi']$.

    \noindent
    If $|\Frm[\Psi']|=0$, then
    \begin{alignat*}{3}
      & \left(\bigvee\Frm[\Psi']\right)\until\frm[\chi']
      &&\ ~\equi~\
        \false
    \enskip,
    \end{alignat*}
    and $\false\in\JBEDFrm$.

    \noindent
    Suppose $|\Frm[\Psi']|>0$.
    By the induction hypothesis there exists, for every $i=1,\dots,n$,
    a formula $\frm[\varphi_i']\in \JBEDFrm$ such that
    \begin{alignat*}{3}
      &  \left(\bigvee\Frm[\Psi']-
                          \{\frm[\psi_i^\mathrm{u}]
                              \meet
                            \frm[\psi_i^\mathrm{d}]
                          \}
          \right)\until\frm[\chi']
      &&\ ~\equi~\
        \frm[\varphi_i']
    \enskip.
    \end{alignat*}
    Then, it is easy to verify that
    \begin{alignat*}{3}
      & \left(\bigvee\Frm[\Psi']\right)\until\frm[\chi']
      &&\ ~\equi~\
        \bigvee_{i=1}^n\left(
                     \frm[\psi_i^\mathrm{u}]
                       \meet
                     \left(\frm[\psi_i^\mathrm{d}]\runtil\frm[\chi']
                             \vee
                             \frm[\psi_i^\mathrm{d}]\runtil[\sact]
                               \frm[\varphi_i']
                     \right)
                    \right)
    \enskip,
    \end{alignat*}
    and the right-hand side formula is in $\JBEDFrm$.
    Some intuition for this last step is offered in \cite{LPS95}.
\vspace{1ex}  \end{proof}

\noindent
  In the same vain, there is also an obvious strengthening of the
  divergence modality $\rDiv$. Let $\Div$ be the unary divergence
  modality with the following definition:
  \begin{enumerate}
  \item[(iv$'$)] $\states\sat\Div\frm$
          \IFF{}
        there exists an infinite sequence $(\states[k])_{k\in\omega}$
        of states such that $\states=\states[0]$,
          $\states[k]\step{\sact}\states[k+1]$ and
        $\states[k]\sat\frm$ for all $k\in\omega$.
  \end{enumerate}
  We denote by $\JBSDFrm$ the set of formulas generated by the grammar
  in \eqref{eq:grammar} with $\rDiv\frm$ replaced by $\Div\frm$.

\begin{figure}[htb]
\begin{center}\input{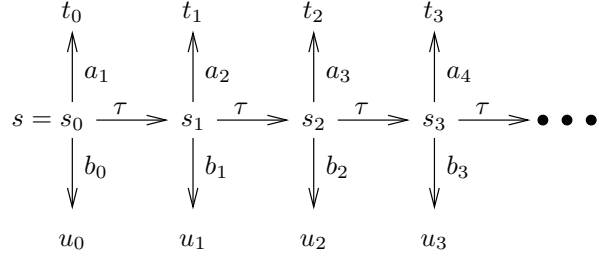}\end{center}
\caption{A divergence.}
\label{fig:divform}
\end{figure}

  Note that the modality $\rDiv$ can be expressed in terms of
  $\Div$:
  \begin{alignat*}{3}
    & \rDiv\frm
    &&\ ~\equi~\
      \top\runtil[\sact]\Div\frm\enskip.
  \end{alignat*}

  A crucial step in our adaptation of the method of Laroussinie,
  Pinchinat \& Schnoebelen above consisted of showing that infinite
  disjunctions in the left argument of $\until$ can be avoided. If
  infinite disjunctions could also be avoided as an argument of
  $\Div$, then a further adaptation of the method would be possible,
  showing that replacing $\rDiv$ by $\Div$ in the modal logic for
  $\bbd$ would yield a logic with equal expressivity. However, the
  following example suggests that infinite disjunctions under $\Div$
  cannot always be avoided.

  \begin{example} \label{exa:disjunctiondivergence}
    Let $\acta[1],\acta[2],\acta[3],\dots$ and
    $\actb[0],\actb[1],\actb[2],\dots$ be infinite sequences of
    distinct actions and consider the formula
    \begin{equation*}
    \frm=
      \Div\left(\bigvee_{i=0}^\infty\left(
                  \compl\left(\true\until[{\acta[i]}]\true\right)
                    \meet
                  \left(\true\until[{\actb[i]}]\true\right)
                \right)
      \right)
    \enskip.
    \end{equation*}
    The formula $\frm$ holds in a state iff there exists an infinite
    $\sact$-path such that in every state there is an
    $i\geq 0$ such that the action $\actb[i]$ is still possible, whereas the
    action $\acta[i]$ is not. Note that $\frm$ holds in the state
    $\states$ of the transition system in Figure~\ref{fig:divform};
    each of the disjuncts
      $\compl\left(\true\until[{\acta[i]}]\true\right)
         \meet
       \left(\true\until[{\actb[i]}]\true\right)$
    holds in precisely one state.
\vspace{1ex}  \end{example}

\noindent
  We conjecture that the formula of
  Example~\ref{exa:disjunctiondivergence} is not equivalent to a
  formula in \plat{$\JBEDFrm$}, and that, hence, replacing $\rDiv$ by $\Div$
  in the modal logic for $\bbd$ yields a strictly more expressive
  logic. We conclude the paper with a proof that the
  equivalence
    \plat{$\equivalidsym\subseteq\States\times\States$}
  induced on states by validity of formulas in \plat{$\JBSDFrm$}, defined by
  \begin{equation*}
    \states\equivalid\statet\quad
  \text{iff}\quad
    \all{\frm\in\JBSDFrm}.\
      {\states\sat\frm}\Leftrightarrow{\statet\sat\frm}
  \enskip,
  \end{equation*}
  nevertheless also coincides with $\rbbisimd$.
  \begin{theorem} \label{theo:modchar}
    For all states $\states$ and $\statet$:
      $\states\rbbisimd\statet$
    \IFF{}
      $\states\equivalid\statet$.
  \end{theorem}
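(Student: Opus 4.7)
The plan is to prove the two inclusions separately, reusing Theorem~\ref{theo:rmodchar} to dispatch the harder direction without repeating its argument.

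For $\rbbisimdsym \subseteq \equivalidsym$ I would perform a structural induction on $\frm \in \JBSDFrm$, showing that $\states\rbbisimd\statet$ and $\states\sat\frm$ imply $\statet\sat\frm$; the converse then follows from the symmetry of $\rbbisimdsym$. The cases $\frm = \compl\frm[\psi]$, $\frm = \Meet\Frm[\Psi]$ and $\frm = \frm[\psi]\runtil[\acta]\frm[\chi]$ are literal copies of the corresponding cases in the proof of Theorem~\ref{theo:rmodchar}. The only genuinely new case is $\frm = \Div\frm[\psi]$, and it is in fact cleaner than the $\rDiv$-case there because the witnessing divergence starts at $\states$ itself: given $(\states[k])_{k\in\N}$ with $\states=\states[0]$, $\states[k]\step{\sact}\states[k+1]$ and $\states[k]\sat\frm[\psi]$ for all $k$, I would apply condition (\ref{cnd:newdivsim}) --- which $\rbbisimdsym$ satisfies by Corollary~\ref{cor:bbisimdbrel} --- directly to the pair $\states\rbbisimd\statet$, obtaining an infinite sequence $(\statet[\ell])_{\ell\in\N}$ with $\statet=\statet[0]$, $\statet[\ell]\step{\sact}\statet[\ell+1]$, and a map $\statemap:\N\to\N$ with $\states[\statemap(\ell)]\rbbisimd\statet[\ell]$ for all $\ell$. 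The induction hypothesis then gives $\statet[\ell]\sat\frm[\psi]$ for each $\ell$, so $\statet\sat\Div\frm[\psi]$.

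For the converse inclusion $\equivalidsym \subseteq \rbbisimdsym$ I would avoid a second long induction and instead exploit the equivalence $\rDiv\frm \equi \true\runtil[\sact]\Div\frm$ displayed just before this theorem. Using it, define a translation $(\cdot)^\ast : \JBEDFrm \to \JBSDFrm$ by structural recursion, commuting with $\compl$, $\Meet$ and $\runtil$ and sending $\rDiv\frm[\psi]$ to $\true\runtil[\sact]\Div\frm[\psi]^\ast$. A routine induction on formula structure shows $\states\sat\frm \iff \states\sat\frm^\ast$ for every $\states$ and every $\frm \in \JBEDFrm$, so $\states\equivalid\statet$ implies $\states\requivalid\statet$; Theorem~\ref{theo:rmodchar} then supplies $\states\rbbisimd\statet$.

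The only point at which the argument could plausibly get stuck is the $\Div$-case of the first induction, where an arbitrary infinite $\sact$-sequence from $\states$ must be simulated by an infinite $\sact$-sequence from $\statet$ with a witness related at every step. This is exactly the content of the compositional condition (\ref{cnd:newdivsim}), and Corollary~\ref{cor:bbisimdbrel} guarantees that $\rbbisimdsym$ satisfies it; consequently no additional machinery is required beyond what has already been developed in Sections~\ref{sec:relchar} and~\ref{sec:eqst}.
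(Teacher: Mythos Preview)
Your proposal is correct and follows essentially the same route as the paper: the paper too handles the forward direction by structural induction, deferring the shared connectives to Theorem~\ref{theo:rmodchar} and treating $\Div\frm[\psi]$ via Corollary~\ref{cor:bbisimdbrel} and condition~(\ref{cnd:newdivsim}), and it handles the converse by observing that every $\JBEDFrm$-formula is equivalent to a $\JBSDFrm$-formula (using $\rDiv\frm \equi \true\runtil[\sact]\Div\frm$), whence $\equivalidsym\subseteq\requivalidsym$ and Theorem~\ref{theo:rmodchar} closes the argument. Your explicit translation $(\cdot)^\ast$ just spells out what the paper leaves implicit.
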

  \begin{proof}
    For the implication from left to right, we prove by structural
    induction on $\frm$ that if $\states\rbbisimd\statet$ and
    $\states\sat\frm$, then $\statet\sat\frm$. We only treat
    the case $\frm=\Div\frm[\psi]$, for the cases
    $\frm=\neg\psi$, $\frm=\bigwedge\Psi$ and 
    $\frm=\frm[\psi]\runtil\frm[\chi]$ are already treated in the proof
    of Theorem~\ref{theo:rmodchar}.
    So, suppose $\frm=\Div\frm[\psi]$ and $\states\sat\frm$. Then
    there exists an infinite sequence $(\states[k])_{k\in\omega}$ of
    states such that $\states=\states[0]$,
    $\states[k]\step{\sact}\states[k+1]$ and $\states[k]\sat\frm[\psi]$ for
    all $k\in\omega$. From Corollary~\ref{cor:bbisimdbrel} it follows
    that $\rbbisimdsym$ satisfies (\ref{cnd:newdivsim}), so there
    exist an infinite sequence of states $(\statet[\ell])_{\ell\in\N}$ and a
    mapping $\statemap:\N\rightarrow\N$ such that
      $\statet=\statet[0]$,
      $\statet[\ell]\step{\sact}\statet[\ell+1]$
    and
      $\states[\statemap(\ell)]\rbbisimd\statet[\ell]$ for all $\ell\in\N$.
    By the induction hypothesis $\statet[\ell]\sat\frm[\psi]$ for all
    $\ell\in\N$, and hence $\statet\sat\frm$.

    To establish the implication from right to left, note that if
    $\states\equivalid\statet$, then, since every formula in
    \plat{$\JBEDFrm$} is equivalent to a formula in \plat{$\JBSDFrm$}, also
    $\states\requivalid\statet$, so by Theorem~\ref{theo:rmodchar} it
    follows that $\states\rbbisimd\statet$.
\end{proof}

\paragraph*{Comment on Definition~\ref{expressiveness}}
If in Definition~\ref{expressiveness} we had used a notion of
equivalence between modal formulas $\varphi$ and $\psi$ that merely
requires that $s\sat\varphi \Leftrightarrow s\sat\psi$ for all states
$s$ in the presupposed labelled transition system, rather than
quantifying over all labelled transition systems, the resulting
concept of equally expressive logics would be much weaker, and the
logics \plat{$\JBEDFrm$} and \plat{$\JBSDFrm$} would be equally expressive.

In general, let $\sim$ be an equivalence on the set of states
$S$, and consider two logics $\mathcal{L}_1$ and $\mathcal{L}_2$ that
both have negation and arbitrary infinite conjunction, and both
characterise $\sim$.  For every pair of states $s,t \in S$ with
$s\not\sim t$ take a formula $\varphi_{s,t}$ from $\mathcal{L}_1$
such that $s\sat \varphi_{s,t}$ but $t\not\sat\varphi_{s,t}$. Then
$\chi_s = \bigwedge\{ \varphi_{s,t}\mid t \not\sim s \}$ is called a
\emph{characteristic formula} of $s$: one has $t\sat \chi_s$ iff
$t\sim s$. Now let $\psi$ be a formula from $\mathcal{L}_2$. Then
$\bigvee\{\chi_s\mid s\sat\psi\}$ is equivalent to $\psi$, in the sense
that $t\sat\psi \Leftrightarrow
t\sat\bigvee\{\chi_s\mid s\sat\psi\}$ for all states $t\in S$.
This proves that the two logics are equally expressive.

Similar reasoning using the notion of equivalence from
Definition~\ref{expressiveness} would break down, because one cannot
take conjunctions of a proper class of formula.

\bibliographystyle{plain}
\def\sortunder#1{}

\end{document}